   \titlespacing{\section}{0pt}{2ex}{1ex}
    \titlespacing{\subsection}{0pt}{1ex}{0ex}
    \titlespacing{\subsubsection}{0pt}{0.5ex}{0ex}
\definecolor{shade}{HTML}{F8F4FF} %Magnolia
\theoremstyle{plain} \newtheorem{theorem}{Theorem}  \newtheorem{lemma}{Lemma} \newtheorem{corollary}{Corollary}
\theoremstyle{definition} \newtheorem{definition}{Definition}  \newtheorem{example}{Example} 
\theoremstyle{remark}   
\newcommand{\R}{\mathbb{R}}
\newcommand{\M}{\mathcal{M}}
\newcommand{\B}{\mathbb{B}}
\newcommand{\Z}{\mathbb{Z}}
\newcommand{\F}{\mathbb{F}}
\newcommand{\U}{\mathcal{U}}
\newcommand{\E}{\mathbb{E}}
\begin{document}

\newif\ifblinded
%\blindedtrue

\title{Semiparametric discrete data regression with \\ Monte Carlo inference and prediction} 
\date{}

\ifblinded
\author{}
\else

\author{Daniel R. Kowal\thanks{Dobelman Family Assistant Professor, Department of Statistics, Rice University (\href{mailto:Daniel.Kowal@rice.edu}{daniel.kowal@rice.edu}). This material is based upon work supported by the National Science Foundation (SES-2214726).} \ and Bohan Wu\thanks{PhD student, Department of Statistics, Columbia University
(\href{mailto:bw2766@columbia.edu}{bw2766@columbia.edu})}}

\fi

\maketitle

\begin{abstract}% 
Discrete data are abundant and often arise as counts or rounded data. These data commonly exhibit complex distributional features such as zero-inflation, over-/under-dispersion, boundedness, and heaping, which render many parametric models inadequate. Yet even for parametric regression models, approximations such as MCMC typically are needed for posterior inference. This paper introduces a Bayesian modeling and algorithmic framework that enables \emph{semiparametric} regression analysis for discrete data with \emph{Monte Carlo} (not MCMC) sampling. The proposed approach pairs a nonparametric marginal model with a latent linear regression model to encourage both flexibility and interpretability, and delivers posterior consistency even under model misspecification. For a parametric or large-sample approximation of this model, we identify a class of conjugate priors with (pseudo) closed-form posteriors. All posterior and predictive distributions are available analytically or via direct Monte Carlo sampling. These tools are broadly useful for linear regression, nonlinear models via basis expansions, and variable selection with discrete data. Simulation studies demonstrate significant advantages in computing, prediction, estimation, and selection relative to existing alternatives. This novel approach is applied successfully to self-reported mental health data that exhibit zero-inflation, overdispersion, boundedness, and heaping. 
\end{abstract}

 {\small {\bf KEYWORDS:} Bayesian statistics;  count data; generalized linear models; integer data; prediction. }

%\begin{keywords}
% Bayesian statistics;  count data; generalized linear models; integer data; prediction. 
%\end{keywords}

\section{Introduction}
%\subsection{Setting}
Discrete data commonly arise either as natural counts or via data coarsening such as rounding. Bayesian regression models must account for this discreteness in order to provide a coherent data-generating process. However, discrete data often present challenging distributional features, including zero-inflation, over-/under-dispersion, boundedness or censoring, and heaping (Figure~\ref{fig:dmhng}). Customized Bayesian models that target these features are too complex for analytical posterior and predictive inference or Monte Carlo sampling, and can incur significant computational burdens from the requisite approximation algorithms such as Markov chain Monte Carlo (MCMC). The contribution of this paper is to develop a new Bayesian semiparametric discrete data regression model that features (i) substantial modeling flexibility, (ii) 
efficient Monte Carlo (not MCMC) computing for posterior and predictive inference, (iii) broad theoretical guarantees, and (iv) for important special cases, conjugate priors with (pseudo) closed-form posteriors. These features advance the state-of-the-art for Bayesian modeling and computing for regression analysis of integer-valued data.

%provide conjugate priors, (pseudo) closed-form posteriors, theoretical consistency guarantees, and efficient Monte Carlo posterior and predictive simulation algorithms for a broad class of semiparametric discrete data regression models. 

\begin{figure}[h]
\centering
\includegraphics[width=.6\textwidth]{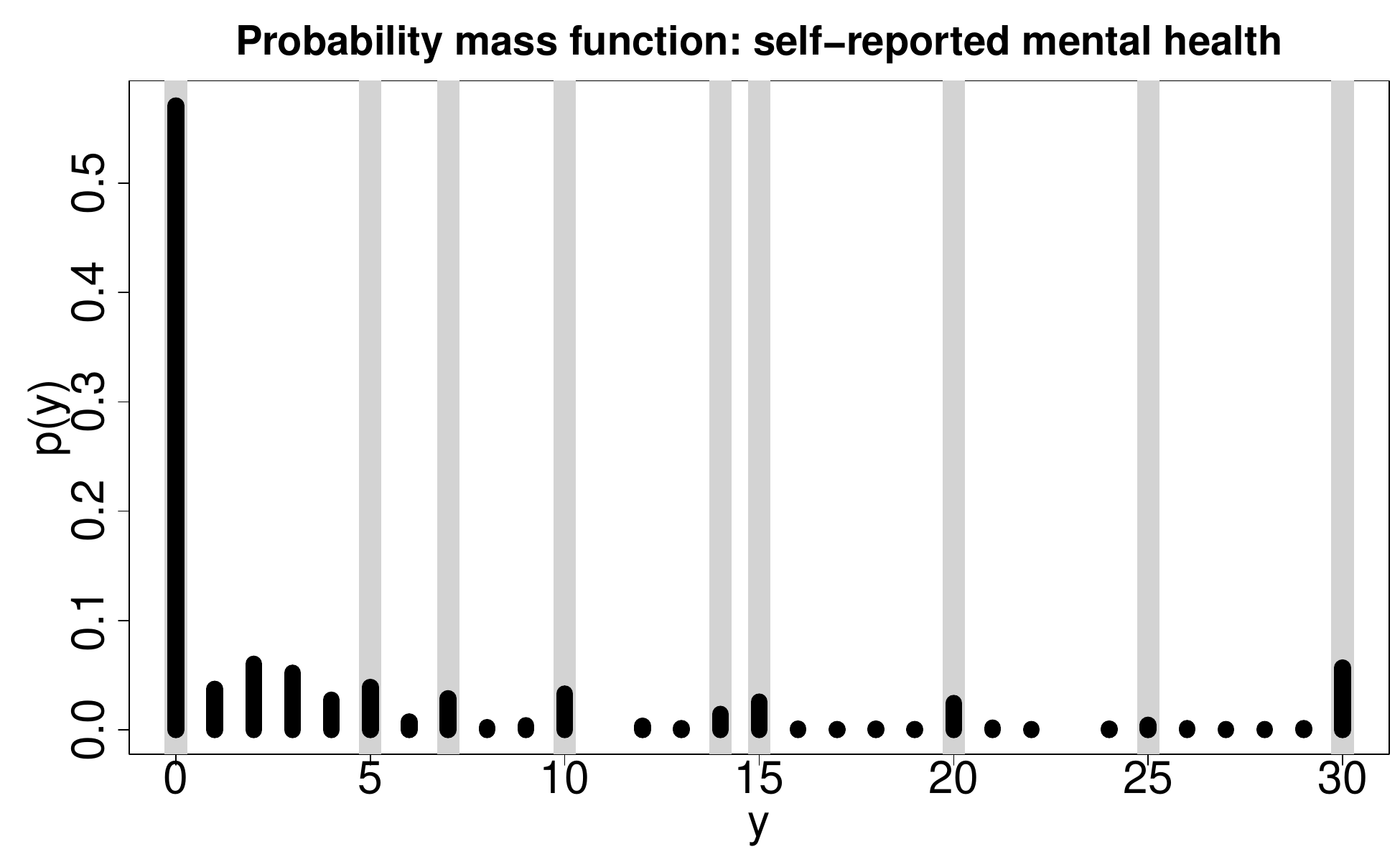}
\caption{\small Empirical probability mass function for self-reported  mental health data (Section~\ref{app}): ``For how many days during the past 30 days was your mental health not good?" These integer-valued data exhibit zero-inflation, overdispersion, boundedness   $(y_{max} = 30)$,  and heaping (gray lines), which present significant challenges for regression modeling (see also Table~\ref{tab:data}).
\label{fig:dmhng}}
\end{figure}

Consider paired data $\{(x_i, y_i)\}_{i=1}^n$ with covariates $x_i \in \mathbb{R}^p$ and integer-valued responses $y_i \in \mathbb{Z}$.  We study the following class of Bayesian discrete 
 data regression models:
\begin{align}
    \label{tr}
    y_i &= h\circ g^{-1}(z_i)  \\
    %\label{lm}
    %z_i &= x_i'\theta + \epsilon_i, \quad \epsilon_i \stackrel{iid}{\sim} N(0,\sigma^2)
    \label{mod}
    z_i &= \mu_\theta(x_i)  + \epsilon_i
\end{align}
where $h:\mathbb{R}\to \mathbb{Z}$ is a known \emph{rounding operator}, $g:\mathbb{R}\to\mathbb{R}$ is a (monotone increasing) \emph{transformation},  $\mu_\theta$ is a regression function with unknown parameters $\theta$, and $\epsilon_i$ are independent and identically distributed (iid) errors. Informally,  \eqref{tr}--\eqref{mod} is designed to adapt a {continuous data} model for coherent modeling of {discrete data}.  We focus primarily on the Gaussian linear model 
\begin{equation}
\label{lm}
z_i = x_i'\theta + \epsilon_i, \quad \epsilon_i \stackrel{iid}{\sim} N(0,\sigma^2)
\end{equation}
with generalizations for nonlinear versions (Section~\ref{sec-predict}) and variable selection (Section~\ref{sec-sp}). However, our theoretical results apply more broadly for generic continuous data models (Section~\ref{sec-bnp}), such as Gaussian process  regression \citep{rasmussen2006gauss}, BART \citep{chipman2010bart}, and quantile regression \citep{Yu2001}, among many others.

We emphasize that  \eqref{tr}--\eqref{mod}  is \emph{not} equivalent to the common practice of transforming count data (e.g., logarithmically) prior to modeling, which fails to produce a  valid discrete data-generating  process, nor is  it the same as  rounding the predictions from a continuous  data model, which introduces an incongruity between the inferential model and the prediction model. Instead,  \eqref{tr}--\eqref{mod} simultaneously  (i) provides a coherent discrete data-generating process, (ii)  incorporates a transformation for greater representational ability, and (iii) builds upon interpretable regression models.

It is useful to decouple the (known) rounding operator $h$, which defines the support of $y$, and the (typically unknown) transformation $g$, which is a smooth and monotone increasing function (see Figure~\ref{fig:scheme}). First, $h$  is defined via a partition $\{\mathcal{A}_j\}$ of $\mathbb{Z}$ such that $y = j$  whenever $z \in g(\mathcal{A}_j)$. Specifically, we  use intervals of the form $\mathcal{A}_j = [a_j, a_{j+1})$, which implies  that
\begin{equation}
\label{def-h}
y =j  \iff z \in [g(a_j), g(a_{j+1})).
\end{equation}
The rounding operator determines the partition $\{a_j\}$, which is fixed based on the known support of the data. There are several important examples:
%The simplest case occurs for rounded data.
\begin{example}  
    \label{ex-round}
    Suppose we observe \emph{rounded data} instead of continuous data, so $y \in \mathbb{Z}$ without loss of generality. %. Let the rounded digit be the first decimal place, which can be satisfied by pre-multiplying the data by the appropriate power of 10 such that $y \in \mathbb{Z}$. 
    When $g(t) = t$ is the identity, $z_i$ represents the unobserved continuous data version of $y_i$. 
    In this case, $y_i = j$ whenever $z_i \in [j-0.5, j+0.5)$, so $h(t) = \lfloor t + 0.5\rfloor$, $\lfloor \cdot \rfloor$ is the floor function, and $a_j = j -0.5$.  This example is revisited in Section~\ref{sims-sparse}.
\end{example}  

%The rounding operator is also useful for count data, which may be bounded by some $y_{max}$ (see Section~\ref{app}).
\begin{example}\label{ex-round-count}  
    Suppose $y \in \{0,\ldots,y_{max}\}$ is count-valued and bounded. The rounding operator  %$h(t) = 0$ for $t<1$, $h(t) = \lfloor t \rfloor$ for  $t \in [1, y_{max}]$, and $h(t) = y_{max}$ for $t > y_{max}$
    \[
    h(t) = 
    \begin{cases}
        0 & t < 1 \\
        \lfloor t \rfloor & t \in [1, y_{max}] \\
        y_{max} & t > y_{max}
   \end{cases}
   \quad \iff \quad 
       a_j = 
    \begin{cases}
        -\infty & j = 0 \\
        j & j=1,\ldots,y_{max}\\
      \infty & j =  y_{max} +  1
   \end{cases}
    \]
    ensures correct support. %Equivalently, the partition is   $a_0 = -\infty$, $a_j = j$ for  $j=1,\ldots, y_{max}$, and $a_{y_{max} + 1} = \infty$.
    %$\mathcal{A}_0 = (-\infty, 1)$, $\mathcal{A}_j = [j, j+1)$ for $j=1,\ldots, y_{max}-1$, and $\mathcal{A}_{y_{max}} = [y_{max}, \infty)$. 
    The unbounded case simply sets $y_{max} =\infty$, while $y_{max} = 1$ reproduces probit regression for binary data. 
\end{example}   

%This example implicitly shows how model \eqref{tr}--\eqref{mod} is capable of modeling zero-inflation (or more generally, endpoint inflation):
 % $p(y = 0\mid \theta, x) = %p\{z \in g(\mathcal{A}_0) \mid \theta, x\} =
%p\{z \in [g(a_0), g(a_1)) \mid \theta, x\} $,  which depends on $g$, $\mu_\theta(x)$, and the distribution of $\epsilon$. Under the linear model \eqref{lm} with $h$  from Example~\ref{ex-round-count}, we simplify $p(y = 0\mid \theta, x) = p\{z < g(a_1) \mid \theta, x\} = \Phi[\{g(1) - x'\theta\}/\sigma]$ where $\Phi$ is the standard Gaussian cumulative distribution function (CDF). Hence, zero-inflation may be captured by either the transformation $g$ for \emph{all} $x$ or the  linear model \eqref{lm} for \emph{specific} $x$. Similarly arguments apply for $p(y = y_{max} \mid \theta, x)$. 

Model \eqref{tr}--\eqref{mod} is also suitable for censored data, and in fact requires no distinction between boundedness and censoring:
\begin{example}\label{ex-censored-count}
Suppose only $y_i^c = \min\{y_i, C\}$ is observed and let  $c_i = \mathbb{I}\{y_i \ge C\}$ be a known (right) censoring indicator.  The likelihood  is
$p(y \mid \theta, x) = \prod_{i: c_i = 0} p(y_i \mid \theta, x_i) \prod_{i: c_i = 1} p(y_i \ge C \mid \theta, x_i)
=  \prod_{i: c_i = 0} p\{z_i \in [g(a_{y_i}), g(a_{y_i + 1})) \mid \theta, x_i\} \prod_{i: c_i = 1} p\{z_i \ge  g(a_C) \mid \theta, x_i\}$, which is \emph{identical} to the likelihood for $\{y_i^c\}_{i=1}^n$ subject to $y_{max} = C$ %with $a_{y_{max} + 1} = \infty$ 
 in Example~\ref{ex-round-count}. %Similar results are available for left censoring. % and $\lim_{t \rightarrow\infty}g(t) = \infty$. 
%Thus, modeling under \eqref{tr}--\eqref{mod} requires no distinction between boundedness and censoring.
\end{example}

The rounding operator provides the correct discrete support for the data-generating process, including (prior and posterior) predictive distributions. Crucially, the specification of the rounding operator does not alter the proposed modeling and computing strategies, which apply for rounded data, counts, and (discrete) bounded data. %Further, Example~\ref{ex-round-count} generates a valid likelihood for (right-) censored data, and thus STAR requires no distinction between boundedness and censoring \citep{Kowal2020a}. 
This modularity is highly advantageous and unparalleled among discrete data models. %., which typically require customized estimation and inferece. 

%This does not hold in general for other discrete data models: these cases typically require customized models and algorithms.

%the aforementioned features typically require customized models and more complicated algorithms. 

\begin{figure}[h]
\centering
\includegraphics[width=.49\textwidth]{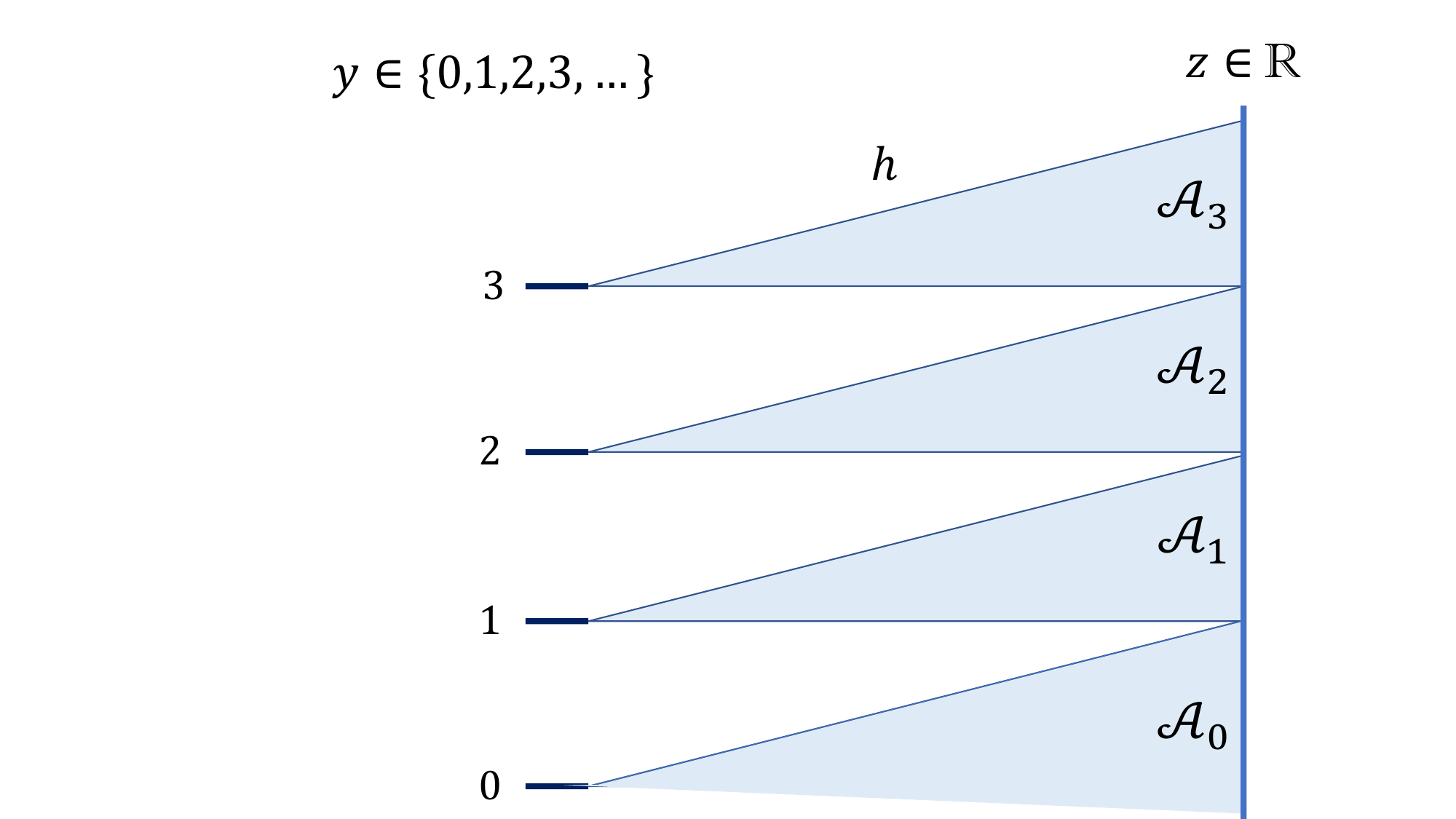}
\includegraphics[width=.49\textwidth]{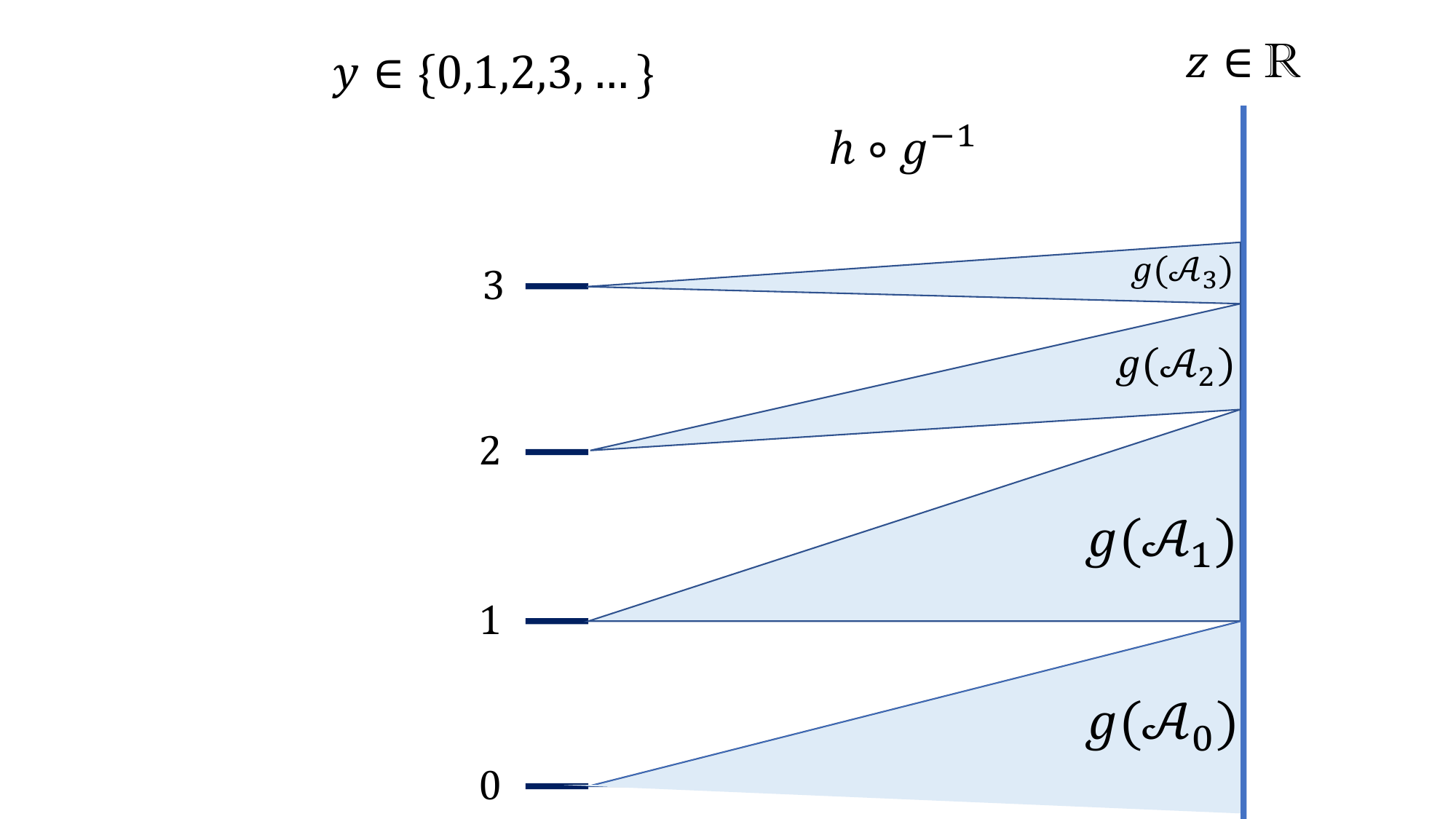}
\caption{\small Illustrating the relationship between the discrete space for $y$ and the continuous space for $z$ under \eqref{tr}--\eqref{mod}. While the rounding operator $h$ simply maps intervals to integers, the learned transformation $g$ warps the latent  data  space to favor certain regions of the domain (skewness, over-/under-dispersion, multimodality) or specific values (zero-inflation, heaping) %---in this illustration, $y=0$ and $y=1$---
for more accurate predictive inference with discrete data.
\label{fig:scheme}}
\end{figure}
 
We prioritize Bayesian modeling, computing, and theoretical analysis of (i) the transformation $g$ and (ii) the linear coefficients $\theta$.  The role of  $g$  is to provide sufficient marginal distributional flexibility to account for complex discrete data features, such as  zero-inflation, over-/under-dispersion, boundedness or censoring, and heaping  (see Figure~\ref{fig:scheme} and Section~\ref{app}).  
The proposed  \emph{semiparametric} discrete data regression model combines a nonparametric model for $g$ with a parametric model for \eqref{mod}. Our approach hinges on the fundamental decomposition for the joint posterior of $(g,\theta)$:
\begin{equation}
\label{decomp}
p(g, \theta \mid y) =  p(g  \mid y)  \ p(\theta  \mid y, g).
\end{equation}
For  $ p(g  \mid y) $, we establish a link between the marginal distribution of $y$---unconditional on $x$ and $\theta$---and the transformation   $g$, which enables direct application of established Bayesian nonparametric models for this term. Specifically, we design a smoothed Bayesian bootstrap procedure that simultaneously delivers posterior consistency and direct Monte Carlo  sampling for posterior inference. A notable feature of our framework is that both the theoretical analysis and the Monte Carlo algorithms are direct and straightforward. 

The second term, $p(\theta  \mid y, g)$, is equivalent to the posterior of $\theta$ under \eqref{tr}--\eqref{mod} with a \emph{known} transformation $g$. This case is intrinsically interesting, and corrects the popular (yet incoherent) use of parametric transformations (such as logarithmic  or square-root) to model count data using  a continuous model \eqref{mod}. In addition, this case applies under a large-sample approximation to  an unknown transformation  (see Section~\ref{sec-approx}). For the  linear model model \eqref{lm}, we derive conjugate priors, (pseudo) closed-form posteriors, and Monte Carlo inference  for $ p(\theta  \mid y, g) $. 

In conjunction, the resulting methods provide \emph{direct Monte Carlo} posterior and predictive inference when $g$ is modeled nonparametrically and (pseudo) \emph{closed-form} posterior and predictive inference when $g$ is fixed. In both cases, we establish posterior consistency for the unknown parameters, and do not require model correctness for this result.

%Although parametric specifications are available, such as the identity, square-root, and logarithmic transformations---which mimic the popular strategy of transforming count data prior to modeling---we focus on nonparametric models for $g$ to deliver greater representational ability. 

%\subsection{Related work}
%

%\cite{Wright2003} A Mixture Model for Rounded Data: rounded, grouped, or censored data: mixture model for the rounding operator

%The STAR framework has achieved widespread success  in a variety of discrete data settings. 
The remainder of the paper is organized as follows. Section~\ref{sec-related}  reviews related work. Section~\ref{sec-bnp} develops the model, algorithms, and theory for inferring the unknown transformation $g$. Section~\ref{sec-post} derives (pseudo) closed-form posterior and predictive distributions, Monte Carlo sampling, and posterior consistency for $\theta$ given  $g$. Section~\ref{sims} includes computational comparisons and simulation studies and Section~\ref{app} contains the application to self-reported mental health data. Section~\ref{disc} concludes. The Appendix includes proofs of all results (Appendix~\ref{sec-proofs}), analytic (Appendix~\ref{sec-pred-an}) and Monte Carlo (Appendix~\ref{a-mc-pred}) strategies for prediction and model selection (Appendix~\ref{sec-model-sel}), 
additional simulation comparisons (Appendix~\ref{sec-sims-add}), and additional details about the real data (Appendix~\ref{sec-app-add}). \texttt{R} code to reproduce all  results is available online  along with  an \texttt{R} package:  \url{www.GitHub.com/drkowal/rSTAR}. 
%Supplementary materials include a PDF with proofs of all results (theorems, lemmas, and algorithms), analytic and Monte Carlo strategies for prediction and model selection, 
%additional simulation comparisons, and additional details about the real data; \texttt{R} code to reproduce all  results; and an \texttt{R} package on {GitHub}. % available at \url{www.GitHub.com/drkowal/rSTAR}. 

\section{Related work}\label{sec-related}
%The impact of data coarsening is often nonignorable, both for coherency of the data-generating process and for correct and efficient inference.  \cite{dempster1983rounding} emphasized the surprisingly large impact of rounded data adjustments in least squares regression, which notably do not decay with the sample size. More broadly, \cite{Heitjan1989} reviewed inference and asymptotics for grouped data, where continuous data are observable only up to some coarsening such as an interval. \cite{Heitjan1991} studied multiple sources of data coarsening, including rounding, heaping, and, censoring, as well as random coarsening operators in the context of ignorability. 

%The perspective on discrete data offered by grouping or rounding has been popularized as a modeling strategy for count and binary data. 

Most commonly, Bayesian regression models for count data build upon the Poisson distribution. These parametric regression models are often incapable of modeling  zero-inflation, over-/under-dispersion, boundedness or censoring, and heaping. Despite the simplicity of Poisson regression, the accompanying posterior and predictive distributions typically require approximations such as MCMC. Generalizations such as the Conway-Maxwell-Poisson distribution \citep{sellers2010flexible} or the discrete Weibull distribution \citep{peluso2019discrete} provide moderate improvements for over-/under-dispersion, yet are burdened by more intensive computations. \cite{polson2013bayesian} introduced a data augmentation strategy for Gibbs sampling with Negative Binomial likelihoods, while  \cite{Bradley2018} proposed log-gamma distributions that are conditionally conjugate to the Poisson distribution. These approaches employ MCMC for all posterior and predictive inference and, without significant modifications, cannot account for zero-inflation, bounded or censored data, and heaping. Similarly, copula models for multivariate count data typically specify parametric marginal distributions  \citep{Pitt2006,Genest2007,Nikoloulopoulos2010}, and thus suffer from the same limitations.

The coarsening of continuous data models is a widely useful alternative strategy for modeling count, rounded, and binary data. %For multivariate count data, \cite{Pitt2006} proposed a copula model with parametric count-valued marginals and a  latent Gaussian model for joint dependence.
\cite{canale2011bayesian} introduced rounded Gaussian mixture models for count-valued Bayesian nonparametrics.   \cite{canale2013nonparametric} similarly applied rounded Gaussian processes for count regression modeling. \cite{Kowal2020a} extended these models to include both transformation and rounding for BART and additive regression with count data. This general modeling approach has been empirically successful, including for toxicity studies and marketing data  \citep{canale2011bayesian}, tumor counts and asthma inhaler usage \citep{canale2013nonparametric}, species counts and hospital utilization  \citep{Kowal2020a},  and drug overdose counts \citep{King2021}, %and self-reported mental health data \citep{Kowal2021b}, 
among others. However, these existing methods are severely limited by  computationally intensive and complex MCMC algorithms, and do not provide a direct (closed-form) characterization of the posterior distribution---even with a known transformation---or any theoretical guarantees for inferring an unknown transformation $g$ and the regression parameters $\theta$. 

%there are no existing characterization of a posterior distribution---even for fixed transformations---theoretical justification for inferring a transformation or any regression parameters.
%These methods demonstrated the utility of combining a rounding operator $h$ with a latent continuous  model, but did not consider a transformation. 
%\cite{Kowal2020a}  incorporated a transformation  $g$ for regression analysis,  
%Nonetheless, 

%Although these works demonstrate the utility of discretized continuous data models for count data, 

The many benefits of (pseudo) closed-form posteriors and Monte Carlo sampling have been demonstrated for \emph{binary} 
 data \citep{Durante2019}, including extensions for multinomial 
 data \citep{Fasano2020}. However, the present discrete data setting with $y \in \mathbb{Z}$ introduces unique and significant challenges relative to  binary (or multinomial) data. First, the complex discrete distributional features that we consider---zero-inflation, over-/under-dispersion, boundedness or censoring, and heaping---are simply not relevant (or comparatively trivial) for binary data. These challenging features motivate the inclusion of the transformation $g$, which requires careful consideration to enable flexible modeling, efficient computing, and theoretical analysis. By comparison, transformations are not relevant for binary data: they only change the sample space but not the distribution of the data. Naturally, the omission of $g$ greatly simplifies all phases of the analysis. 
 Further, our approach prioritizes discrete and correctly-supported data-generating processes, and in particular, posterior predictive distributions that may be accessed using Monte Carlo sampling or in (pseudo) closed-form. This task is highly challenging for count and rounded data, which typically require MCMC or other approximations, yet comparatively simple for binary data, for which the posterior predictive distribution is completely characterized by $p(\tilde y = 1 \mid y)$. %However, for count and rounded data, predictive distributions typically require MCMC or other approximations. 

Modeling or estimating an unknown transformation $g$ is widely popular for semiparametric regression \citep{Carroll1988}. Bayesian approaches commonly model $g$ using parametric families \citep{bean2016transformations,Lin2020} or nonparametrically as a monotone and smooth function \citep{lazaro2012bayesian,song2012,Mulgrave2018,Kowal2020a}. However, existing strategies fail to provide easy access to the joint posterior  \eqref{decomp} and instead use complex and often inefficient Metropolis-within-Gibbs sampling algorithms. Non-Bayesian approaches have similarly emphasized the importance of learning a transformation for count data regression \citep{Siegfried2020,Kowal2021b}, count time series analysis \citep{Jia2021}, and (continuous) graphical models \cite{Liu2009}, but do not provide uncertainty quantification that accounts for the unknown transformation.

\section{Bayesian nonparametrics to infer the transformation} \label{sec-bnp}

%\subsection{Setting}
When the transformation $g$ is learned from the data, model \eqref{tr}--\eqref{lm} is a \emph{semiparametric} regression model with significantly enhanced capabilities to ensure adequacy of the linear representation in \eqref{lm}. A flexible $g$ is crucial for modeling challenging marginal distributional properties, such as zero-inflation, skewness, and heaping (Figure~\ref{fig:scheme}).   However, an unknown $g$ requires careful considerations  for both modeling and computing. Although it may appear straightforward to specify a Bayesian model for $g$ as an unknown and monotone function, existing approaches require complex and inefficient posterior computations \citep{lazaro2012bayesian,song2012,Mulgrave2018,Kowal2020a}. These methods typically apply Metropolis-within-Gibbs sampling steps for $[g \mid y, \theta]$ and $[\theta \mid y, g]$, often with sub-blocks for the coefficients that determine $g$, and must enforce the monotonicity constraint on $g$. 
% used a Metropolis-within-Gibbs sampling algorithm for $(g, z, \theta)$, which was both computationally intensive and MCMC inefficient. 
Thus, it is a nontrivial task to provide modeling flexibility for $g$  that does not impede computational performance.

%When $g$ is unknown, it must be modeled and inferred from the data. In this case, a fully Bayesian approach must account for the uncertainty in $g$---which is notably lacking in previous work. We propose a Bayesian nonparametric (BNP) strategy for flexible modeling of $g$, which leverages the vast and powerful array of BNP models for a \emph{marginal} distribution yet maintains direct \emph{Monte Carlo}  inference for the joint posterior of $g$ and $\theta$. 

We introduce  a novel  Bayesian nonparametric (BNP) strategy for $g$ that leverages the vast and powerful array of BNP models for marginal distributions yet enables Monte Carlo  inference for the joint posterior of $g$ and $\theta$. The approach builds upon the fundamental link between the marginal cumulative distribution functions (CDFs) of $y$ and $z$ under \eqref{tr}--\eqref{mod}: $F_Y(j) = p_Y(y \le j) = p_Z\{z < g(a_{j+1})\} = F_Z\{g(a_{j+1})\}$,  which implies that 
\begin{equation} \label{trans-cdf}
g(a_{j+1}) = F_Z^{-1}\{F_Y(j)\}.
\end{equation}
The CDFs $F_Y$ and $F_Z$ of $y$ and $z$, respectively, are \emph{not} simply the distributions from \eqref{tr}--\eqref{mod}: 
 $g$ does not depend on the covariates $x$ or the parameters $\theta$, so the marginal CDFs $F_Y$ and $F_Z$  are unconditional on these terms. 
Specifically, the marginal CDF of the latent data is computed from the latent data model for $z$:  $p_Z(z) = \int \int p_{Z \mid \theta, x} (z) \ p(\theta) \ p_X(x) \, d\theta \, dx$, where $p(\theta)$ is the prior  and $p_X(x)$ denotes the unknown joint density function of the covariates. Under the linear model \eqref{lm}  and a Gaussian prior $\theta \sim N_p(\mu_\theta, \Sigma_\theta)$, the latent data marginal CDF is 
\begin{equation}\label{z-cdf}
F_Z(t) = \int \Phi(t; x'\mu_\theta, x'\Sigma_\theta x  +\sigma^2) \, d F_X(x)
\end{equation}
where $F_X$ is the marginal CDF of the covariates. %The flexibility of the transformation introduces nonidentifiability in \eqref{lm}; this is resolved simply by omitting an intercept and fixing $\sigma = 1$, which further simplifies \eqref{z-cdf}.
Note that when the covariates are fixed (i.e., non-random), we may substitute the empirical CDF for $F_X$ throughout. Also, $\sigma^2$ is not identified when $g$ is fully nonparametric and may be fixed at $\sigma = 1$.  

Equations \eqref{trans-cdf} and \eqref{z-cdf} showcase the important facts that (i) $g$  is  linked to the marginal distributions of the data, $F_Y$ and $F_X$, and (ii)  the monotonicity constraint is enforced automatically. Thus, it is possible to target $F_Y$ and $F_X$ directly---and  separately---to learn the transformation $g$ with no further constraints needed.
Although a broad variety of BNP models can  be used, we prioritize modeling strategies that admit efficient Monte Carlo sampling for $g$.  We specify Bayesian bootstrap (BB)  models \citep{rubin1981bayesian} for the unknown marginal distributions. The BB may be constructed as a Dirichlet process prior with any well-defined  base measure, and then taking the limit as  the concentration parameter goes to zero. For the covariates $x$, the BB posterior distribution is $[F_X \mid x ] \sim \mbox{DP}(n \hat F_X)$, where $\hat F_X$ is the empirical CDF of $\{x_i\}_{i=1}^n$. Crucially, the BB requires no tuning parameters, applies for mixed data types, and suitably generalizes the empirical CDF to account for uncertainty in the unknown distribution. In the  context of \eqref{tr}--\eqref{lm}, the BB is exceptionally convenient for computing: posterior samples of $F_Z$ in \eqref{z-cdf} are  obtained by sampling $(\alpha_1^x,\ldots,\alpha_n^x) \sim \mbox{Dirichlet}(1,\ldots,1)$ and  computing $\tilde F_Z(t) = \sum_{i=1}^n \alpha_i^x  \Phi(t; x_i'\mu_\theta, x_i'\Sigma_\theta x_i  +\sigma^2)$. %Other BNP models are compatible with the proposed approach, but in general will not lead to the same simplifications for the expectation  \eqref{z-cdf}. 

\iffalse %%%%%%%%%%%%%%%%%

\begin{theorem}
\label{thm-Fz-con}
Let $F_{z,0}$ be the true (marginal) CDF of $z$ under \eqref{lm} and $\theta \sim N_p(\mu_\theta, \Sigma_\theta)$, and let $F_{x, 0}$ be the true CDF of $x$. The posterior distribution of  \eqref{z-cdf} under the Bayesian bootstrap satisfies $\lim_{n \to \infty} p\{F_Z \in \U(F_{z,0}) \mid x_{1:n}\} = 1$ where $\U(F_{z,0})$ is any neighborhood of $F_{z,0}$ under the weak topology. 
\end{theorem}
\fi %%%%%%%%%%%%%%%%%

The BB is also an appealing  choice  for $F_Y$, since it is capable of capturing various marginal distributional features  like zero-inflation, boundedness, and heaping. Similarly, posterior draws of $F_Y$ are obtained  by computing  $\tilde F_Y (j) =  n(n+1)^{-1} \sum_{i=1}^n \alpha_i^y  \mathbb{I}\{y_i \le j\}$ with $(\alpha_1^y,\ldots,\alpha_n^y) \sim \mbox{Dirichlet}(1,\ldots,1)$, where the rescaling $n(n+1)^{-1}$ helps avoid  boundary issues. However, the  primary limitation of the BB is  that it is  only supported on the \emph{observed}  data values $y = \{y_i\}_{i=1}^n$. This undesirable property appears in \eqref{trans-cdf} and, if uncorrected, propagates for all predictive distributions. To resolve  this limitation, we augment \eqref{trans-cdf} with a monotone and smooth interpolation of $g$ at the observed values $y$   \citep{Fritsch1980},  %\texttt{splinefun(..., method= `monoH.FC')}
which appropriately expands the support of the data-generating  process without changing the implied transformation at the observed data points. We emphasize  that this operation is coherent within the Bayesian model: the smooth interpolation is part of the data-generating process, but does not change the models for $F_X$ and $F_Y$. 

The proposed joint sampling algorithm for $(\theta, g)$ is outlined in Algorithm~\ref{alg:joint}. This key algorithm has three important features. First, the computations for drawing $g$ in step 1 are simple, and include draws from a Dirichlet distribution and deterministic computations. Second, Algorithm~\ref{alg:joint} delivers a \emph{joint} posterior draw of $(\theta, g)$, and requires only a draw from $p(\theta \mid y, g)$---i.e., the posterior under a fixed and known transformation. Lastly,  if step 2 is a Monte Carlo draw of $\theta$ (e.g., Algorithm~\ref{alg:g-sim}), then Algorithm~\ref{alg:joint} is a direct Monte Carlo sampler for the joint posterior $p(\theta, g \mid y)$. 
Step 2 may correspond to any other valid sampling scheme for $p(\theta \mid y, g)$, such as a Gibbs sampler (e.g., Algorithm~\ref{alg:gibbs}); even in this case, Algorithm~\ref{alg:joint} provides a highly convenient and efficient blocking structure for $(\theta, g)$. This strategy is particularly useful for large $n$ in Sections~\ref{sims}~and~\ref{app}.

\begin{algorithm}[h]
\SetAlgoLined  
\begin{enumerate}
    \item Simulate $g^* \sim p(g \mid y) $ as follows:
    \begin{enumerate}
    \item Draw $(\alpha_1^x,\ldots,\alpha_n^x) \sim \mbox{Dirichlet}(1,\ldots,1)$. 
    \item Compute  $\tilde F_Z(t) = \sum_{i=1}^n \alpha_i^x  \Phi(t; x_i'\mu_\theta, x_i'\Sigma_\theta x_i  + \sigma^2)$.
        \item Draw  $(\alpha_1^y,\ldots,\alpha_n^y) \sim \mbox{Dirichlet}(1,\ldots,1)$.
\item        Compute $\tilde F_Y(j) =  n(n+1)^{-1} \sum_{i=1}^n \alpha_i^y  \mathbb{I}\{y_i \le j\}$.
%    \item Compute $\hat F_Z(t) =   n^{-1} \sum_{i=1}^n \Phi\{t; x_i'\mu_\theta, x_i'\Sigma_\theta x_i  +\sigma^2)\}$.
   \item Set $\tilde g(a_{j+1}) = \tilde F_Z^{-1}\{\tilde F_Y(a_j)\} $.
\item Apply a  smooth monotonic interpolation of $\tilde g$: $ g^*({y_i}) = \tilde g({y_i})$ for $i=1,\ldots,n$.
    \end{enumerate}
    \item Simulate $\theta^* \sim  p(\theta \mid y, g = g^*)$  (e.g., Algorithm~\ref{alg:g-sim}).
\end{enumerate} 
 \caption{Joint sampler for $(\theta^*, g^*) \sim p(\theta, g \mid y)$.} \label{alg:joint}
\end{algorithm}
A subtle caveat is that Algorithm~\ref{alg:joint} implicitly uses a \emph{surrogate likelihood} for $F_Y$ and thus $g$. %Using a generic rounding operator with $\mathcal{A}_j = [a_j, a_{j+1})$, 
Since $p_Y(y_i = j) = F_Y(j) - F_Y(j-1) = p_Z\{z_i \in g(\mathcal{A}_j)\} = F_Z\{g(a_{j+1})\} - F_Z\{g(a_j)\}$ due to \eqref{def-h}, the likelihood for $g$, upon marginalizing over $x_i \stackrel{iid}{\sim} F_X$ and $\theta \sim p(\theta)$, is 
\begin{align*}
p(y_1,\ldots, y_n \mid g) &= \int p(\theta) \prod_{i=1}^n [ F_{Z \mid \theta}\{g(a_{y_i + 1})\} - F_{Z \mid \theta}\{g(a_{y_i})\}]  d\theta\\
&= 
  \left\{\prod_{i=1}^n p_Y(y_i)\right\} \int p(\theta) \prod_{i=1}^n \frac{F_{Z \mid \theta}[F_Z^{-1}\{F_Y(y_i)\}] - F_{Z \mid \theta}[F_Z^{-1}\{F_Y(y_i-1)\}]}{F_Y(y_i) - F_Y(y_i - 1)}  d\theta
\end{align*}
using the identification \eqref{trans-cdf}. Instead, the BB model for $F_Y$ uses only the leading term, $\prod_{i=1}^n p_Y(y_i)$. We argue that this term should dominate: the information about $F_Y$ contained in $y$ should far outweigh the information provided indirectly by the covariates $\{x_i\}$ and the prior $p(\theta)$ under the  linear model \eqref{lm}. Indeed, this is validated asymptotically in Section~\ref{sec-theory}: the implied posterior for $g$ is consistent, and thus the omitted term is asymptotically negligible and Algorithm~\ref{alg:joint} is asymptotically valid. %The surrogate likelihood only applies to $F_Y$; the remaining terms ($F_Z$ and $\theta$)  use the full model likelihood. 
Finite-sample corrections are available, such as importance sampling---the importance weights are simply the omitted term $p(y_1,\ldots, y_n \mid g)/\prod_{i=1}^n p_Y(y_i)$---but our empirical results (Sections~\ref{sims}~-~\ref{app}) suggest that such no corrections are needed to achieve excellent performance.

%The remaining terms each use the correct STAR likelihood. 

%and after marginalizing over $x_i \stackrel{iid}{\sim} F_X$ and $\theta \sim p(\theta)$, may be written
%\begin{align*}
%p(y_1,\ldots, y_n \mid g) &= %\int p(\theta) \prod_{i=1}^n \left( F_{Z \mid \theta}[F_Z^{-1}\{F_Y(y_i)\}] - F_{Z \mid \theta}[F_Z^{-1}\{F_Y(y_i-1)\}]\right)  d\theta \\
 %\int p(\theta) \prod_{i=1}^n [ F_{Z \mid \theta}\{g(a_{y_i + 1})\} - F_{Z \mid \theta}\{g(a_{y_i})\}]  d\theta \\
%&= \left\{\prod_{i=1}^n p_Y(y_i)\right\} \int p(\theta) \prod_{i=1}^n \frac{F_{Z \mid \theta}[F_Z^{-1}\{F_Y(y_i)\}] - F_{Z \mid \theta}[F_Z^{-1}\{F_Y(y_i-1)\}]}{F_Y(y_i) - F_Y(y_i - 1)}  d\theta
%\end{align*}

\subsection{Posterior consistency for the transformation}\label{sec-theory}

To ensure that the BB for $F_X$ indeed induces a  reasonable model for $F_Z$,  we establish consistency of $F_Z$, and allowing a general (possibly nonlinear or non-Gaussian) model for $[z \mid \theta, x]$:
\begin{theorem}
\label{thm-Fz-con}
Let $F_{Z  \mid \theta, X}$ be a CDF such that $F_{Z  \mid \theta, X=x}$ is continuous in $x$ and define $F_{Z \mid X}(t) = \int F_{Z  \mid \theta, X}(t) \ p(\theta) \, d\theta$, where $p(\theta)$ is the prior.  Let $F_{X,0}$ be the true CDF of $X$ and suppose that $\theta$ and $X$ are independent. Under the BB model for $\tilde F_X$, the marginal CDF   $\tilde F_Z(t) = \int F_{Z \mid X=x}(t) \ d \tilde F_X(x)$ converges %pointwise
weakly
to %the true marginal CDF 
$F_{Z,0}(t) = \int F_{Z \mid X=x}(t) \ d F_{X,0}(x)$.
\end{theorem} 
Thus, the BB  model for $F_X$ is broadly applicable, free of tuning,  computationally convenient, and theoretically appealing. All proofs are in Appendix~\ref{sec-proofs}. 

Building upon this result, we establish posterior consistency for the transformation $g$ under the BB models for $F_X$ and $F_Y$, and thus using the proposed surrogate likelihood for $g$.
Specifically, we study the posterior asymptotics on $\M(\Z)$, i.e., the space of non-increasing functions mapping the set of integers $\Z$ to $\R$, and the topology of pointwise convergence  $\mathcal{T}$. 
\begin{theorem}\label{thm-joint-con}
Suppose that under the transformation model \eqref{trans-cdf} with BB models for $F_X$ and $F_Y$ (or equivalently, step 1 of  Algorithm~\ref{alg:joint}), there exists a  true $  \tilde g_0 \in  (\M(\Z), \mathcal{T})$. Let $\tilde g$ be the restriction of $g$ to the integers. Then the posterior distribution $\Pi_n(\tilde g \mid y)$ is weakly consistent at $\tilde g_0$ under $\mathcal{T}$.
\end{theorem}
This result is general: it does \emph{not} require the  linear model \eqref{lm}, and instead only needs the weak convergence of $\tilde F_Z$ established in Theorem~\ref{thm-Fz-con}. Thus, the proposed model for the transformation \eqref{trans-cdf} with BB models for $F_X$ and $F_Y$ is theoretically justified for any continuous (in $z$ and $x$) latent data model $F_{Z\mid\theta,X}$, and such generalizations only require modification of step 1(b) in Algorithm~\ref{alg:joint}.

Since Algorithm~\ref{alg:joint} targets the joint posterior $[\theta, g \mid y]$, we acknowledge the sufficient conditions for joint posterior consistency, which are revisited in Section~\ref{sec-consist}. In particular, the joint posterior is consistent whenever the posterior for $\theta$ under a \emph{fixed and known} transformation is consistent. Let $\mathcal{T}_E$ be the product of Euclidean topology and the topology of pointwise convergence. The following result is an immediate consequence of Theorem~\ref{thm-joint-con}:
\begin{corollary}\label{cor-joint-con}
Suppose that under the transformation model \eqref{trans-cdf} with BB models for $F_X$ and $F_Y$ (or equivalently, step 1 of  Algorithm~\ref{alg:joint}), there exists a true $(\theta_0, \tilde g_0) \in (\Theta \times \M(\Z), \mathcal{T}_E)$, and that the conditional posterior distribution $\Pi_n(\theta \mid y, \tilde g)$ is weakly consistent at $\theta_0$.  
Then the joint posterior distribution $\Pi_n(\theta, \tilde g \mid y)$ is weakly consistent at $(\theta_0, \tilde g_0)$ under $\mathcal{T}_E$. 
\end{corollary}
Corollary~\ref{cor-joint-con} implies that the proposed semiparametric modeling framework delivers posterior consistency in the case of an \emph{unknown} transformation whenever the parametric model is consistent in the case of a \emph{known} transformation---which is often much easier to verify (see Theorem~\ref{thm-theta-con}). 

\subsection{Approximations for efficient computing}\label{sec-approx}
When $n$ is large, the BBs for $F_Y$ and $F_X$ will concentrate at their respective empirical CDFs. This suggests the following large-sample point approximation for the transformation: 
\begin{equation}\label{trans-cdf-approx}
\hat g(a_{j+1}) = \hat F_Z^{-1}\{\hat F_Y(j)\},
\end{equation} 
where $\hat F_Z(t) = n^{-1} \sum_{i=1}^n  \Phi(t; x_i'\mu_\theta, x_i'\Sigma_\theta x_i  +\sigma^2)$ and $\hat F_Y(j) = (n+1)^{-1} \sum_{i=1}^n \mathbb{I}\{y_i \le j\}$, i.e., the Dirichlet weights $\alpha_i^x$ and $\alpha_i^y$  are replaced by $n^{-1}$. When the transformation is fixed at $\hat g$, Algorithm~\ref{alg:joint} instead uses only step 2. This approximation also motivates analysis of the posterior $p(\theta \mid y, g)$ with known transformation, which is considered in the next section.

%%%%%%%%%%%%%%%%%%%%%%%%

%\begin{example}\label{ex-par}
%    The estimator \eqref{g-np0} may instead use parametric distributions, such as Poisson or Negative Binomial CDFs. In this case, the role of the transformation is to map the latent data model \eqref{mod} such that the marginal distribution of $y$ conforms to a pre-specified parametric family. The accompanying parameters of  $F_Y$ must be estimated; a useful default is method-of-moments estimation, which recycles $\bar y$ and $\hat s_y$. \cite{Jia2021} adopted a related approach for count time series modeling, while \cite{Pitt2006} similarly used parametric regression models for the marginals in a multivariate Gaussian copula model.
%\end{example}

\iffalse %%%%%%%%
\begin{example}
    Semiparametric copula models   similarly apply parametric or nonparametric estimates  of $F_Y$, but then define an extended rank (probit) likelihood that only uses the ordering of the data \citep{Hoff2007,Feldman2021}. In particular, the ordering of $\{y_i\}$ imposes an equivalent ordering on $\{z_i\}$, which ensures that conditioning on the data is equivalent to applying a constraint region on $z_i$. This result is closely linked to the framework in the subsequent section. 
\end{example}
\fi %%%%%%%%

%Importantly, the proposed inferential strategies only require specification of $g(\mathcal{A}_j)$ for each $j$ in the support of $y$. Hence, variations of $h$ and $g$  have minimal impact on the proposed  algorithms for posterior inference. 

\section{Posterior inference for the linear coefficients}\label{sec-post}
%\subsection{Setting}
In this section, we study the posterior distribution $p(\theta \mid y, g)$, which is equivalently the posterior with a \emph{known} transformation. This distribution is interesting for two reasons. First, sampling from $p(\theta \mid y, g)$ is needed to complete Algorithm~\ref{alg:joint} and posterior consistency of $p(\theta \mid y, g)$ is needed to complete Corollary~\ref{cor-joint-con}. Second, model \eqref{tr}--\eqref{lm} with a known transformation is a useful parametric model:  regression analysis  commonly fixes a transformation in advance, such as a logarithmic, square-root, or identity transformation, to improve model adequacy without increasing the number of parameters. Even for unknown $g$, the point approximation \eqref{trans-cdf-approx} is accurate when $n$ is large, and the uncertainty from $[g \mid y]$ may be negligible. However, the  posterior distribution of $\theta$---even for known transformation---has not been derived previously. Thus, the goals are to (i) derive the  posterior distribution $p(\theta \mid y, g)$  in (pseudo) closed-form, (ii) provide Monte Carlo sampling for this distribution, and (iii) establish posterior consistency for this distribution. By design, Monte Carlo sampling and posterior consistency for $[\theta \mid y, g]$ are sufficient for Monte Carlo sampling and posterior consistency, respectively, of the \emph{joint} posterior $[\theta, g \mid y]$. 

%---which therefore implies that Algorithm~\ref{alg:joint} is a Monte Carlo sampler for the \emph{joint} posterior $p(\theta, g \mid y)$. 

We first consider a general (possibly nonlinear or non-Gaussian) model \eqref{mod} and then analyze \eqref{lm} specifically. 
%Proceeding generally under model \eqref{tr}--\eqref{mod}, the likelihood for $\theta$ is 
%\begin{equation}
%\label{like}
%p(y_i = j \mid  \theta, x_i) = p\{z_i \in g(\mathcal{A}_j) \mid  \theta, x_i\} 
%\end{equation}
%computed under the continuous data model \eqref{mod}, and 
%with conditional independence across $i=1,\ldots,n$. 
Notational dependence on $g$ is omitted in the conditioning sets, since $g$ is treated as known in this section. Since $p(y_i = j \mid \theta, x) = p_Z\{z_i \in g(\mathcal{A}_j) \mid \theta, x\}$ due to \eqref{def-h},  the posterior distribution is 
\begin{equation}
    \label{posterior-gen}
    p( \theta \mid  y)  = \frac{p( \theta) \ p\{ z \in g(\mathcal{A}_{y}) \mid  \theta, x\}}{p\{ z \in g(\mathcal{A}_{y}) \mid x\}} = \frac{p(\theta) \ p\{g(\mathcal{A}_{y}) \mid \theta, x\}}{\int p(\theta) \ p\{g(\mathcal{A}_{y}) \mid \theta, x\} \ d\theta}
\end{equation}
where $\{ z \in g(\mathcal{A}_{y})\} = \{z_1 \in g(\mathcal{A}_{y_1}), \ldots, z_n \in g(\mathcal{A}_{y_n})\}$ is elementwise (the posterior conditions on $x$ as well). An obvious strategy for posterior inference   is to apply a Gibbs sampling algorithm for $p(\theta , z  \mid y)$. This strategy is outlined in Algorithm~\ref{alg:gibbs} and has appeared previously \citep{Albert1993,canale2011bayesian,canale2013nonparametric,Kowal2020a}. The main advantages of this approach are its relative simplicity and its scalability: the latent data sampler  in step 1 requires only linear computing time and is parallelizable, while step 2 is a standard draw from a Gaussian linear regression model. Indeed, when $n$ is large, we recommend Algorithm~\ref{alg:gibbs} for step 2 of Algorithm~\ref{alg:joint}. Additional sampling steps for other parameters (e.g., the variance components) are easily added to Algorithm~\ref{alg:gibbs}.

%Such MCMC samplers are compatible with step 2 in Algorithm~\ref{alg:joint} and often favorable when $n$ is large, but otherwise the sampling efficiency deteriorates relative to Monte Carlo sampling, especially for large $p$.
\begin{algorithm}[h]
\SetAlgoLined  
\begin{enumerate}
    \item Simulate $z_i^* \sim [z_i \mid y, \theta] = N(x_i'\theta, \sigma^2)$ truncated to $(g(a_{y_i}), g(a_{y_i + 1})]$ for $i=1,\ldots,n$.
    
    \item Simulate $\theta^* \sim [\theta \mid y, z] =  N(Q_\theta^{-1} \ell_\theta, Q_\theta^{-1})$ where $Q_\theta = \sigma^{-2} X'X + \Sigma_\theta^{-1}$ and $\ell_\theta = \sigma^{-2} X'z + \Sigma_\theta^{-1}\mu_\theta.$ 
\end{enumerate} 
 \caption{Gibbs sampler for $(\theta^*, z^*) \sim p(\theta, z \mid y)$ under \eqref{tr}--\eqref{lm} and $\theta \sim N_p(\mu_\theta, \Sigma_\theta).$} \label{alg:gibbs}
\end{algorithm}

However, MCMC algorithms  fail to provide general insights about the posterior or predictive distributions and require customized diagnostics and lengthy simulation runs. Compared to the proposed Monte Carlo (not MCMC) sampler introduced subsequently, the Gibbs sampler in Algorithm~\ref{alg:gibbs} is inefficient for small or moderate $n$, especially as $p$ grows (see Section~\ref{sims}).

Here, we characterize the posterior distribution explicitly: 
\begin{theorem}\label{thm-slct0}
    When $ \mathcal{C} = g(\mathcal{A}_{y})$ is a measurable subset of of $\mathbb{R}^n$, the posterior distribution \eqref{posterior-gen} is a \emph{selection distribution}.
\end{theorem}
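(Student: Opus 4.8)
The plan is to verify directly that the posterior \eqref{posterior-gen} has the defining form of a \emph{selection distribution}, by exhibiting the appropriate parent pair of random vectors and the selection region. Recall (in the sense of Arellano-Valle, Branco, and Genton) that a random vector $U$ follows a selection distribution if $U \stackrel{d}{=} (V \mid W \in \mathcal{C})$ for some jointly distributed pair $(V, W)$ and some measurable set $\mathcal{C}$ with positive selection probability; equivalently, the density of $U$ is the baseline density of $V$ reweighted by $\mbox{pr}(W \in \mathcal{C} \mid V)/\mbox{pr}(W \in \mathcal{C})$.

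First I would construct the parent joint law. Take $V = \theta$ with marginal density equal to the prior $p(\theta)$, and let $W = z = (z_1, \ldots, z_n)$ have conditional law $z \mid \theta \sim \Pi_\theta(\cdot \mid x)$ as in \eqref{mod} (the Gaussian linear model \eqref{lm} being the leading instance). This specifies a proper joint distribution on $(\theta, z)$. I would then take the selection region to be $\mathcal{C} = g(\mathcal{A}_y) \subseteq \mathbb{R}^n$, measurable by hypothesis, and observe from the likelihood \eqref{like} that the selection event $\{z \in \mathcal{C}\}$ is precisely the event equivalent to observing the discrete data $y$ under \eqref{tr}--\eqref{mod}.

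The core step is to compute the conditional law $[\theta \mid z \in \mathcal{C}]$ from the definition of conditioning: its density is proportional to $p(\theta)\,\mbox{pr}(z \in \mathcal{C} \mid \theta)$, with normalizing constant $\mbox{pr}(z \in \mathcal{C})$. Substituting $\mbox{pr}(z \in \mathcal{C} \mid \theta) = \Pi_\theta\{g(\mathcal{A}_y) \mid x\}$ from \eqref{like}, this coincides term by term with \eqref{posterior-gen}. Hence $p(\theta \mid y) \stackrel{d}{=} (\theta \mid z \in \mathcal{C})$, which is a selection distribution with baseline the prior and selection through the latent $z$.

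The main obstacle is measure-theoretic rather than computational: one must ensure that the selection event has positive probability, $\mbox{pr}(z \in \mathcal{C}) = \int p(\theta)\,\Pi_\theta\{g(\mathcal{A}_y)\mid x\}\,d\theta > 0$, so that conditioning is well-defined and the posterior is proper. This is exactly where the measurability hypothesis on $\mathcal{C}$ is needed, and it is the degenerate situation flagged for the step-function transformation $\hat g_0$ in Example \ref{ex-cdf}, where $g(a_j) = g(a_{j+1})$ collapses $\mathcal{C}$ to a null set. Beyond this check, the result is an identification: the substance is recognizing that conditioning a continuous parent model on a latent-variable region is precisely the selection mechanism, so that the remaining steps are bookkeeping, and the generality of ``measurable $\mathcal{C}$'' conveniently also captures non-rectangular selection regions such as the rank-likelihood ordering constraints.
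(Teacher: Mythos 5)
Your proof is correct and takes essentially the same route as the paper, whose entire argument is the observation that $p(\theta \mid y) = p\{\theta \mid z \in g(\mathcal{A}_y)\}$ and that a conditional law of this form, parametrized by the joint distribution of $(\theta, z)$ and the measurable region $\mathcal{C} = g(\mathcal{A}_y)$, is by definition a selection distribution; your construction of the parent pair and the term-by-term match with \eqref{posterior-gen} simply spell this out. One minor caveat: measurability of $\mathcal{C}$ only guarantees that $\mbox{pr}(z \in \mathcal{C})$ is well-defined, not that it is positive, so the positivity needed for the conditioning (which you rightly flag via the degenerate $\hat g_0$ case) is a separate mild requirement rather than a consequence of the measurability hypothesis.
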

The result is self-evident: a selection distribution is defined by random variables of the form $[\theta \mid z \in \mathcal{C}]$ and parametrized in terms of the joint distribution for $(\theta, z)$ and the measurable constraint region $\mathcal{C}$. So, the observation that $p( \theta \mid  y) = p\{ \theta \mid z \in g(\mathcal{A}_{y})\}$ is sufficient. 

General properties of selection distributions are provided by \cite{ArellanoValle2006}. When the prior $p(\theta)$ is closed under a set of transformations or closed under marginalization, these properties propagate to the posterior $p(\theta \mid y)$. When $(\theta, z)$ have a joint multivariate elliptically contoured distribution, the posterior density and distribution functions can be expressed in terms of the location parameters, dispersion parameters, and density generator function; see \cite{ArellanoValle2006} for the explicit forms. We provide the Gaussian case below.
 
%$p(\theta)$ and the latent data model \eqref{mod} are both Gaussian. Specifically, construct the joint distribution 
%\begin{equation}
    %\label{slct-n}
    %\begin{pmatrix}
%     z \\  \theta 
 %   \end{pmatrix} \sim 
  % N_{n + p} \left\{
   % \begin{pmatrix}  \mu_z \\  \mu_\theta \end{pmatrix},
   %  \begin{pmatrix}  \Sigma_z & \Sigma_{z\theta} \\  %\Sigma_{z\theta}' &   \Sigma_\theta \end{pmatrix}
   % \right\}
%\end{equation}
%for $z \in \mathbb{R}^n$ and $\theta \in \mathbb{R}^p$, 
%where the moments may depend on the covariates $\{x_i\}_{i=1}^n$. %Then the posterior may be characterized as follows:
\begin{theorem}\label{thm-slct}
    When $(\theta, z)$ are jointly Gaussian and parametrized by $\theta \sim N_p(\mu_\theta, \Sigma_\theta)$,  $z \sim N_n(\mu_z, \Sigma_z)$, and $\mbox{cov}(z, \theta) = \Sigma_{z \theta}$, the posterior distribution is \emph{selection normal}, denoted $[ \theta \mid  y] \sim \mbox{SLCT-N}_{n, p}( \mu_z,  \mu_\theta,  \Sigma_z,  \Sigma_\theta,  \Sigma_{z\theta},  \mathcal{C})$, with constraint region $\mathcal{C} = g(\mathcal{A}_y)$, density  
    \begin{equation}
    \label{density-slct-n}
    p( \theta \mid  y) = \phi_p( \theta;  \mu_\theta,  \Sigma_\theta) \frac{\bar \Phi_n\{\mathcal{C}; \Sigma_{z\theta}  \Sigma_\theta^{-1}( \theta -  \mu_\theta) +  \mu_z,  \Sigma_z - \Sigma_{z\theta} \Sigma_\theta^{-1}\Sigma_{z\theta}'\}}{ \bar\Phi_n(\mathcal{C};  \mu_z,  \Sigma_z)},
%    p( \theta \mid  y) = \phi_p( \theta;  \mu_\theta,  \Sigma_\theta) \bar \Phi_n\{\mathcal{C}; \Sigma_{z\theta}  \Sigma_\theta^{-1}( \theta -  \mu_\theta) +  \mu_z,  \Sigma_z - \Sigma_{z\theta} \Sigma_\theta^{-1}\Sigma_{z\theta}'\}/ \bar\Phi_n(\mathcal{C};  \mu_z,  \Sigma_z),
\end{equation}
and moment generating function $    M_{[ \theta \mid  z \in \mathcal{C}]}( s) = \exp\big( s'  \mu_\theta +   s' \Sigma_\theta s /2 \big) \bar\Phi_n(\mathcal{C}; \Sigma_{z\theta} s +  \mu_z,  \Sigma_z)/\bar\Phi_n(\mathcal{C};  \mu_z,  \Sigma_z)$, 
%\begin{equation}
 %   \label{mgf-slct-n}
    %M_{[ \theta \mid  z \in \mathcal{C}]}( s) = \exp\big( s'  \mu_\theta +   s' \Sigma_\theta s /2 \big) \frac{\bar\Phi_n(\mathcal{C}; \Sigma_{z\theta} s +  \mu_z,  \Sigma_z)}{\bar\Phi_n(\mathcal{C};  \mu_z,  \Sigma_z)},
%\end{equation}
where $\phi_p(\cdot;  \mu,  \Sigma)$ denotes the Gaussian density function of a Gaussian random variable with mean $ \mu$ and covariance $ \Sigma$ and $\bar\Phi_n(\mathcal{C};  \mu,  \Sigma) = \int_\mathcal{C} \phi_n( x;  \mu,  \Sigma) \ d  x$. 
\end{theorem}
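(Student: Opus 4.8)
The plan is to build directly on Theorem~\ref{thm-slct0}, which already identifies the posterior as $p(\theta \mid y) = p\{\theta \mid z \in g(\mathcal{A}_y)\} = p(\theta \mid z \in \mathcal{C})$. Under the joint Gaussian model \eqref{slct-n} the problem then reduces to standard manipulations of multivariate normal densities, so the task is to verify the two stated formulas rather than to establish anything conceptually new. I would handle the density and the moment generating function separately.

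For the density I would apply Bayes' rule in the form
\[
p(\theta \mid z \in \mathcal{C}) = \frac{\mbox{pr}(z \in \mathcal{C} \mid \theta)\, p(\theta)}{\mbox{pr}(z \in \mathcal{C})}.
\]
From \eqref{slct-n} the marginal of $\theta$ is $N_p(\mu_\theta, \Sigma_\theta)$, giving $p(\theta) = \phi_p(\theta; \mu_\theta, \Sigma_\theta)$, and the marginal of $z$ is $N_n(\mu_z, \Sigma_z)$, giving $\mbox{pr}(z \in \mathcal{C}) = \bar\Phi_n(\mathcal{C}; \mu_z, \Sigma_z)$. The only nontrivial ingredient is the numerator, for which I would invoke the conditional-normal formula $[z \mid \theta] \sim N_n\{\mu_z + \Sigma_{z\theta}\Sigma_\theta^{-1}(\theta - \mu_\theta),\ \Sigma_z - \Sigma_{z\theta}\Sigma_\theta^{-1}\Sigma_{z\theta}'\}$, so that $\mbox{pr}(z \in \mathcal{C} \mid \theta) = \bar\Phi_n\{\mathcal{C}; \Sigma_{z\theta}\Sigma_\theta^{-1}(\theta - \mu_\theta) + \mu_z,\ \Sigma_z - \Sigma_{z\theta}\Sigma_\theta^{-1}\Sigma_{z\theta}'\}$. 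Substituting these three pieces reproduces \eqref{density-slct-n} immediately.

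For the moment generating function I would write $M(s) = E\{\exp(s'\theta)\,\mathbb{I}(z \in \mathcal{C})\}/\mbox{pr}(z \in \mathcal{C})$ and evaluate the numerator by an exponential-tilting argument. Tilting the joint law \eqref{slct-n} by $\exp(s'\theta)$ has normalizing constant equal to the joint Gaussian MGF evaluated at the functional $(0, s)$, namely $\exp(s'\mu_\theta + s'\Sigma_\theta s/2)$, and leaves a Gaussian law with the same covariance but with the mean shifted by the matrix--vector product of $\Sigma$ with $(0, s)'$; in particular the $z$-marginal becomes $N_n(\mu_z + \Sigma_{z\theta}s, \Sigma_z)$. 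Hence $E\{\exp(s'\theta)\,\mathbb{I}(z \in \mathcal{C})\} = \exp(s'\mu_\theta + s'\Sigma_\theta s/2)\,\bar\Phi_n(\mathcal{C}; \Sigma_{z\theta}s + \mu_z, \Sigma_z)$, and dividing by $\bar\Phi_n(\mathcal{C}; \mu_z, \Sigma_z)$ yields the claimed expression. Equivalently one can complete the square in the exponent of the integrand, but the tilting viewpoint makes the shifted mean $\mu_z + \Sigma_{z\theta}s$ transparent.

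There is no deep obstacle here; the content is entirely the two Gaussian identities, namely the conditional law of $z$ given $\theta$ and the tilted $z$-marginal. The only points requiring care are bookkeeping: assuming $\Sigma_\theta$ is nonsingular so the conditional-normal formula applies, and treating $\mathcal{C}$ as a general measurable set rather than a product region, which is harmless because every step only integrates a normal density over $\mathcal{C}$ and never exploits its shape. I would close by noting, as a consistency check, that differentiating the moment generating function at $s = 0$ recovers the mean implied by \eqref{density-slct-n}, although this is not needed for the proof.
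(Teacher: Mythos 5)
Your proof is correct, but it takes a genuinely different route from the paper. The paper's proof is essentially a two-line citation: Theorem~\ref{thm-slct0} identifies the posterior as a selection distribution, joint Gaussianity in \eqref{slct-n} upgrades this to selection normal, and the density and moment generating function formulas are then imported wholesale from \cite{ArellanoValle2006}. You instead derive both formulas from first principles: the density via Bayes' rule combined with the conditional Gaussian law $[z \mid \theta]$, and the moment generating function via exponential tilting of the joint law, which cleanly explains why the tilted $z$-marginal $N_n(\mu_z + \Sigma_{z\theta}s, \Sigma_z)$ appears in the numerator. Both ingredients check out: the conditional covariance $\Sigma_z - \Sigma_{z\theta}\Sigma_\theta^{-1}\Sigma_{z\theta}'$ and tilting constant $\exp(s'\mu_\theta + s'\Sigma_\theta s/2)$ are exactly right, and your caveat that $\Sigma_\theta$ must be nonsingular is implicit in the theorem's own statement (which contains $\Sigma_\theta^{-1}$). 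What the paper's approach buys is brevity and placement of the result inside the general theory of selection distributions, which the paper reuses elsewhere (e.g., Theorem~\ref{thm-mc} and the selection $t$-distribution in the discussion); what your approach buys is self-containedness and transparency, since a reader can verify \eqref{density-slct-n} and the moment generating function without consulting the reference, and the tilting argument generalizes immediately to any measurable constraint region $\mathcal{C}$ without relying on its product structure.
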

%The posterior distribution is defined by the joint moments of $(\theta, z)$ and the constraint region implied by \eqref{tr}.  
Crucially, there exists a more constructive representation:
\begin{theorem}\label{thm-mc}
    Let $ V_0 \sim N_n( 0,  \Sigma_z)$ and $ V_1 \sim N_p( 0,  \Sigma_\theta -  \Sigma_{z\theta}'\Sigma_z^{-1}  \Sigma_{z\theta})$ be independent with $ V_0^{(\mathcal{C} -  \mu_z)} \stackrel{d}{=}  [ V_0 \mid  V_0 \in \mathcal{C} -  \mu_z]$ for $\mathcal{C} = g(\mathcal{A}_y)$. Then $[ \theta \mid y] \stackrel{d}{=}  \mu_\theta +   V_1 + \Sigma_{z\theta}'\Sigma_z^{-1}  V_0^{(\mathcal{C} -  \mu_z)}$.
\end{theorem}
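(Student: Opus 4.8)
The plan is to reduce the claim to the standard Gaussian conditioning decomposition and then observe that truncating $z$ to $\mathcal{C}$ leaves the conditional law of $\theta$ given $z$ intact. First I would take the joint law \eqref{slct-n} and write the exact representation $\theta = \mu_\theta + \Sigma_{z\theta}'\Sigma_z^{-1}(z - \mu_z) + W$, where the residual is $W := \theta - \mu_\theta - \Sigma_{z\theta}'\Sigma_z^{-1}(z - \mu_z)$. Since $(z,\theta)$ is jointly Gaussian, so is $(z, W)$, and a one-line covariance computation gives $\mathrm{Cov}(W, z) = \Sigma_{z\theta}' - \Sigma_{z\theta}'\Sigma_z^{-1}\Sigma_z = 0$ together with $\mathrm{Var}(W) = \Sigma_\theta - \Sigma_{z\theta}'\Sigma_z^{-1}\Sigma_{z\theta}$. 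Zero covariance for jointly Gaussian vectors forces independence, so $W \perp z$ and $W \sim N_p(0, \Sigma_\theta - \Sigma_{z\theta}'\Sigma_z^{-1}\Sigma_{z\theta})$, which is precisely the law of $V_1$.

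Next I would invoke $[\theta \mid y] = [\theta \mid z \in \mathcal{C}]$ from \eqref{posterior-gen} and Theorem~\ref{thm-slct0}, and condition the displayed identity on the event $\{z \in \mathcal{C}\}$. This is the crux of the argument: because $\{z \in \mathcal{C}\}$ is $\sigma(z)$-measurable and $W$ is independent of $z$, conditioning neither changes the marginal law of $W$ nor its independence from $z$; it modifies only the law of $z$, which becomes the truncation of $z$ to $\mathcal{C}$. Writing $V_0 := z - \mu_z \sim N_n(0, \Sigma_z)$, the conditioning event becomes $\{V_0 \in \mathcal{C} - \mu_z\}$ and the truncated centered latent vector is exactly $V_0^{(\mathcal{C} - \mu_z)}$ in the notation of the statement. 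Substituting back and identifying $W$ with $V_1$ yields $[\theta \mid y] = \mu_\theta + V_1 + \Sigma_{z\theta}'\Sigma_z^{-1} V_0^{(\mathcal{C} - \mu_z)}$ with $V_0$ and $V_1$ independent, as claimed.

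The step I expect to require the most care is the measure-theoretic justification of this conditioning: one must assume $\bar\Phi_n(\mathcal{C}; \mu_z, \Sigma_z) > 0$ so that the conditional distribution is well-defined, and argue that $W \perp \{z \in \mathcal{C}\}$ follows rigorously from $W \perp z$ and the $\sigma(z)$-measurability of the event. Everything else is routine Gaussian algebra and a change of variables.

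As an independent check I would verify consistency with Theorem~\ref{thm-slct} by computing the moment generating function of the right-hand side. By independence it factors as $\exp(s'\mu_\theta)$ times the Gaussian MGF of $V_1$, namely $\exp\{s'(\Sigma_\theta - \Sigma_{z\theta}'\Sigma_z^{-1}\Sigma_{z\theta})s/2\}$, times the MGF of the linear image $\Sigma_{z\theta}'\Sigma_z^{-1} V_0^{(\mathcal{C} - \mu_z)}$ of a truncated Gaussian; completing the square in the latter produces the factor $\exp\{s'\Sigma_{z\theta}'\Sigma_z^{-1}\Sigma_{z\theta}s/2\}\,\bar\Phi_n(\mathcal{C}; \Sigma_{z\theta}s + \mu_z, \Sigma_z)/\bar\Phi_n(\mathcal{C}; \mu_z, \Sigma_z)$. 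The two quadratic exponents then collapse to $s'\Sigma_\theta s/2$, recovering exactly the moment generating function stated in Theorem~\ref{thm-slct} and confirming the representation.
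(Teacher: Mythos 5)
Your proof is correct, and it is genuinely different in character from the paper's: the paper proves Theorem~\ref{thm-mc} in one line by citing the properties of selection normal distributions in \cite{ArellanoValle2006} (where this additive representation is a known result), whereas you derive it from first principles. Your route --- the regression decomposition $\theta = \mu_\theta + \Sigma_{z\theta}'\Sigma_z^{-1}(z-\mu_z) + W$, the covariance computation showing $W \perp z$ with $\mathrm{Var}(W) = \Sigma_\theta - \Sigma_{z\theta}'\Sigma_z^{-1}\Sigma_{z\theta}$, and the observation that conditioning on the $\sigma(z)$-measurable event $\{z \in \mathcal{C}\}$ truncates the law of $z$ while leaving $W$ and its independence untouched --- is exactly the right mechanism, and it makes transparent \emph{why} the representation holds, which the paper's citation obscures. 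Your measure-theoretic caveat is easily discharged here: $\mathcal{C} = g(\mathcal{A}_y)$ is a product of intervals with nonempty interior (since $g$ is monotone increasing and each $\mathcal{A}_{y_i}$ is a nondegenerate interval), so $\bar\Phi_n(\mathcal{C}; \mu_z, \Sigma_z) > 0$ and the conditional law is well defined; and for independent $X \perp Y$ with $A \in \sigma(X)$, $\mathrm{pr}(X \in B, Y \in C \mid A) = \mathrm{pr}(X \in B \mid A)\,\mathrm{pr}(Y \in C)$ is immediate. Note also that your final MGF verification is not merely a consistency check: since the moment generating function determines the distribution, that computation alone constitutes a second complete proof, contingent on Theorem~\ref{thm-slct} (which is how one would reconstruct the argument directly from the paper's stated results rather than from the external reference). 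What your approach buys is self-containedness and insight; what the paper's buys is brevity and a pointer to the general theory of selection distributions, of which this is a special case.
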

 Most important, Theorem~\ref{thm-mc} enables direct Monte Carlo simulation from the posterior, rather than iterative MCMC or other approximations. Specific examples are provided subsequently, often with useful simplifications (see Sections~\ref{sec-lm}~-~\ref{sec-predict}).  
 
 The primary computational bottleneck for evaluating the density  \eqref{density-slct-n}  is the integration of an $n$-dimensional multivariate normal density, whereas the Monte Carlo sampler in Theorem~\ref{thm-mc} is limited by the draw from the $n$-dimensional truncated normal. Our computing strategies leverage recent algorithmic developments. \cite{Botev2017} introduced a minimax exponential tilting strategy that accurately estimates the normalizing constant of a high-dimensional multivariate normal distribution and provides an efficient importance sampling algorithm for truncated multivariate normal variables with an accept-reject algorithm that achieves a high acceptance rate. These algorithms are efficient and accurate for moderate $n$, such as $n \approx 500$, and are implemented in the    \texttt{TruncatedNormal} package in \texttt{R}. For larger $n$, the Gibbs sampler in Algorithm~\ref{alg:gibbs} is recommended, even if the per-draw Monte Carlo efficiency is lower. %This strategy is employed in Section~\ref{app}. 
  
%vanishing relative error asymptotic property.

\iffalse %%%%%%%%%
As a secondary consequence of Theorem~\ref{thm-mc}, the posterior mean and posterior variance  of the regression coefficients are readily available: 
\begin{align}
    \label{post-mean}
    \E( \theta \mid y) &=  \mu_\theta + \Sigma_{z\theta}'\Sigma_z^{-1} \E( V_0 \mid  V_0 \in \mathcal{C} -  \mu_z)\\
    \label{post-var}
    \mbox{var}( \theta \mid  y) &=  \Sigma_\theta +  \Sigma_{z\theta}'\Sigma_z^{-1}\{  \Sigma_z - \mbox{var}( V_0 \mid  V_0 \in \mathcal{C} -  \mu_z)\}\Sigma_z^{-1}\Sigma_{z\theta}
\end{align}
which depend on the moments of $ V_0^{(\mathcal{C} -  \mu_z)} = [ V_0 \mid  V_0 \in \mathcal{C} -  \mu_z]$. These and other terms can be simplified depending on the model.
\fi %%%%%%%%%

\subsection{Simplifications for the linear model}\label{sec-lm}
Consider the linear model \eqref{lm} with a Gaussian prior $ \theta \sim N( \mu_\theta,  \Sigma_\theta)$ and let $X = (x_1,\ldots, x_n)'$ denote the $n\times p$ covariate matrix. The posterior distribution of the regression coefficients is available as a direct consequence of Theorem~\ref{thm-slct}:
\begin{lemma}\label{star-lm}
    For model \eqref{tr}--\eqref{lm} with $ \theta \sim N( \mu_\theta,  \Sigma_\theta)$, the posterior distribution is   $[ \theta \mid y] \sim \mbox{SLCT-N}_{n, p}( \mu_z =  X  \mu_\theta,  \mu_\theta,  \Sigma_z =  X  \Sigma_\theta  X' +  \sigma^2 I_n,  \Sigma_\theta,  \Sigma_{z\theta} =  X  \Sigma_\theta,  \mathcal{C} = g(\mathcal{A}_{ y}))$.
\end{lemma}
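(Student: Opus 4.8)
The plan is to recognize Lemma~\ref{star-lm} as an immediate specialization of Theorem~\ref{thm-slct}: the only work is to verify that the linear model \eqref{lm} together with the Gaussian prior $\theta \sim N(\mu_\theta, \Sigma_\theta)$ induces a joint distribution of the exact form \eqref{slct-n}, and then to read off the five moment parameters. Once joint Gaussianity and the moment blocks are established, Theorem~\ref{thm-slct} applies verbatim with constraint region $\mathcal{C} = g(\mathcal{A}_y)$, since the likelihood \eqref{like} already identifies the observed event $\{y\}$ with $\{z \in g(\mathcal{A}_y)\}$.

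First I would write the linear model in stacked matrix form, $z = X\theta + \epsilon$ with $\epsilon \sim N(0,\sigma^2 I_n)$ independent of $\theta$, and observe that the concatenation $(z', \theta')'$ is an affine image of the jointly Gaussian and independent pair $(\theta', \epsilon')'$; hence $(z,\theta)$ is itself jointly Gaussian and it suffices to match its first two moments. Linearity of expectation gives $E(z) = X\mu_\theta$, which I set equal to $\mu_z$. For the covariance blocks I would use the independence of $\theta$ and $\epsilon$: $\mbox{var}(z) = X\Sigma_\theta X' + \sigma^2 I_n$, $\mbox{var}(\theta) = \Sigma_\theta$, and the cross term $\mbox{cov}(z,\theta) = \mbox{cov}(X\theta + \epsilon, \theta) = X\Sigma_\theta$, so that $\Sigma_z = X\Sigma_\theta X' + \sigma^2 I_n$ and $\Sigma_{z\theta} = X\Sigma_\theta$, in agreement with the claimed parametrization.

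With these identifications the joint distribution \eqref{slct-n} holds exactly, so invoking Theorem~\ref{thm-slct} delivers $[\theta \mid y] \sim \mbox{SLCT-N}_{n,p}(X\mu_\theta, \mu_\theta, X\Sigma_\theta X' + \sigma^2 I_n, \Sigma_\theta, X\Sigma_\theta, g(\mathcal{A}_y))$ as claimed. There is no genuine analytical obstacle here; the result is essentially a bookkeeping exercise. The one point warranting care is the cross-covariance $\Sigma_{z\theta}$, where the independence of $\epsilon$ and $\theta$ must be used to discard the $\mbox{cov}(\epsilon,\theta)$ term, and where the orientation of the block (the convention $\Sigma_{z\theta} = \mbox{cov}(z,\theta)$ rather than its transpose) must match that of \eqref{slct-n}, so that the density \eqref{density-slct-n} is applied with the correct conditional mean $\Sigma_{z\theta}\Sigma_\theta^{-1}(\theta - \mu_\theta) + \mu_z$.
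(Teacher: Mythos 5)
Your proposal is correct and follows exactly the paper's route: the paper's own proof is a one-line appeal to ``direct computation as a consequence of Theorem~\ref{thm-slct}'' under model \eqref{lm} with the Gaussian prior, and your argument simply spells out that computation (joint Gaussianity of $(z,\theta)$, the moment blocks $E(z)=X\mu_\theta$, $\mbox{var}(z)=X\Sigma_\theta X'+\sigma^2 I_n$, $\mbox{cov}(z,\theta)=X\Sigma_\theta$, and the constraint $\mathcal{C}=g(\mathcal{A}_y)$) in full detail.
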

Lemma~\ref{star-lm} enables evaluation of the posterior density  (Theorem~\ref{thm-slct}), direct Monte Carlo simulation from the posterior (Theorem~\ref{thm-mc}), and computation of the posterior moments for the linear model. To illustrate, consider Zellner's $g$-prior for $\theta$: $\Sigma_\theta = \psi \sigma^2 ( X' X)^{-1}$ for $\psi >0$. Algorithm~\ref{alg:g-sim} provides direct simulation from the posterior distribution $\theta^* \sim p(\theta \mid y)$. Notably, this Monte  Carlo sampler only requires standard regression functionals of  $X$, which are a one-time computing cost across Monte Carlo draws. Validity of the algorithms is established in Appendix~\ref{sec-proofs}. 
\begin{algorithm}[h]
\SetAlgoLined  
\begin{enumerate}
    \item Simulate $V_0^* \sim N_n(0, \sigma^2\{ \psi  X ( X' X)^{-1}  X' + I_n\})$ truncated to $g(\mathcal{A}_y) - X\mu_\theta$.
    \item Simulate $V_1^* \sim N_p(0, \sigma^2 \psi(1+\psi)^{-1} (X'X)^{-1})$.
    \item Set  $\theta^* = \mu_\theta  +  V_1^* + \psi(1+\psi)^{-1}( X' X)^{-1} X' V_0^*$.
\end{enumerate} 
 \caption{Monte Carlo sampling for $\theta^* \sim p(\theta \mid y)$ under the $g$-prior.} \label{alg:g-sim}
\end{algorithm}

	Under the $g$-prior, the posterior expectation of the regression coefficients also simplifies:
%     Algorithm~\ref{alg:g-sim} also simplifies the posterior expectation of the regression coefficients:
    \begin{equation}
        \label{post-mean-g}
        \E( \theta \mid  y) =  \mu_\theta + \frac{\psi}{1+\psi} ( X' X)^{-1} X' \hat z
    \end{equation}
    which follows from Theorem~\ref{thm-mc} and Lemma~\ref{star-lm}, 
    where $\hat z  = \E\{z \mid z \in g(\mathcal{A}_y) - X\mu_\theta\} = \E(V_0^*)$ is the  expectation of $z$ truncated to $g(\mathcal{A}_y) - X\mu_\theta$. Using Algorithm~\ref{alg:g-sim}, $\E( \theta \mid  y)$ is easily estimable by replacing $\hat  z$ with the sample mean of Monte Carlo draws of $V_0^*$.     When the prior is centered at zero, $\mu_\theta=0$ and $\hat z = \E\{z \mid z \in g(\mathcal{A}_y\} = \E(z \mid y)$ is the posterior expectation of the latent data $z$ unconditional on $\theta$.  By comparison, the zero-centered $g$-prior with a Gaussian likelihood for $y$ produces a posterior expectation of the same form, but with the observed data $y$ in place of $\hat z$. This comparison is particularly insightful in the context of the rounded data scenario from  Example~\ref{ex-round}. In this case,  $z_i$ is the latent continuous version of the rounded data $y_i$, so $\hat z_i = \E(z_i \mid y)$ is a reasonable proxy for the unobservable continuous data in the posterior expectation \eqref{post-mean-g}.

More broadly, Lemma~\ref{star-lm} suggests a conjugate prior for model  \eqref{tr}--\eqref{lm}. The Gaussian distribution is a special case of the selection normal: $ \theta \sim \mbox{SLCT-N}_{1, p}(\mu_z = 0,  \mu_\theta, \Sigma_z = 1,  \Sigma_\theta,  \Sigma_{z\theta} =  0_p',  \mathcal{C} = \mathbb{R})$, where the moments and constraints on  $z$ are irrelevant as long as $ \Sigma_{z\theta} =  0$. By generalizing these additional arguments, we obtain a richer class of conjugate priors: 
\begin{lemma}
\label{lem:2}
    Under the prior $ \theta \sim \mbox{SLCT-N}_{n_0, p}( \mu_{z_0},  \mu_\theta,  \Sigma_{z_0},  \Sigma_\theta,  \Sigma_{z_0\theta},  \mathcal{C}_0)$ and the model \eqref{tr} and \eqref{lm}, the posterior is $[\theta \mid y] \sim \mbox{SLCT-N}_{n_0 + n, p}(
         \mu_{z_1}, 
         \mu_\theta, 
         \Sigma_{z_1},  
         \Sigma_\theta, 
          \Sigma_{z_1\theta},  
        \mathcal{C}_1)$ with
    \[
         \mu_{z_1} = \begin{pmatrix}  \mu_{z_0} \\  X  \mu_\theta \end{pmatrix}, \
          \Sigma_{z_1} = \begin{pmatrix}  \Sigma_{z_0} &  \Sigma_{z_0\theta}  X' \\  X  \Sigma_{z_0\theta}' &  X  \Sigma_\theta  X' +  \sigma^2 I_n \end{pmatrix}, \  
          \Sigma_{z_1\theta} = \begin{pmatrix}  \Sigma_{z_0\theta} \\  X  \Sigma_\theta \end{pmatrix},  \
        \mathcal{C}_1 = \mathcal{C}_0 \times g(\mathcal{A}_{ y}).
    \]
\end{lemma}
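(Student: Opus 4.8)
The plan is to exploit the defining representation of the selection normal. By Theorem~\ref{thm-slct}, the prior $\theta \sim \mbox{SLCT-N}_{n_0,p}(\mu_{z_0}, \mu_\theta, \Sigma_{z_0}, \Sigma_\theta, \Sigma_{z_0\theta}, \mathcal{C}_0)$ is the law of $[\theta \mid z_0 \in \mathcal{C}_0]$, where $(z_0, \theta)$ is jointly Gaussian with the prior moments. First I would introduce this latent vector $z_0 \in \mathbb{R}^{n_0}$ explicitly and then append the model latent $z = X\theta + \epsilon$ from \eqref{lm}, with $\epsilon \sim N(0, \sigma^2 I_n)$ taken independent of $(z_0, \theta)$, yielding a single joint Gaussian vector $(z_0, z, \theta)$. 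Writing $z_1 = (z_0', z')'$ and $\mathcal{C}_1 = \mathcal{C}_0 \times g(\mathcal{A}_y)$, the goal reduces to showing $[\theta \mid y] = [\theta \mid z_1 \in \mathcal{C}_1]$ and then reading off the joint moments.

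For the first claim, I would write the posterior as $p(\theta \mid y) \propto p(\theta) \mbox{pr}\{z \in g(\mathcal{A}_y) \mid \theta\}$ via \eqref{like}--\eqref{posterior-gen}, substitute $p(\theta) \propto p_{\text{joint}}(\theta) \mbox{pr}(z_0 \in \mathcal{C}_0 \mid \theta)$ from the prior representation, and recognize the result as the $\theta$-marginal of the joint Gaussian conditioned on $\{z_0 \in \mathcal{C}_0\} \cap \{z \in g(\mathcal{A}_y)\}$. The one structural fact that lets the two constraints separate into a product region is that $z_0$ and $z$ are conditionally independent given $\theta$: this holds because $\epsilon$ is independent of the prior's selection mechanism $z_0$, so $\mbox{pr}(z_0 \in \mathcal{C}_0, z \in g(\mathcal{A}_y) \mid \theta)$ factorizes. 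I expect this conditional-independence step---justifying that the prior latent and the sampling latent may be stacked into one constraint $\mathcal{C}_1$---to be the main conceptual obstacle; everything else is bookkeeping.

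The remaining step is a direct moment computation for $(z_0, z, \theta)$. Using $z = X\theta + \epsilon$ with $\epsilon$ independent of $(z_0, \theta)$, I would compute $E(z) = X\mu_\theta$, $\mbox{var}(z) = X\Sigma_\theta X' + \sigma^2 I_n$, $\mbox{cov}(z, \theta) = X\Sigma_\theta$, and $\mbox{cov}(z_0, z) = \Sigma_{z_0\theta}X'$, which assemble into the stated $\mu_{z_1}$, $\Sigma_{z_1}$, and $\Sigma_{z_1\theta}$. Stacking these with the inherited prior blocks $\mu_{z_0}, \Sigma_{z_0}, \Sigma_{z_0\theta}, \mu_\theta, \Sigma_\theta$ gives exactly the claimed selection normal parameters.

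As an independent check---or an alternative route that avoids the probabilistic argument entirely---I would verify the density directly: multiplying the prior density \eqref{density-slct-n} by the likelihood $\bar\Phi_n\{g(\mathcal{A}_y); X\theta, \sigma^2 I_n\}$ and comparing with the target density, the crucial algebraic observation is that the Schur complement $\Sigma_{z_1} - \Sigma_{z_1\theta}\Sigma_\theta^{-1}\Sigma_{z_1\theta}'$ is block-diagonal with diagonal blocks $\Sigma_{z_0} - \Sigma_{z_0\theta}\Sigma_\theta^{-1}\Sigma_{z_0\theta}'$ and $\sigma^2 I_n$. This block-diagonality lets the $(n_0+n)$-dimensional Gaussian integral over the product region $\mathcal{C}_1$ factor into the two lower-dimensional integrals appearing in the prior and in the likelihood, confirming that the normalizing constants match and that the posterior density coincides with the claimed selection normal.
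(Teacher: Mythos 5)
Your proof is correct and takes essentially the same route as the paper's: represent the selection normal prior via its defining conditional $[\theta \mid z_0 \in \mathcal{C}_0]$, stack the latents so that $[\theta \mid y] = [\theta \mid z_1 \in \mathcal{C}_0 \times g(\mathcal{A}_y)]$ with $z_1 = (z_0', z')'$, and read off the joint Gaussian moments, including the cross-covariance $\mbox{Cov}(z_0, z) = \Sigma_{z_0\theta}X'$. Your explicit conditional-independence justification for the product constraint region and the supplementary Schur-complement/density check simply make rigorous the factorization step the paper treats as immediate.
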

Lemma~\ref{lem:2} provides sequential updating based on multiple data sets, say $y = (y_{1},y_{2})$. When these data sets are conditionally independent and follow the same model \eqref{tr}--\eqref{lm}, the total posterior decomposes as $p(\theta  \mid y) \propto p(\theta \mid y_{1}) \ p(y_{2} \mid \theta)$. Under a Gaussian prior for $p(\theta)$, Lemma~\ref{star-lm} implies that the partial posterior $p(\theta  \mid y_{1})$  is selection normal, which serves as the prior for the  second data set update. Hence, Lemma~\ref{lem:2} updates  the partial posterior $p(\theta \mid y_{1})$ to the total posterior $p(\theta \mid y)$. In concurrent work, \cite{King2021} generalize Lemma~\ref{lem:2} for dynamic linear models   to provide analytic recursions for filtering and smoothing (selection normal) distributions.

\subsection{Posterior consistency under model misspecification}\label{sec-consist}
Continuing with fixed $g$, we seek to establish posterior consistency for $[\theta \mid y]$ under the  linear model \eqref{tr}--\eqref{lm}. The case of known $g$ completes the joint posterior consistency of Corollary~\ref{cor-joint-con}, but is also intrinsically interesting as a parametric linear model for discrete data. Crucially, the following result establishes strong posterior consistency \emph{without} assuming model correctness: 
\begin{theorem}\label{thm-theta-con}
Let $P_0$ be the true distribution of the data with $\{(x_i, y_i)\}_{i=1}^n \stackrel{iid}{\sim} P_0$ and let $P_\theta$ be the data-generating  model induced by \eqref{tr}--\eqref{lm} with $\theta \in \Theta \subseteq \mathbb{R}^p$.  Suppose there exists a unique $\theta_0 \in \mbox{int}(\Theta)$ such that $\theta_0 = \arg\min_{\theta \in \Theta} KL(P_0, P_\theta)$. 
If  $\vert \mathbb{E}_{P_0}\log p_\theta (y \mid x) \vert <\infty$ for all $\theta \in \Theta$ and the prior distribution satisfies $\Pi\{\mathcal{U}_\epsilon(\theta_0)\} > 0$ for all $\epsilon > 0$, then the posterior distribution satisfies $\Pi_n\{\mathcal{U}_\epsilon(\theta_0)\} \to 1$ almost surely $[P_0]$ for all $\epsilon > 0$, where  $\mathcal{U}_\epsilon(\theta_0) = \{\theta \in \Theta: \Vert \theta - \theta_0 \Vert_2 < \epsilon\}$.
\end{theorem}
Usually, $\Theta = \mathbb{R}^p$, so $\theta_0 \in \mbox{int}(\Theta)$ and any prior with density $\pi(\theta) >0 $ for all $\theta \in \mathbb{R}^p$ that is continuous in a neighborhood of $\theta_0$ will satisfy the necessary conditions for the prior distribution. The likelihood condition refers to  $p_\theta(y \mid x) = \bar\Phi\{g(\mathcal{A}_y); x'\theta, \sigma^2\}$, for which the expected [$P_0$] logarithm of this term must be finite. Although weaker conditions exist, an interpretable and sufficient set of conditions for $y \in \mathbb{Z}$ with $\mathcal{A}_y = [y, y+1)$ is 
%When $\mathcal{A}_y = [a_y, a_{y+1}]$ for $y \in \mathbb{Z}$ and letting $\Delta \psi(y) = \psi(a_{y+1}) - \psi(a_y)$ for any function $y$, sufficient conditions include ...\vspace{5mm}
%so for $\mathcal{A}_y = [y, y+1)$ we have $p_\theta(y \mid x) = \Phi[\{g(y+1) - x'\theta)/\sigma\}] - \Phi[\{g(y) - x'\theta)/\sigma\}]$, for which the expected (under $P_0$) logarithm of this term must be finite. 
%When $y \in \mathbb{Z}$, sufficient conditions include 
$\mathbb{E}_{P_0}\vert  g^3(y)\vert < \infty$, $\mathbb{E}_{P_0}\Vert   g^2(y) x \Vert_2 < \infty$, and $\mathbb{E}_{P_0}\Vert  xx'\Vert_2 < \infty$; see the supplementary material for details. %, plus  consideration of the finite support case $y \in \{0,\ldots, y_{max}\}$.
Alternatively, we may simply require that $KL(P_0, P_\theta) < \infty$ for all $\theta \in \Theta$, which establishes regularity of $P_\theta$ relative to $P_0$. 
Generalizations for latent models $[z \mid \theta, x]$ that deviate from the linear model \eqref{lm} are available; in fact, our proof only uses the linearity of \eqref{lm} to establish convexity of   $-\log p_\theta(y \mid x)$ in $\theta$. However, such generalizations must carefully consider the suitability of the prior and likelihood conditions. %, which are more straightforward under \eqref{lm}.
 
 Theorem~\ref{thm-theta-con} implies that  the posterior under \eqref{tr}--\eqref{lm} concentrates around the ``best" model-based approximation to the true data-generating process, regardless of whether that model is specified correctly. In concert,  Theorem~\ref{thm-joint-con} and Theorem~\ref{thm-theta-con} deliver posterior consistency for \emph{both} the unknown transformation and the linear coefficients in \eqref{lm} (Corollary~\ref{cor-joint-con}). %, even when the linear term \eqref{lm} is misspecified.
 Since the linear term \eqref{lm} is  the only parametric structure in the proposed model, these results provide exceptional generality for the proposed approach.

\subsection{Prediction and nonlinear models}\label{sec-predict}
A key benefit of \eqref{tr}--\eqref{mod} is that the data-generating process and all predictive distributions are discrete and match the support of $y$. Consider the posterior predictive distribution at the $\tilde n \times p$ covariate matrix  $\tilde X = (\tilde x_1,\ldots, \tilde x_{\tilde n})'$. Because of the link in \eqref{tr}, the joint predictive distribution of $\tilde y = (\tilde y_1,\ldots, \tilde y_{\tilde n})'$ can be expressed via the latent variables, %$\tilde z$: 
%\begin{equation}
 %   \label{predict}
  $  p(\tilde y \mid y) = p\{\tilde z \in g(\mathcal{A}_{\tilde y}) \mid y\}  = \int p\{\tilde z \in g(\mathcal{A}_{\tilde y}) \mid \theta\} \ p(\theta \mid y) \ d\theta.
  $
%\end{equation}
We consider multiple computing strategies for (functionals of) the posterior predictive distribution. The simplest option is to leverage the Monte Carlo sampling algorithms for $p(\theta \mid y)$  (e.g., Theorem~\ref{thm-mc}) to obtain latent predictive samples from $[\tilde z \mid y]$,  and then map those to $\tilde y$ via  \eqref{tr}:
\begin{lemma}\label{lem-pred}
    A predictive draw $\tilde y^* \sim p(\tilde y \mid y)$ for \eqref{tr}--\eqref{lm} is generated as follows:
    %\begin{enumerate}
    %\item 
    sample $\theta^* \sim p(\theta \mid  y)$;
    %  \item 
    sample  
    $\tilde z^* \sim N_{\tilde n}(\tilde X \theta^*, \sigma^2 I_{\tilde n})$; 
    %$\tilde z^* \sim p(\cdot \mid \theta = \theta^*, \tilde X)$;
    %\item 
    set $\tilde y^* = h \circ g^{-1} (\tilde z^*)$.
%\end{enumerate}
\end{lemma}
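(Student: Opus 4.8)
The plan is to recognize Lemma~\ref{lem-pred} as a correctness statement for an ancestral (composition) sampling scheme and to verify it by checking that the induced distribution of $\tilde y^*$ coincides with the posterior predictive mass function in \eqref{predict}. First I would isolate the two structural facts that the procedure rests on. The first is a conditional independence property inherited from model \eqref{mod}: given $\theta$, the latent variables $\tilde z$ at the new covariates $\tilde X$ are drawn from $\Pi_\theta(\cdot \mid \tilde X)$ independently of the training latent variables, hence independently of $y$, so that $p(\tilde z \mid \theta, y) = p(\tilde z \mid \theta)$. The second is the preimage identity for the rounding--transformation map: since the partition $\{\mathcal{A}_j\}$ is defined precisely so that a latent value lands in $g(\mathcal{A}_j)$ exactly when the rounded response equals $j$, the deterministic step $\tilde y^* = h \circ g^{-1}(\tilde z^*)$ satisfies $\{h \circ g^{-1}(\tilde z^*) = \tilde y\} = \{\tilde z^* \in g(\mathcal{A}_{\tilde y})\}$ componentwise.

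With these two facts in hand, the core computation is short. Steps 1--2 of the procedure jointly sample $(\theta^*, \tilde z^*)$ from the density $p(\theta \mid y)\, p(\tilde z \mid \theta)$. Applying the deterministic map in step 3 and invoking the preimage identity, the probability that $\tilde y^* = \tilde y$ equals the probability, under this joint, that $\tilde z^* \in g(\mathcal{A}_{\tilde y})$. I would then marginalize over $\theta^*$, obtaining $\int \Pi_\theta\{g(\mathcal{A}_{\tilde y}) \mid \tilde X\}\, p(\theta \mid y)\, d\theta$, which is exactly the right-hand side of \eqref{predict}. Hence $\mbox{pr}(\tilde y^* = \tilde y) = \mbox{pr}(\tilde y \mid y)$ for every $\tilde y$ in the support, establishing the claim.

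The substantive content lies not in this final manipulation but in justifying the conditional independence and in confirming that the preimage identity holds jointly across all $\tilde n$ prediction points, so that the \emph{joint} predictive distribution---not merely its marginals---is correctly reproduced; this is what makes the scheme valid for the multi-covariate prediction emphasized in \eqref{predict}. I expect this to be the main, if modest, obstacle: one must check that, conditionally on $\theta$, the components of $\tilde z$ are independent across the rows of $\tilde X$, so that $\Pi_\theta\{g(\mathcal{A}_{\tilde y}) \mid \tilde X\}$ factors consistently with the elementwise rounding, and that the draw $\theta^* \sim p(\theta \mid y)$ used in step~1 is an exact posterior sample (as supplied, for instance, by Theorem~\ref{thm-mc}). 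Both requirements follow directly from the independence built into \eqref{mod} and from Theorem~\ref{thm-mc}, so no additional hypotheses are needed.
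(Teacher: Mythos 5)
Your proposal is correct and follows essentially the same route as the paper, which justifies the lemma in one line as ``a direct consequence of the conditional independence assumptions in \eqref{mod} and the deterministic link in \eqref{tr}''; you simply make the marginalization $\int \Pi_\theta\{g(\mathcal{A}_{\tilde y}) \mid \tilde X\}\, p(\theta \mid y)\, d\theta = \mbox{pr}(\tilde y \mid y)$ and the preimage identity explicit. One small remark: componentwise independence of $\tilde z$ given $\theta$ is not actually needed---the event $\{\tilde z \in g(\mathcal{A}_{\tilde y})\}$ is a product set whose probability is well defined under any joint $\Pi_\theta(\cdot \mid \tilde X)$, so the argument goes through even with dependent latent errors.
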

%The result is a direct consequence of the conditional independence assumptions and the deterministic link in \eqref{tr}. 
These correctly-supported samples capture the joint dependence among predictive variables at multiple covariates $\tilde X$.  For the linear model \eqref{lm}, the posterior  sampling step is available via Monte Carlo sampling (e.g.,  Algorithm~\ref{alg:g-sim}). As  a result, Lemma~\ref{lem-pred} produces Monte Carlo samples from the joint posterior predictive distribution. Analytic computations for prediction are detailed in Appendix~\ref{sec-pred-an}.

%The latent predictive step is simply $\tilde z^* \sim N_{\tilde n}(\tilde X \theta^*, \sigma^2 I_{\tilde n})$, so Lemma~\ref{lem-pred} requires minimal additional computational cost given $\theta^*\sim p(\theta  \mid y)$. 

Now suppose the goal is exclusively predictive inference and $p(\theta \mid y)$  is not required. %In this context, it can be advantageous to circumvent the posterior sampling of $\theta$ in Lemma~\ref{lem-pred} and draw directly from the predictive distribution. 
To motivate this scenario, consider a nonlinear generalization of \eqref{lm}, 
%the following generalization of \eqref{lm}:
%\begin{equation}\label{nl}
$       z_i = b'(t_i)\theta + \epsilon_i,
$ with  $\epsilon_i  \sim N(0,\sigma^2) 
$, %\end{equation}
where $z_i = z(t_i)$ and   $b'(t) = (b_1(t), \ldots, b_p(t))$ are known basis functions, such as splines  or wavelets. \cite{canale2011bayesian} considered a similar model for count data, but 
%applied a version of model \eqref{nl} for count data analysis, but
required an MCMC algorithm for posterior inference. 
Let $X = (b(t_1), \ldots, b(t_n))'$ denote the basis matrix. Basis expansions  are usually accompanied by a  smoothness prior of the form $\theta \sim N_p(0, \psi \sigma^2  \Omega^{-})$, where  $\psi > 0$ controls the smoothness and $\Omega$ is a known positive semidefinite  penalty matrix. \cite{Scheipl2012} showed that $X$ and $\Omega$ can be reparametrized such that the prior distribution of $X\theta$ is unchanged, but $\Omega = I_p$ is the identity and $X'X$ is diagonal. Hence, we proceed without loss of generality under the assumption that $X'X = \mbox{diag}\{d_j\}$ and $\theta \sim N_p(0, \psi \sigma^2  I_p)$.

   Algorithm~\ref{alg:nl-pred} provides Monte Carlo simulation from the joint predictive distribution 
   $p\{\tilde y(t) \mid y\}$ across multiple points $t_{1'},\ldots, t_{\tilde n'}$ for a discrete and nonlinear regression model.  Most notably, the key terms are computable without any matrix inversions and avoid any excess sampling steps from the posterior distribution of $\theta$.  The most demanding step is  the draw of $V_0^*$ from the $n$-dimensional truncated normal, but fortunately this draw is shared among all choices of points $t_{1'},\ldots, t_{\tilde n'}$.  Notably, Algorithm~\ref{alg:nl-pred} may be inserted in step 2 of Algorithm~\ref{alg:joint} to instead provide a joint posterior prediction draw of $(\tilde y(t), g)$, which enables Monte Carlo predictive inference even when the transformation $g$ is unknown.
   \begin{algorithm}[h]
\SetAlgoLined  
\begin{enumerate}
    \item Simulate $V_0^* \sim N_n(0, \sigma^2 (\psi X X' + I_n))$ truncated to $g(\mathcal{A}_y)$.
    \item Simulate $\tilde  V_1^* \sim N_{\tilde n}(0,  \sigma^2 [\psi \tilde X \ \mbox{diag}\{1/(1 + \psi d_j)\} \ \tilde X' +   I_{\tilde n}])$.
    \item Compute  $\tilde z^* =  \tilde V_1^* + \psi \tilde X \ \mbox{diag}\{1 - \psi d_j/(1 + \psi d_j)\} \ \tilde X' V_0^*$.
    \item Set $\tilde y^* = h\circ g^{-1}(\tilde z^*)$.
\end{enumerate}
 \caption{Monte Carlo sampling for $\tilde y^* \sim p(\tilde y \mid y)$ at the observation points $t_{1'},\ldots, t_{\tilde n'}$ with $\tilde X =(b(t_{1'}), \ldots, b(t_{\tilde n'}))' $.} \label{alg:nl-pred}
\end{algorithm}

Algorithm~\ref{alg:nl-pred} is not limited to the nonlinear basis expansions model: in fact, it provides predictive simulation from  any linear model \eqref{tr}--\eqref{lm} with a prior of the form $\theta \sim N_p(0, \psi \sigma^2  \Omega^{-})$ for a known positive semidefinite matrix $\Omega$. Although the reparametrization of $X\theta$ changes the interpretation of $X$ and $\theta$, it does not change the data-generating process, so the predictive draws remain valid. Further, the reparametrization is a one-time computing cost, yet the computational simplifications are realized for every draw from Algorithm~\ref{alg:nl-pred}. 
%Again, this strategy circumvents unnecessary posterior sampling when the goal is exclusively predictive inference. 
A related sampling strategy for the $g$-prior is provided in Appendix~\ref{a-mc-pred}. 
    % As a result, the simulator from Algorithm~\ref{alg:nl-pred} is valid for the predictive distribution.

%In special cases, the posterior functional $b'(t)\theta$ may be of interest, such as in the rounded data case (Example~\ref{ex-round}). Under a Gaussian prior for $\theta$ and model \eqref{tr} and \eqref{nl}, the posterior for $\theta$ is selection normal by Lemma~\ref{star-lm}, while the proof of Lemma~\ref{lin-combo}  shows that $b'(t)\theta$ is also selection normal. 

%{\bf Question:} Is it possible to ``share" computing steps of $V_0^*$ across different choices of $\psi$ or $\sigma$?

% How about a Bayarri-type prior for $\psi$? 

\subsection{The sparse means problem for discrete data}\label{sec-sp}
The classical sparse means problem considers continuous data $[z_i \mid \theta_i] \sim N(\theta_i, \sigma^2)$ with the goal of identifying the nonzero means, $\{i: \theta_i \ne 0\}$ \citep{CvdV}. Sparsity is introduced via the spike-and-slab model 
\begin{equation}
    \label{ssmod}
    [\theta_i \mid \gamma_i]  \sim \gamma_i N(0, \sigma^2\psi)  + (1-\gamma_i) \delta_{\{0\}},
\end{equation}
where $\gamma_i \in \{0,1\}$ is the inclusion indicator with $p(\gamma_i=1 \mid \pi) = \pi$ for $i=1,\ldots,n$. Here, we instead consider discrete data $y_i \in \mathbb{Z}$ under \eqref{tr}--\eqref{lm}, such as the normal means model for rounded data (see 
Example~\ref{ex-round}). %, this scenario corresponds to the normal means model for rounded data with $\mathcal{A}_j= [j-0.5, j+0.5)$. 
\iffalse %%%%%%%%%%%%%%%%
More generally, we consider the following rounding operator for (possibly bounded or censored) integer-valued data:
$$
    h(t) = 
    \begin{cases}
        y_{min} & t < y_{min} \\
        \lfloor t + 0.5 \rfloor & t \in [y_{min}, y_{max}] \\
        y_{max} & t > y_{max}
    \end{cases}
    $$
or equivalently, $\mathcal{A}_j = [a_j, a_{j+1})$ with $a_{y_{min} - 1} = -\infty$, $a_j = j - 0.5$, and $a_{y_{max}+1} = \infty$. These bounds are not required and can be removed by setting $y_{min} = -\infty$ and/or $y_{max} = \infty$. 
\fi %%%%%%%%%%%%%%%%
Conditional on the inclusion indicators, \eqref{ssmod} is a special case of \eqref{lm} with $X = I_n$ and $n=p$, where the prior for the active components $[\theta_i \mid \gamma_i = 1] \sim N(0, \sigma^2 \psi)$ is a special case of the Gaussian priors considered previously. This scenario is also useful for wavelet models and can be extended to other spike-and-slab priors.

Although it is possible to compute the marginal posterior $p(\gamma \mid y)$ directly---including more generally for model selection (Appendix~\ref{sec-model-sel})---the search over the model space requires consideration of $2^n$ models, which is computationally prohibitive even for moderate $n$. Instead, we may employ a stochastic search to sample from $[\gamma \mid y]$. Specifically, we apply a Gibbs sampler that cycles through $[\gamma_i \mid y, \gamma_{-i}]$ for $i=1,\ldots,n$, which observes that $p(\gamma_i = 1 \mid y, \gamma_{-i}) = \omega_i/(1+\omega_i)$ for the (full conditional) odds of inclusion 
    \begin{equation}\label{odds}
        \omega_i = \frac{ p(\gamma_i = 1 \mid y , \gamma_{-i})}{p(\gamma_i = 0 \mid y, \gamma_{-i})} 
        = \frac{ \pi }{1-\pi}\frac{ p(y \mid  \gamma_{-i}, \gamma_i = 1 )}{p( y \mid  \gamma_{-i}, \gamma_i = 0)}
    \end{equation}    
    and   $p(y \mid \gamma)= \bar\Phi_n\{\mathcal{C} = g(\mathcal{A}_{y});  0,  \Sigma_z^{(\gamma)}\} = \prod_{i=1}^n \int_{g(y_i - 0.5)}^{g(y_i + 0.5)} \phi_1\{x_i; 0, \sigma^2(1 + \psi \gamma_i)\} \ d  x_i.$  We may expand this Gibbs sampler to incorporate a prior on the inclusion probability, $\pi \sim \mbox{Beta}(a_\pi, b_\pi)$ \citep{Scott2010}, which adds a simple sampling step of the form$[\pi \mid \gamma] \sim \mbox{Beta}(a_\pi + \#\{\gamma_i = 1\}, b_\pi + \#\{\gamma_i = 0\})$ with an accompanying update for $\pi$ in \eqref{odds}.% at each Gibbs sweep of $\gamma_1,\ldots,\gamma_n, \pi$.  
    
    Inference on the regression coefficients $\theta$ is readily available by observing that $[\theta, \gamma \mid y] = [\theta \mid y, \gamma][\gamma \mid y]$. Given samples of $\gamma$ using the above procedure, we sample from $[\theta \mid y, \gamma]$ using the following fast  version of Algorithm~\ref{alg:g-sim}: for $\{i:\gamma_i = 1\}$,  sample $V_{0,i}^* \sim N\{0, \sigma^2(1 + \psi )\}$ truncated to $[g(y_i - 0.5), g(y_i + 0.5)]$;  sample $V_{1,i}^* \sim N\{0, \sigma^2 \psi(1+\psi)^{-1}\}$; and   set $\theta_i^* =  V_{1,i}^* + \psi(1+\psi)^{-1}  V_{0,i}^*$. We set $\theta_i^* = 0$ for all $\{i: \gamma_i = 0\}$. This sampler is parallelizable across $i=1,\ldots,n$ yet produces draws from the joint posterior of $\theta$. These draws also may be used to obtain predictive samples  by application of Lemma~\ref{lem-pred}. 
    %The draws of $\theta$ are not needed for sampling the inclusion indicators $\gamma$, but can be used to obtain predictive samples  by application of Lemma~\ref{lem-pred}. 

    This sampling algorithm blends computational scalability and Monte Carlo efficiency for a semiparametric sparse means model with discrete data. The stochastic search for $[\gamma \mid y]$ is scalable in $n$, while the remaining sampling blocks for $g$, $\theta$, and $\tilde y$ use direct Monte Carlo draws. 
    %Crucially, the Gibbs sampler is needed only for the components of $\gamma$, while Monte Carlo sampling steps apply for $g$ and $\theta$ (and $z$ is not needed at all). 

%Questions: how do we interpret $\theta_i = 0$? and how do we select $\sigma^2$ and $\psi$?

\section{Simulation studies}\label{sims}

\subsection{Evaluating the computational performance} \label{sec-comp}
We assess the computational efficiency of the proposed algorithms relative to state-of-the-art alternatives. First, we consider the coefficient posterior $p(\theta \mid y, g)$ and compare the Monte Carlo sampler (Algorithm~\ref{alg:g-sim}) with the Gibbs sampler (Algorithm~\ref{alg:gibbs}), which  has been used previously for related discrete data models \citep{Albert1993,canale2011bayesian,canale2013nonparametric,Kowal2020a}. 
%  that iteratively draws $[\theta \mid y, z] = [\theta \mid z]$ and $[z \mid y, \theta]$. 
%compare the proposed sampler from Algorithm~\ref{alg:g-sim} with the Gibbs sampler from Algorithm~\ref{alg:gibbs}.   
The two samplers are applied to the same model \eqref{tr}--\eqref{lm} with the rounding operator from Example~\ref{ex-round-count} with $y_{max} = \infty$, a fixed transformation \eqref{trans-cdf-approx}, and  a $g$-prior ($\psi = n$). 
%We simulate negative binomial data with independent Gaussian covariates using the design from \cite{Kowal2020a}. The focus here is exclusively on computational performance. 
For each $n \in \{100, 200, 500\}$ and  $p\in\{10,50\}$, we simulate $X$ and $y$ from the \texttt{NegBin} design in Section~\ref{sims-reg}. We run each algorithm for 1000 samples (after a burn-in of 1000 for the Gibbs sampler) and repeat for 25 replicates. We compute the median effective sample size across the regression coefficients $\{\theta_j\}_{j=1}^p$ and report the result as the percent of the total number of simulations (1000). We also record the raw computing time (using \texttt{R} on a MacBook Pro, 2.8 GHz Intel Core i7). 

The results are presented in Figure~\ref{fig:computing}. 
The proposed Monte Carlo sampler achieves maximal efficiency for all $n,p$. By comparison, the Gibbs sampler is below 60\% efficiency for $p=10$ and declines to about  15\% efficiency for $n=100, p = 50$. The raw computing times are comparable for $n \in \{100, 200\}$ regardless of $p$, while the proposed Monte Carlo sampler incurs some additional cost only for $n=500$. In general, the proposed Monte Carlo sampler demonstrates substantial efficiency gains with increasing $p$, especially when $n$ is moderate. Further, unlike MCMC, the proposed Monte Carlo sampler does not require convergence diagnostics, thinning, or a burn-in.

%Specifically, the Gibbs sampler uses a data augmentation strategy to iteratively sample $[\theta \mid z]$ and $[z \mid \theta, y]$. 

\begin{figure}[h]
\centering
\includegraphics[width=.45\textwidth]{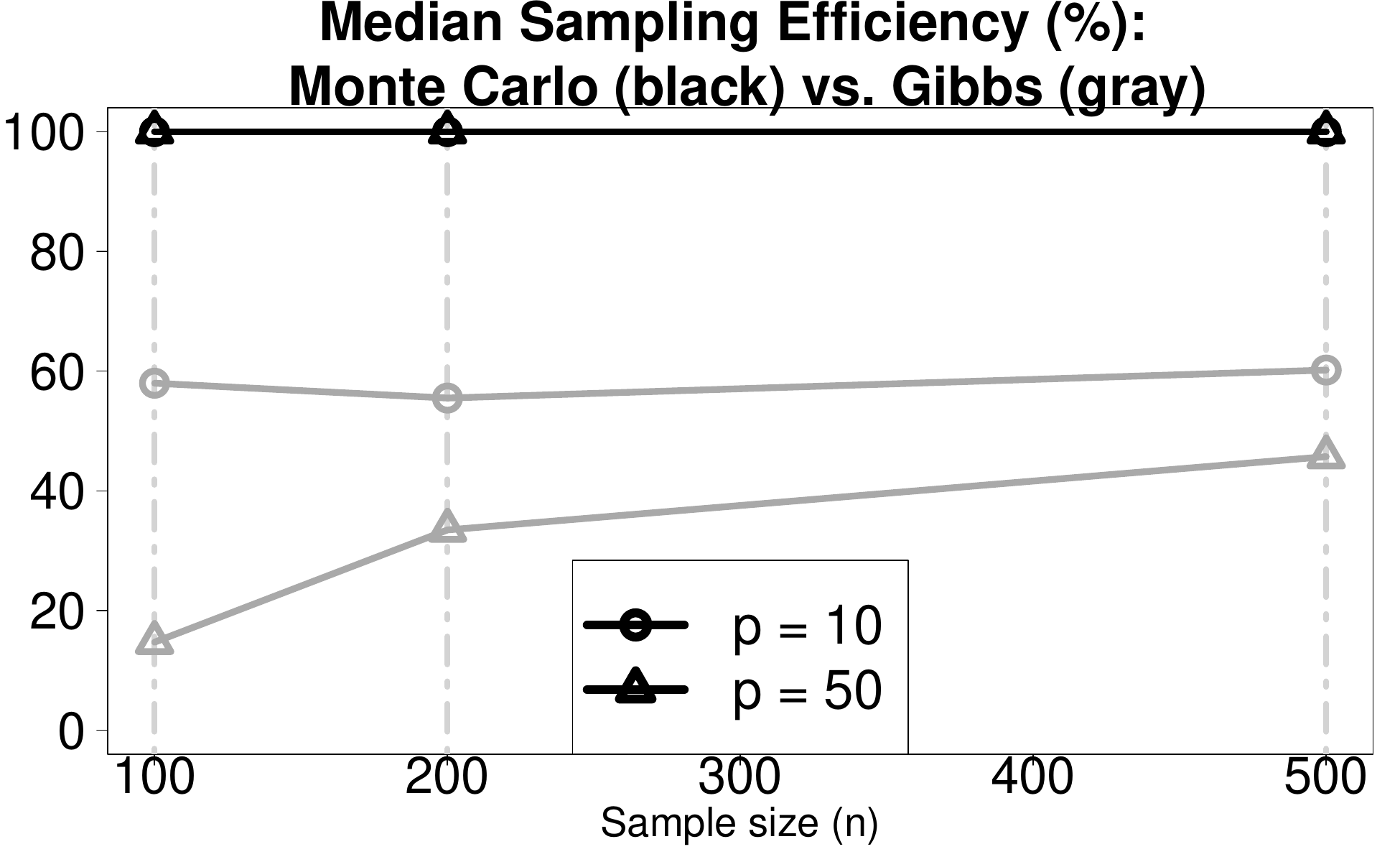}
\includegraphics[width=.45\textwidth]{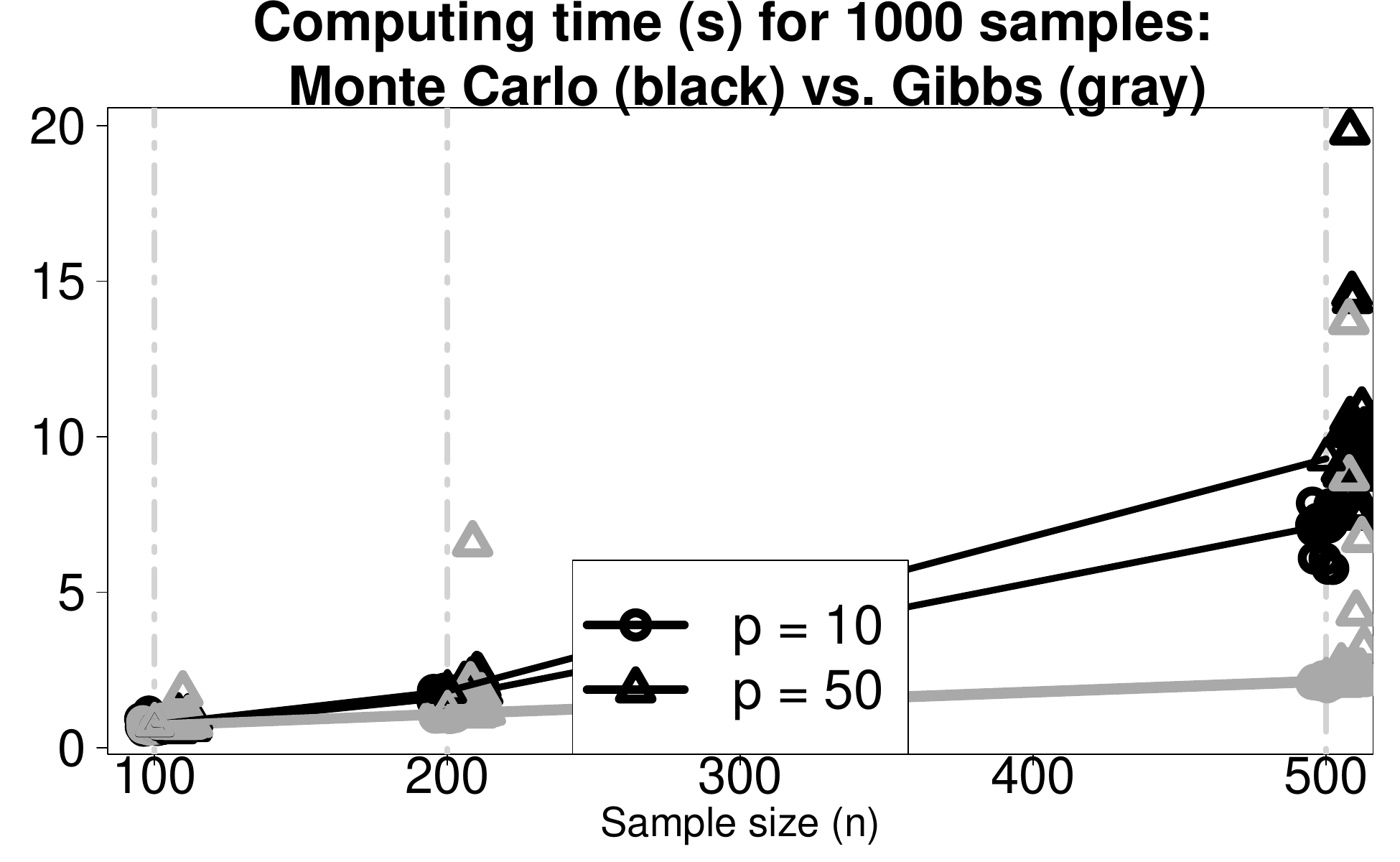}
\caption{\small Median effective sample size across $\{\theta_j\}_{j=1}^p$ reported as the percent of the total number of simulations (left) and raw computing time for 1000 samples (right). Results are computed for 25 synthetic data sets (jittered points, right) and summarized using medians (non-vertical lines) with $n \in \{100, 200, 500\}$ and $p\in\{10,50\}$. The proposed Monte Carlo sampler for $[\theta \mid y]$ demonstrates substantial efficiency gains over the Gibbs sampler for $[\theta, z \mid y]$ with increasing $p$, especially when $n$ is moderate. 
\label{fig:computing}}
\end{figure}

%These results emphasize the advantages of Monte Carlo  sampling relative to MCMC sampling. Unlike MCMC, the proposed Monte Carlo sampler does not require convergence diagnostics, thinning, or a burn-in. Yet from a practical perspective, the Gibbs sampler should not be dismissed entirely: it is relatively efficient, scalable in $n$, and easy to implement and generalize. 
%extend, e.g., to sample additional parameters as in Section~\ref{sec-sp}. 

Next, we evaluate posterior sampling strategies for $p(g, \theta \mid y)$ when the transformation is unknown. To compare with the proposed BB sampler in Algorithm~\ref{alg:joint} (step 1), we include the adaptive Metropolis-Hastings sampler from \cite{Kowal2020a}, which uses a monotone I-spline model for $g$. For a fair comparison, we use the same Gibbs sampler for the parameters $\theta$  (Algorithm~\ref{alg:gibbs}). The results  are presented in Figure~\ref{fig:computing-g}. Most notably, the proposed sampler for $[g \mid y]$ achieves maximal efficiency regardless of $n,p$, while the adaptive Metropolis-Hastings sampler is woefully inefficient. The efficiency gains for $[\theta \mid y]$ are less significant---due to the inefficiencies introduced by the Gibbs sampler for $\theta$---but still decisively favor the proposed approach.
%, in part due to the use of the Gibbs sampler  for $\theta$. %Despite these efficiency reductions, the Gibbs sampler should not be dismissed entirely: it is relatively efficient, scalable in $n$, and easy to implement and generalize. 
%extend, e.g., to sample additional parameters as in Section~\ref{sec-sp}. 

%When Algorithm~\ref{alg:gibbs} is used for step 2 in Algorithm~\ref{alg:joint}, the resulting Gibbs sampler is conveniently blocked: $g$ is sampled from the marginal posterior $[g \mid y]$, while the Gibbs steps for $[\theta \mid y, g, z]$ and $[z \mid y, g, \theta]$ use $p$-dimensional and $n$-dimensional blocks, respectively. Thus, this blocked Gibbs sampler remains an appealing option for posterior inference with model \eqref{tr}--\eqref{lm}, and is deployed for the more computationally demanding simulation studies (Sections~\ref{sims-reg}--\ref{sims-sparse}) and data analysis (Section~\ref{app}). 

\begin{figure}[h]
\centering
\includegraphics[width=.45\textwidth]{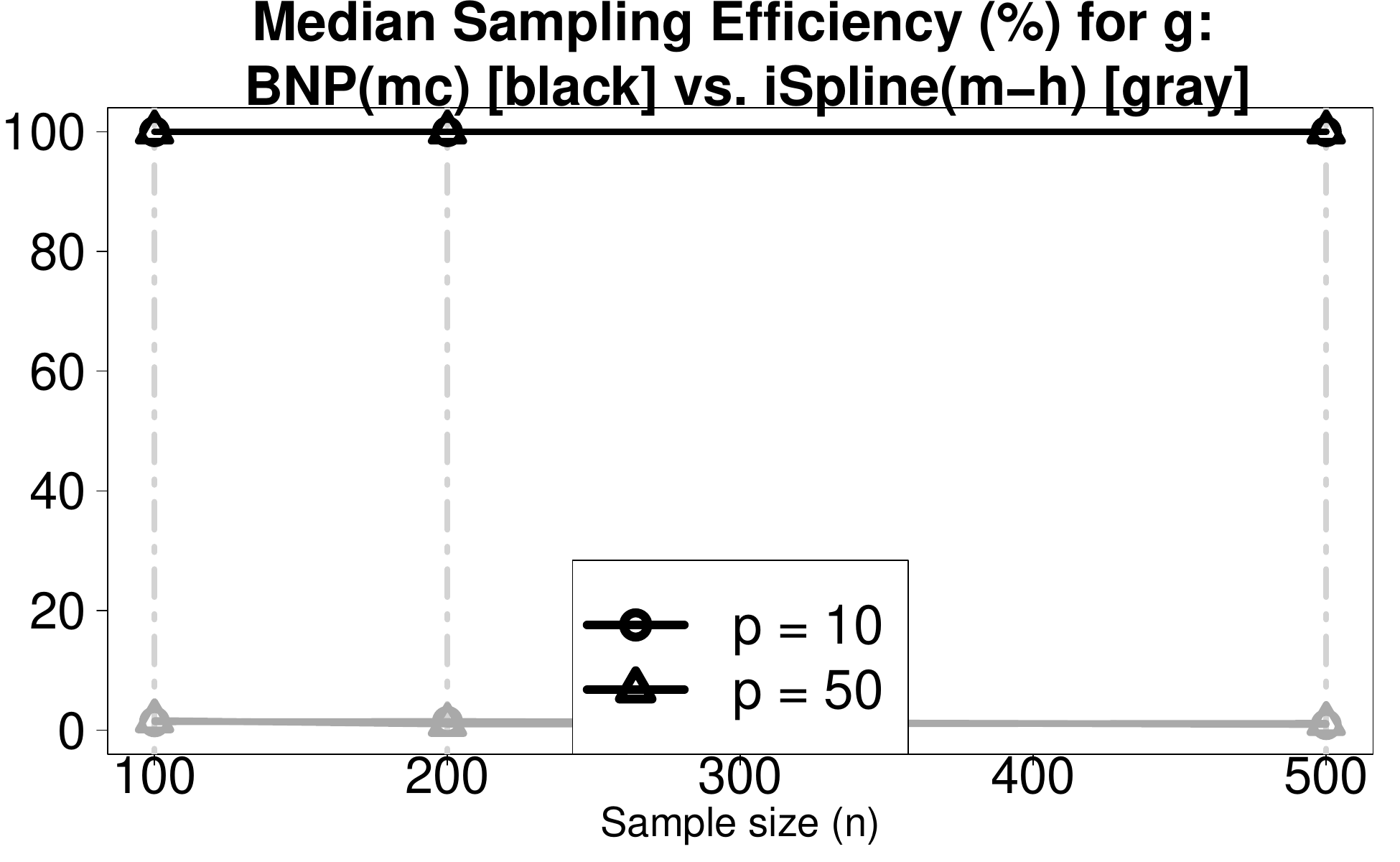}
\includegraphics[width=.45\textwidth]{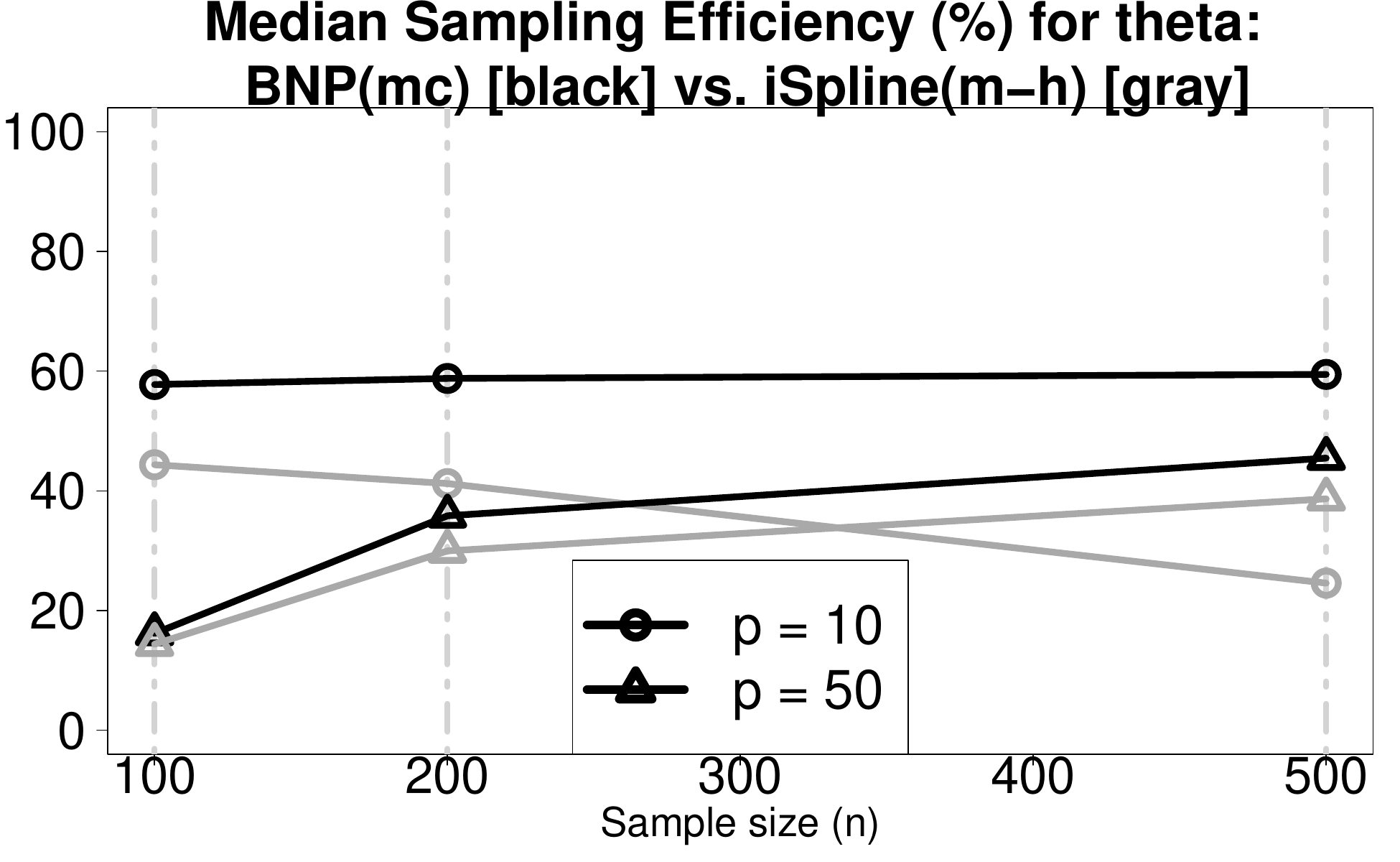}
\caption{\small Median effective sample size for $g$ (left) and $\theta$ (right) reported as the percent of the total number of simulations. Results are computed for 25 synthetic data sets and summarized using medians (non-vertical lines) with $n \in \{100, 200, 500\}$ and $p\in\{10,50\}$. The raw computing times are comparable and omitted for brevity. The efficiency gains  for $p(g \mid y)$ are especially noteworthy. 
%The proposed Monte Carlo sampler for $[\theta \mid y]$ demonstrates substantial efficiency gains over the Gibbs sampler for $[\theta, z \mid y]$ with increasing $p$, especially when $n$ is moderate. 
\label{fig:computing-g}}
\end{figure}

%; the Monte Carlo sampler for nonlinear regression in Algorithm~\ref{alg:nl-pred} is used for posterior prediction in the supplementary material. 

%In the supplementary material, we also compare the proposed sampling strategy for $[g \mid y]$ (Algorithm~\ref{alg:joint}, step 1) with a Metropolis-within-Gibbs alternative for $[g \mid \theta, z, y]$ under a monotone spline model for $g$; the remaining sampling steps for other parameters are the same in both approaches. As expected, the pairing of the marginalization \eqref{z-cdf} with the fast BB samplers produces important efficiency gains. 

\subsection{Linear regression with synthetic data}\label{sims-reg}
We evaluate the proposed semiparametric model for prediction and coefficient estimation (see Appendix~\ref{sec-sims-add}) using simulated data. We focus on two data-generating scenarios: \texttt{SemiPar}, which uses \eqref{tr}--\eqref{lm} and mimics the challenging features of the real data in Section~\ref{app}, and \texttt{NegBin}, which uses a Negative Binomial distribution as a more standard benchmark. Both versions feature a linear predictor $x_i'\theta^*$, where the $p=10$ covariates are marginal standard normal with $\mbox{Cor}(x_{ij}, x_{ij'}) = (0.75)^{\vert j - j'\vert}$ and the columns are randomly permuted and augmented with an intercept. The true coefficients are given by the intercept $\theta_0^* =\log(1.5)$, true signals $\theta_j^* = \log(1.25)$ for $j=1,\ldots, p/2 = 5$, and $\theta_j^*=0$ for the remaining coefficients. \texttt{SemiPar} generates data $y$ using \eqref{tr}--\eqref{lm} with the rounding operator from Example~\ref{ex-round-count} with $y_{max} = 30$, the transformation \eqref{trans-cdf-approx} computed from the real data in Section~\ref{app}, and $\sigma = 0.25$. Thus, the \texttt{SemiPar} data exhibit zero-inflation, heaping, and boundedness. \texttt{NegBin} generates each $y_i$ from a Negative Binomial distribution with expectation $\exp(x_i'\theta^*)$ and variance $\exp(x_i'\theta^*)\{1 + \exp(x_i'\theta^*)/10\}$, which exhibits moderate overdispersion. In each case, we consider $n \in\{100, 500\}$ and simulate 100 data sets. 

We implement the proposed semiparametric model  with a  $g$-prior ($\psi = n$), including the fully BNP model for $g$ (\texttt{BNP(mc)}) and the point approximation from \eqref{trans-cdf-approx} (\texttt{BNP(approx)}). For quicker computing across many simulations, we use Algorithm~\ref{alg:joint} and the Gibbs sampler in Algorithm~\ref{alg:gibbs} for 1000 iterations (after discarding a burn-in of 1000). This Gibbs sampler is conveniently blocked: $g$ is sampled from the marginal posterior $[g \mid y]$, while the Gibbs steps for $[\theta \mid y, g, z]$ and $[z \mid y, g, \theta]$ use $p$-dimensional and $n$-dimensional blocks, respectively. For comparison, we include the variation of this linear model from Section~\ref{sec-comp} with a monotone I-spline for $g$ (\texttt{iSpline}) and an adaptive Metropolis-Hastings  sampler \citep{Kowal2020a}.  
%Thus, this blocked Gibbs sampler remains an appealing option for posterior inference with model \eqref{tr}--\eqref{lm}, and is deployed for our  simulation studies and data analysis.  %, and is deployed for the more computationally demanding simulation studies (Sections~\ref{sims-reg}--\ref{sims-sparse}) and data analysis (Section~\ref{app}). 
Finally, we include  Poisson and Negative Binomial regression models with the default specifications from \texttt{stan\_glm} (with \texttt{family = poisson} and \texttt{family = neg\_binomial\_2}, respectively) in the \texttt{rstanarm} package in \texttt{R}. 

The posterior predictive distributions  are evaluated on a testing data set $(X^{test}, y^{test})$ distributed identically as the training data $(X, y)$. First, we compute ranked probability scores  (Figure~\ref{fig:rps}). For the \texttt{SemiPar} data, the proposed BNP approaches dominate the Poisson and Negative Binomial competitors for both $n=100$ and $n=500$. The \texttt{NegBin} case is more competitive, yet surprisingly the proposed BNP approaches outperform the Poisson and Negative Binomial models. Further, both \texttt{BNP(mc)} and \texttt{BNP(approx)} improve upon the \texttt{iSpline} competitor, which suggests that the proposed sampling strategy in Algorithm~\ref{alg:joint} delivers empirical benefits for prediction.

%As expected, the \texttt{NegBin} design is more favorable to the Poisson and Negative Binomial models, yet the proposed BNP approaches maintain competitiveness. Perhaps surprisingly,  the BNP approximation \eqref{trans-cdf-approx} either matches or outperforms all competitors across all designs. One explanation is that the approximation \eqref{trans-cdf-approx} is both (i) reasonably accurate in this setting and (ii) more numerically stable than the fully BNP approach, which requires numerical computation of $\tilde F_Z^{-1}$ in Algorithm~\ref{alg:joint}. 

\begin{figure}[h]
\centering
\includegraphics[width=.45\textwidth]{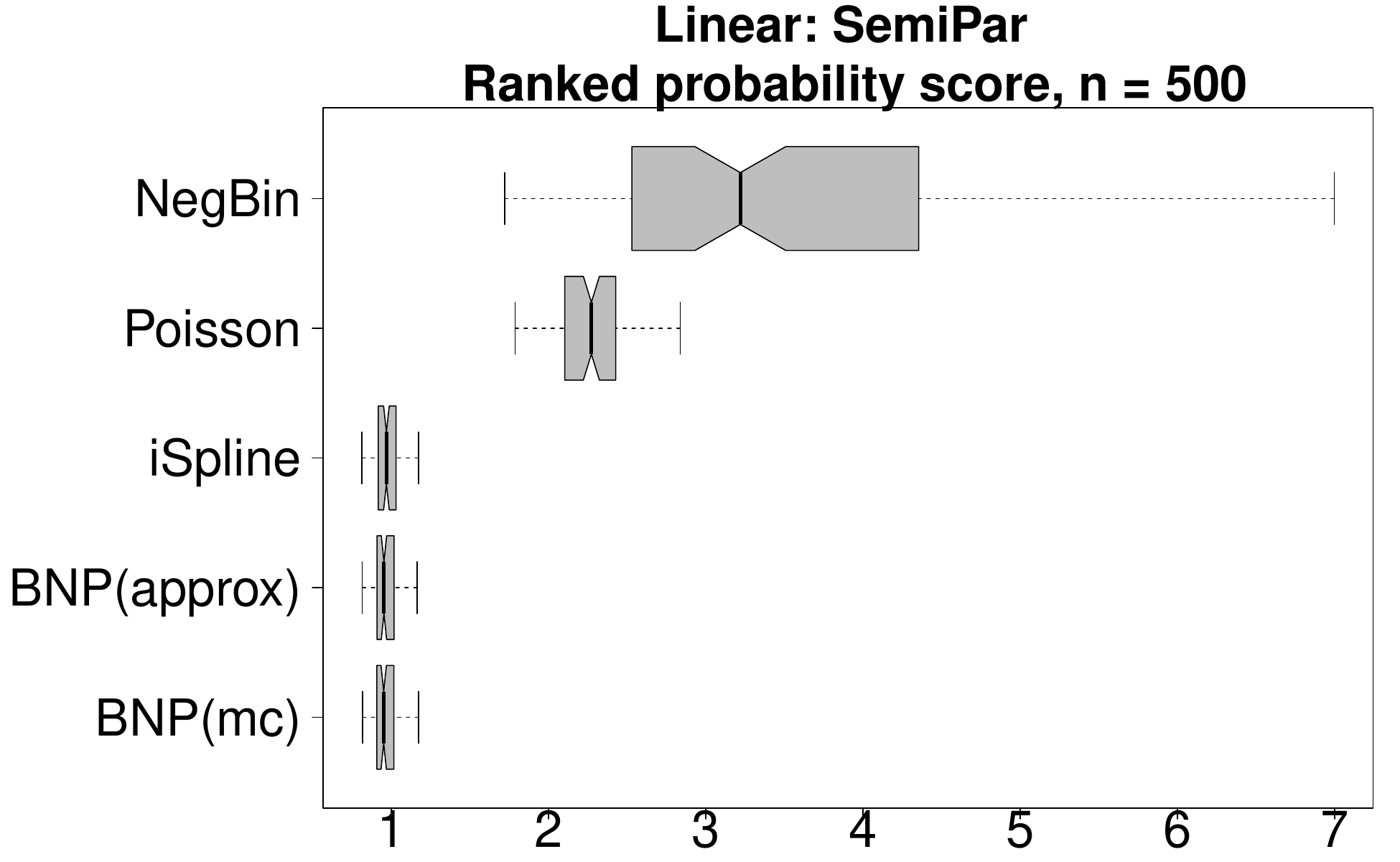}
\includegraphics[width=.45\textwidth]{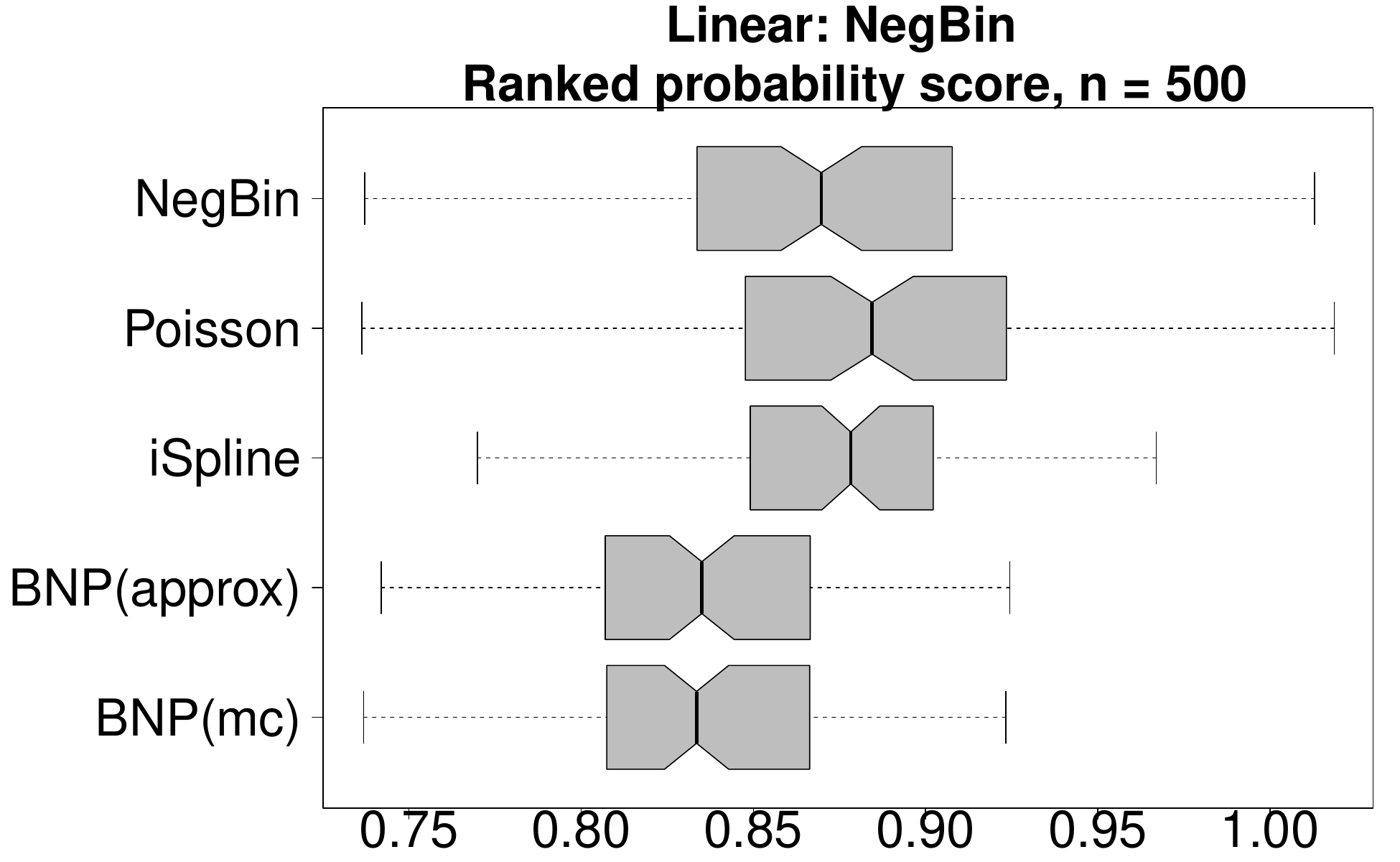}
\includegraphics[width=.45\textwidth]{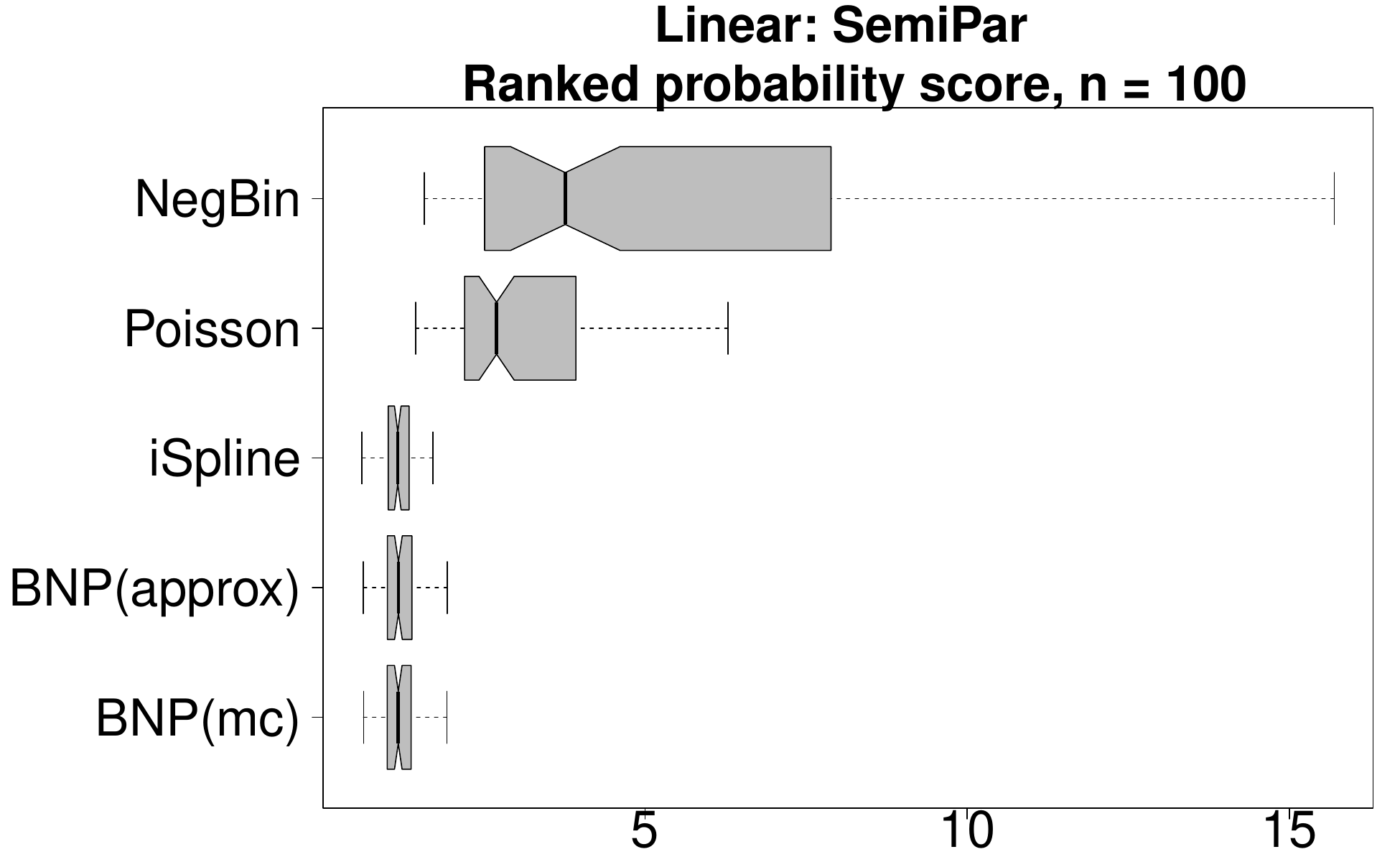}
\includegraphics[width=.45\textwidth]{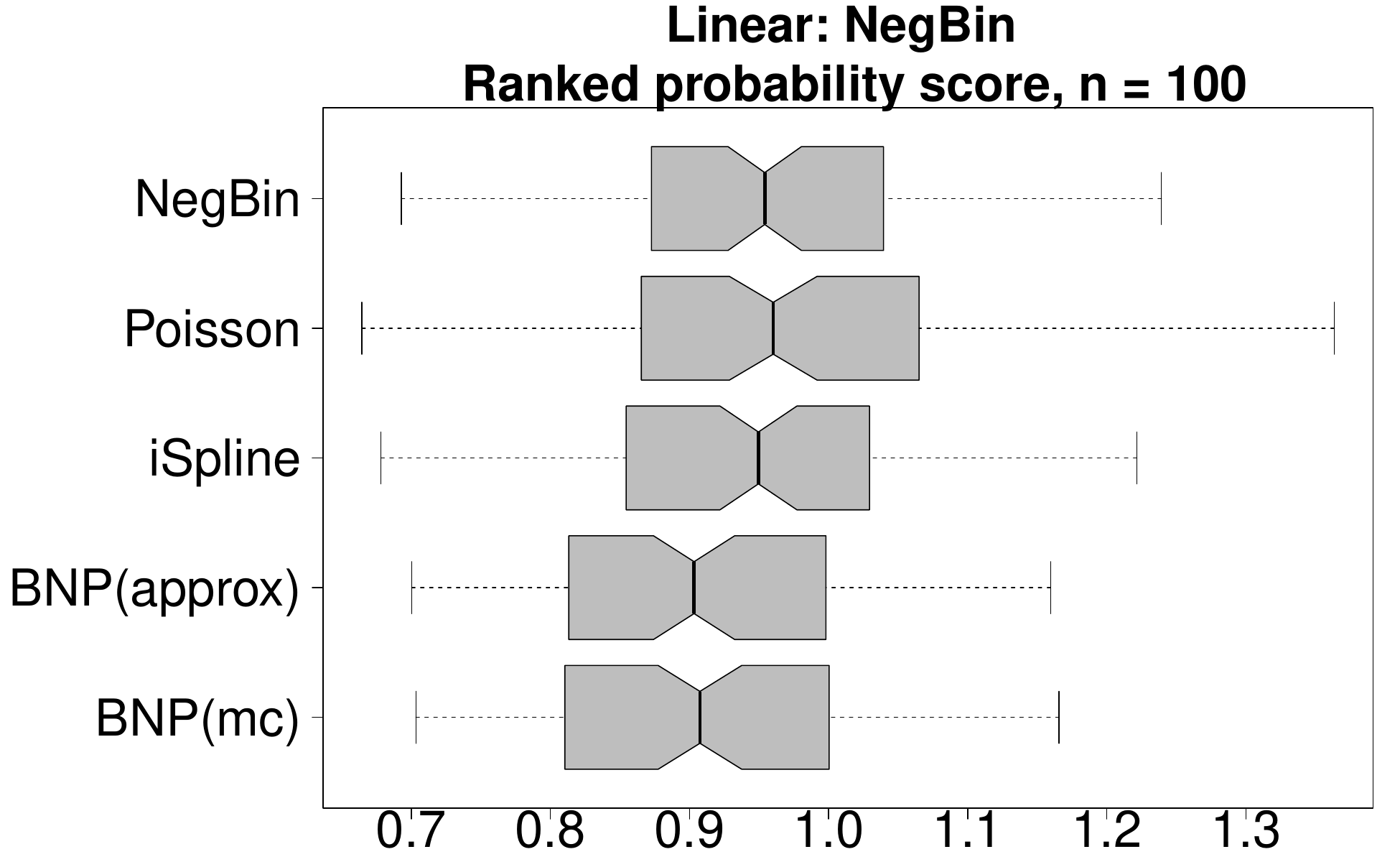}
\caption{\small Ranked probability scores for \texttt{SemiPar} (left) and \texttt{NegBin} (right) data, $n \in \{100,500\}$ (negatively oriented). Nonoverlapping notches indicate significant differences between medians. The proposed BNP approaches are much more accurate for \texttt{SemiPar} data yet still competitive for \texttt{NegBin} data.
\label{fig:rps}}
\end{figure}

Next, we evaluate 90\% prediction  intervals from each model in Figure~\ref{fig:miw}, which reports the mean interval widths and the empirical coverages.  Both \texttt{BNP(mc)} and \texttt{BNP(approx)} provide narrow interval estimates that achieve the nominal coverage. The Poisson model performs quite well for the \texttt{NegBin} design, but fails to maintain nominal coverage for the \texttt{SemiPar} data. Again, the proposed BNP approaches offer moderate to large gains relative to the \texttt{iSpline} alternative. 

\begin{figure}[h]
\centering
\includegraphics[width=.45\textwidth]{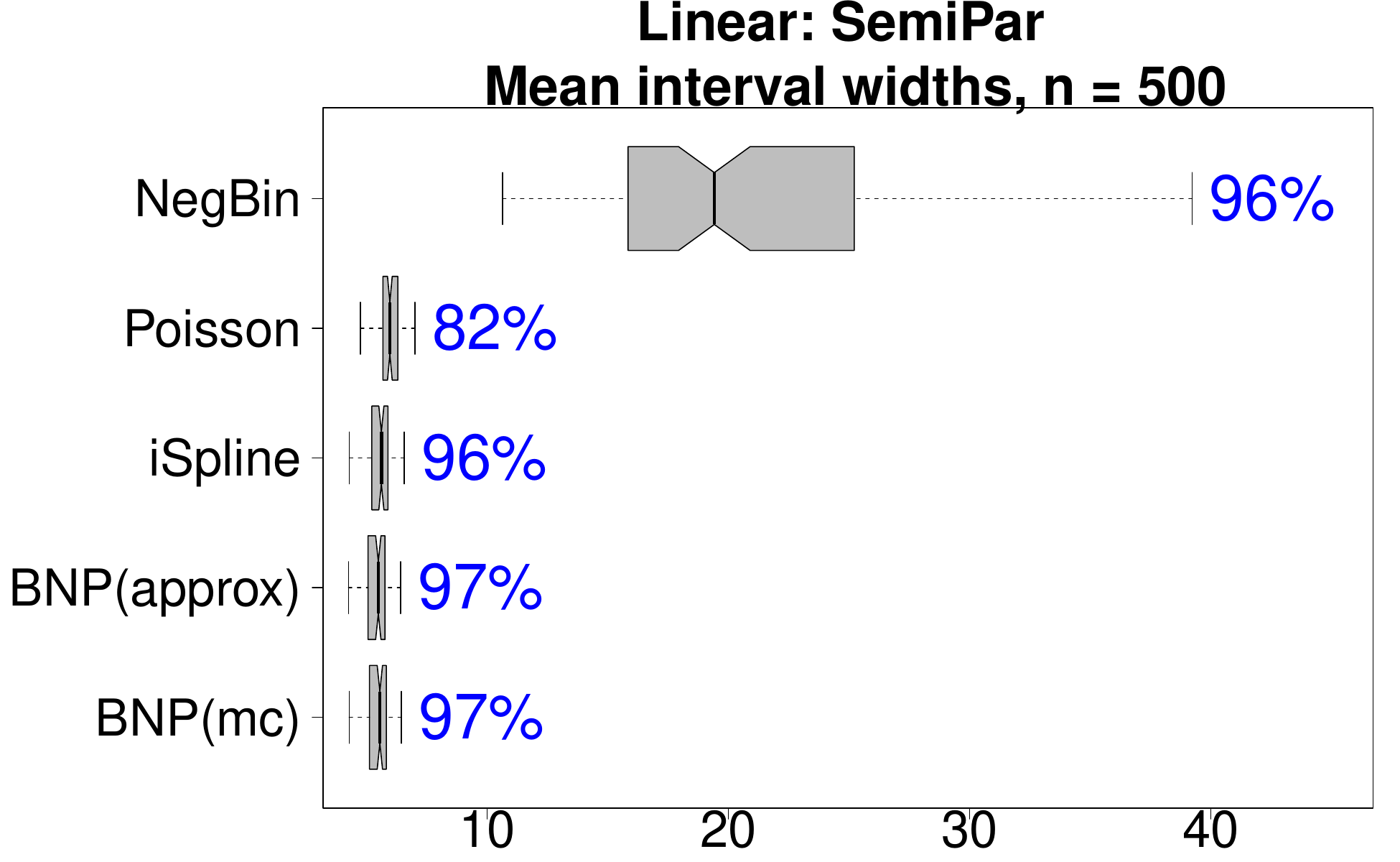}
\includegraphics[width=.45\textwidth]{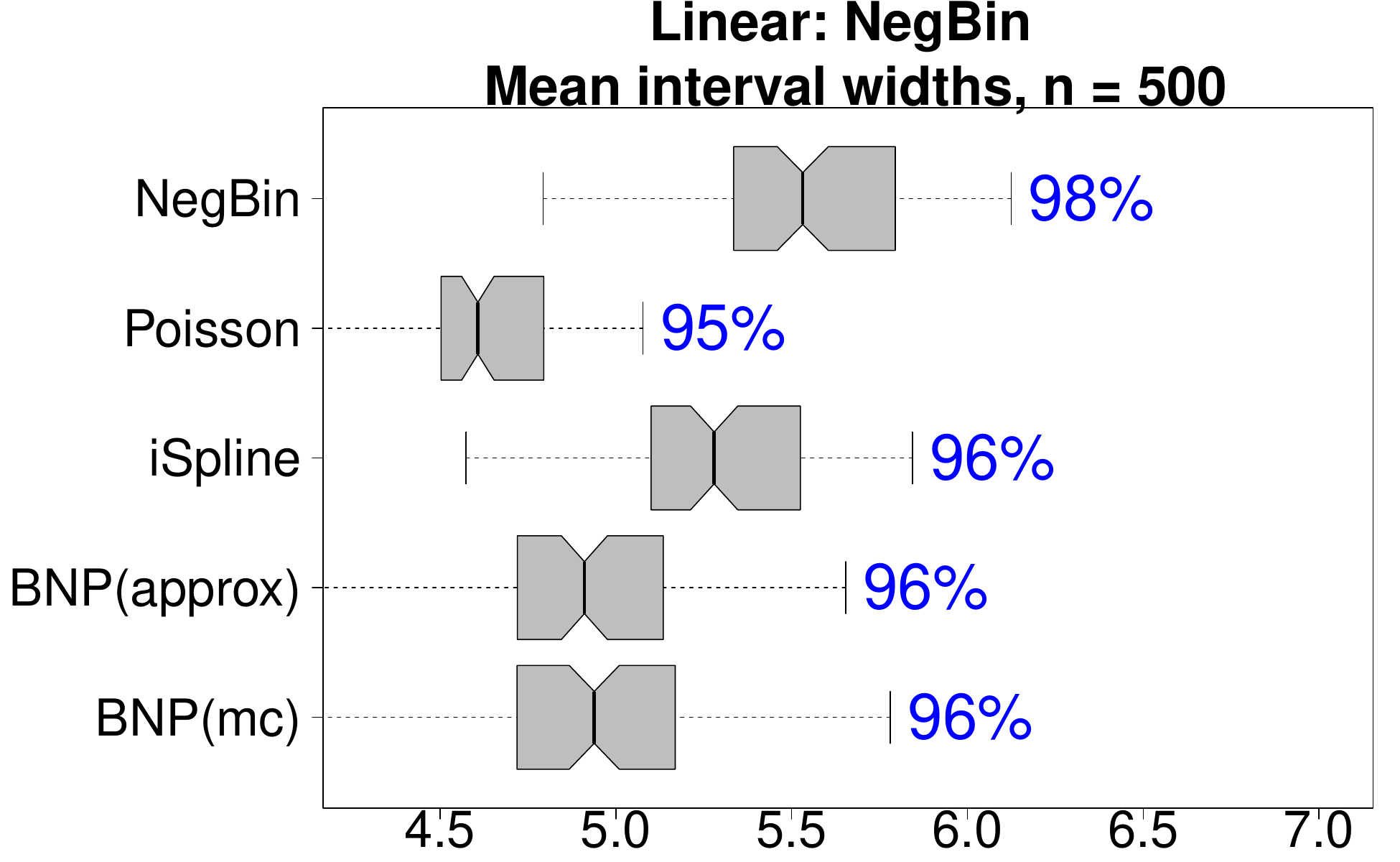}
\includegraphics[width=.45\textwidth]{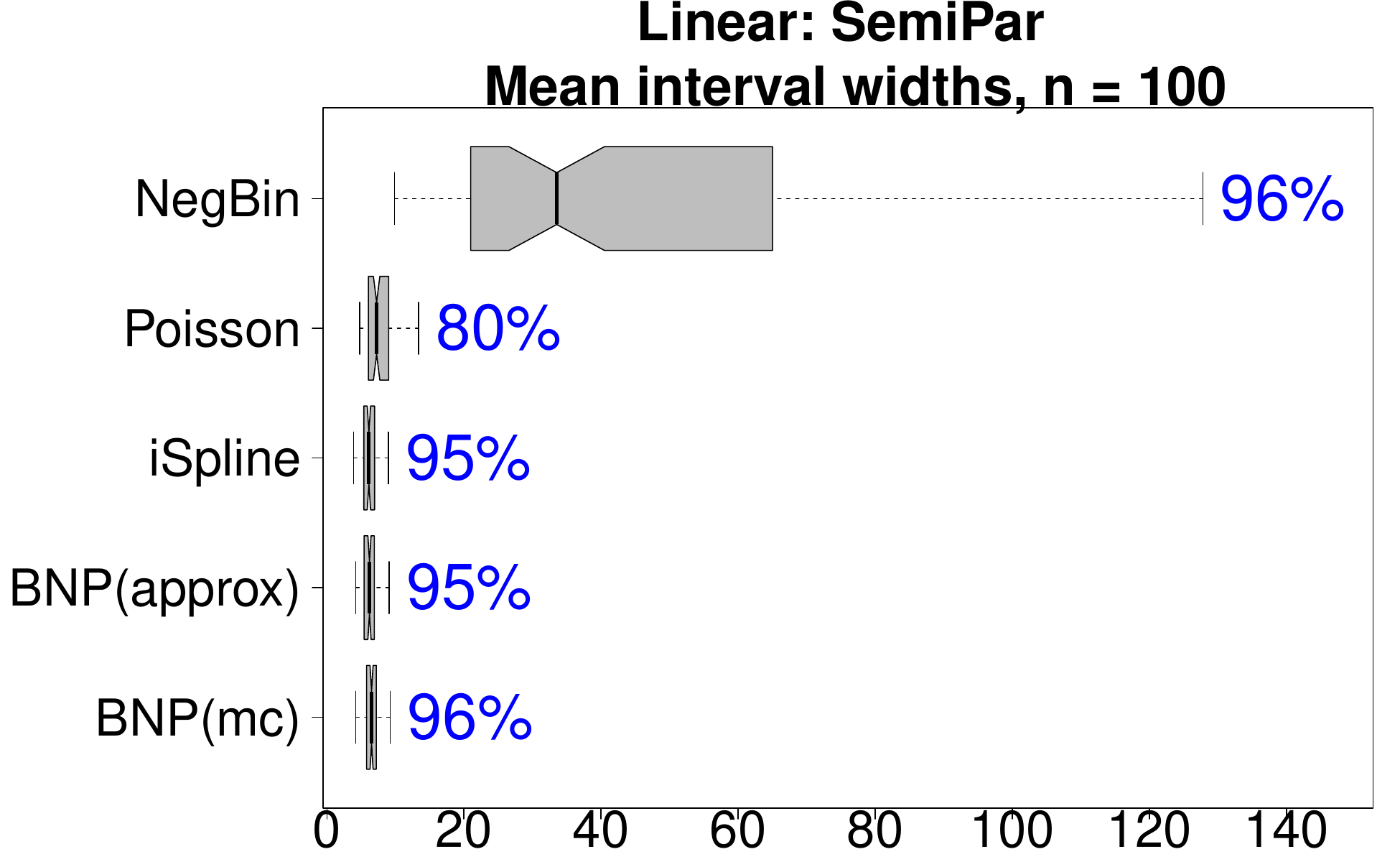}
\includegraphics[width=.45\textwidth]{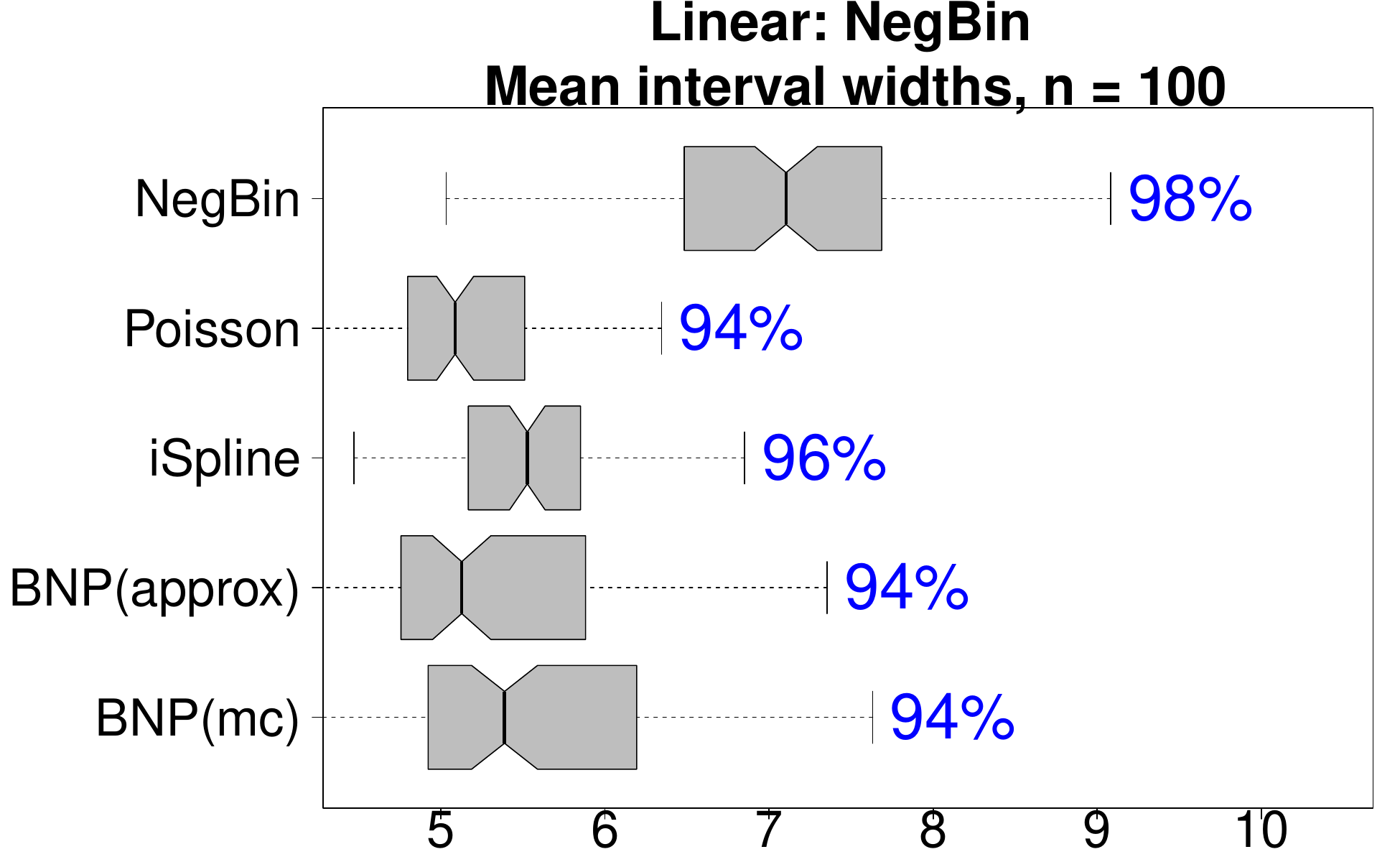}
\caption{\small Mean 90\% prediction interval widths (boxplots) with empirical coverage (annotations) on testing data with
\texttt{SemiPar} (left) and \texttt{NegBin} (right) data, $n \in \{100,500\}$ (negatively oriented). Nonoverlapping notches indicate significant differences between medians. 
The proposed BNP approaches provide narrow intervals with nominal coverage, with notable gains over the I-spline alternative under \texttt{NegBin}.
\label{fig:miw}}
\end{figure}

\subsection{Sparse means estimation for synthetic rounded data}\label{sims-sparse}
Lastly, we evaluate the selection capabilities of the sparse means model under rounding. The goal is to assess (i) whether an omission of the rounding operator impacts selection ability and (ii) whether the transformation is helpful for this task. Synthetic rounded data are generated as $y_i = \lfloor z_i \rfloor$ with $z_i = \theta_i + \epsilon_i$ and $\epsilon_i  \sim N(0, 1)$, where $\theta_i = \mu \gamma_i$ and $\mu$ denotes the signal strength with variable inclusion indicator $\gamma_i \in \{0,1\}$. We use a small proportion of signals, $\#\{\gamma_i = 1\}/n =0.10$, and a small signal $\mu = 2$. 
For all competing methods, we apply the same sparsity prior \eqref{ssmod} accompanied by  $\pi \sim \mbox{Beta}(1, 1)$  and %$\psi^{-1} \sim \mbox{Gamma}(2, 1)$. 
$\sqrt{\psi} \sim \mbox{Uniform}(0, \sqrt{n})$.

We include two variations of the proposed approach, both with the rounding operator from Example~\ref{ex-round}: one uses the identity transformation (as in the data-generating process) and the other uses a modification of the BNP approximation  \eqref{trans-cdf-approx} with $\hat F_Z(t) = \Phi(t; 0, \sigma^2)$, which approximates  
 $F_Z(t) = n^{-1} \sum_{i=1}^n \Phi(t; 0, \sigma^2\psi \gamma_i + \sigma^2)$ when 
$\gamma_i = 0$ for nearly all $i$. 
%This approximation is particularly appropriate for the sparse means model \eqref{ssmod}:  the CDF \eqref{z-cdf} reduces to $F_Z(t) = n^{-1} \sum_{i=1}^n \Phi(t; 0, \sigma^2\psi \gamma_i + \sigma^2)$, which is well-approximated by  $\Phi(t; 0, \sigma^2)$ when $\gamma_i = 0$ for nearly all $i$, and the BB CDF $\tilde F_Y$ will concentrate at $\hat F_Y$ when $n$ is large. Thus, \eqref{trans-cdf-approx} is a reasonable substitute for the unknown transformation, especially for sparse regimes with large $n$. 
In addition, we include a Gaussian model that omits the rounding operation entirely. To maintain focus on the modeling specifications, we apply identical Gibbs sampling strategies as outlined in Section~\ref{sec-sp}; the Gaussian case simply replaces \eqref{odds} with the relevant Gaussian likelihoods. The additional Gibbs sampling step for all models is  %$[\psi^{-1} \mid y, \theta, \gamma] \sim \mbox{Gamma}\{2 + \#\{\gamma_i = 1\}/2, 1 +  \sum_{i: \gamma_i=1} \theta_i^2/(2 \sigma^2)\}$.
$[\psi^{-1} \mid y, \theta, \gamma] \sim \mbox{Gamma}\{\#\{\gamma_i = 1\}/2 - 1/2, \sum_{i: \gamma_i=1} \theta_i^2/(2 \sigma^2)\}$ truncated to $[1/n, \infty)$. 
Lastly, we fix $\sigma$ at the MLE from a \texttt{kmeans} model with two clusters. %Due to the computational demands of the Gibbs sampler from Section~\ref{sec-sp}, we consider only the approximate BNP model \eqref{trans-cdf-approx}, which eliminates the additional computations in step 1 of Algorithm~\ref{alg:joint} and allows comparisons across many simulation replicates. %This approximation also matched or exceeded the performance of the fully BNP version for the simulations in Section~\ref{sims-reg}. 

The primary inferential target is $p(\gamma_i = 1 \mid y)$, which we use to select the nonzero means. The  receiving operator characteristic (ROC) curves for this selection mechanism are in Figure~\ref{fig:roc-auc}. Perhaps most surprising, the proposed \texttt{BNP(approx)} provides the best selection capabilities by a wide margin. Comparing the identity models with and without rounding, we find that the rounding operator offers only minor advantages for selection. Nonetheless, the best performer is from the class of models \eqref{tr}--\eqref{mod}, which is a coherent data-generating process for the discrete (rounded) data.

\begin{figure}[h]
\centering
\includegraphics[width=.49\textwidth]{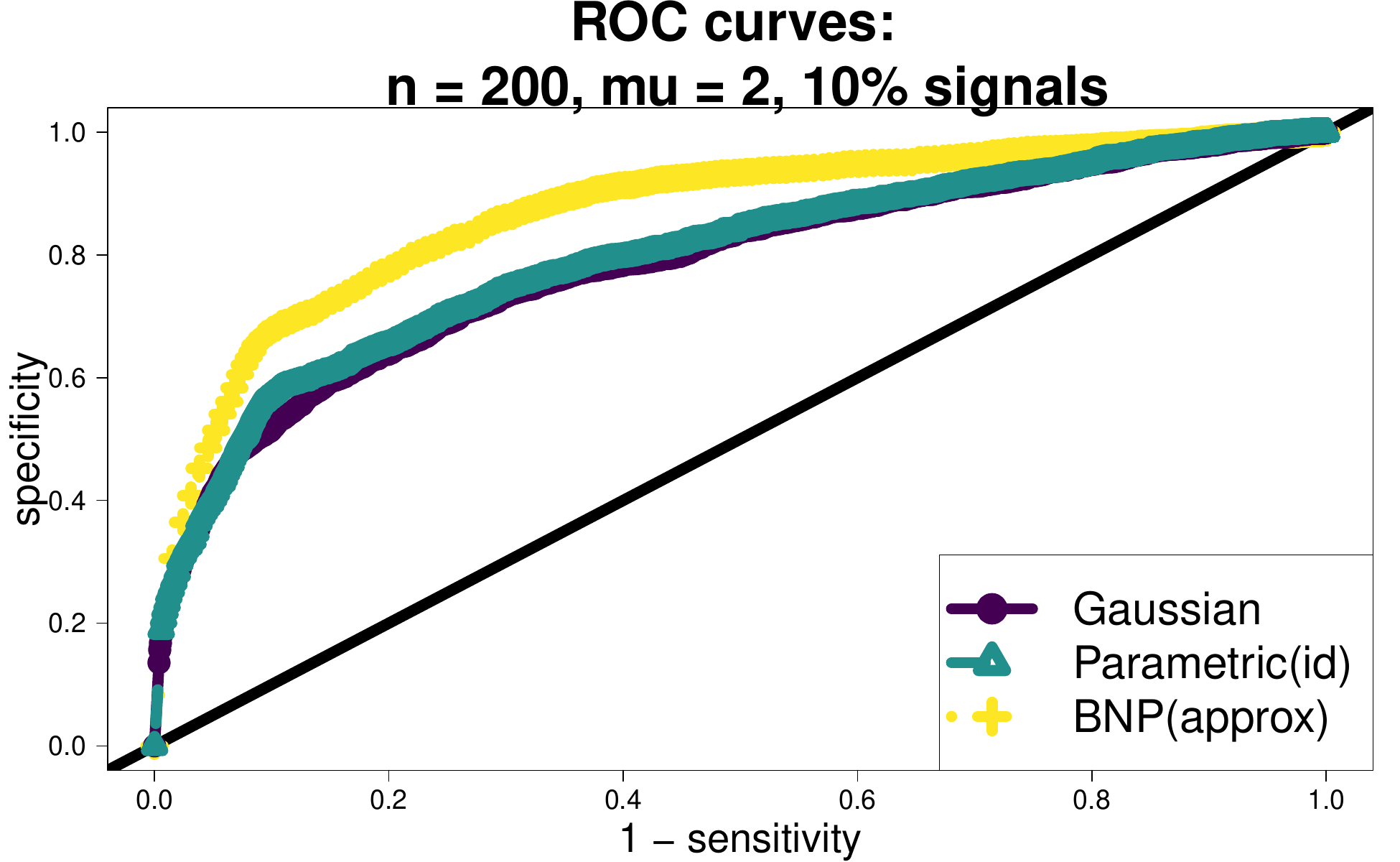}
\includegraphics[width=.49\textwidth]{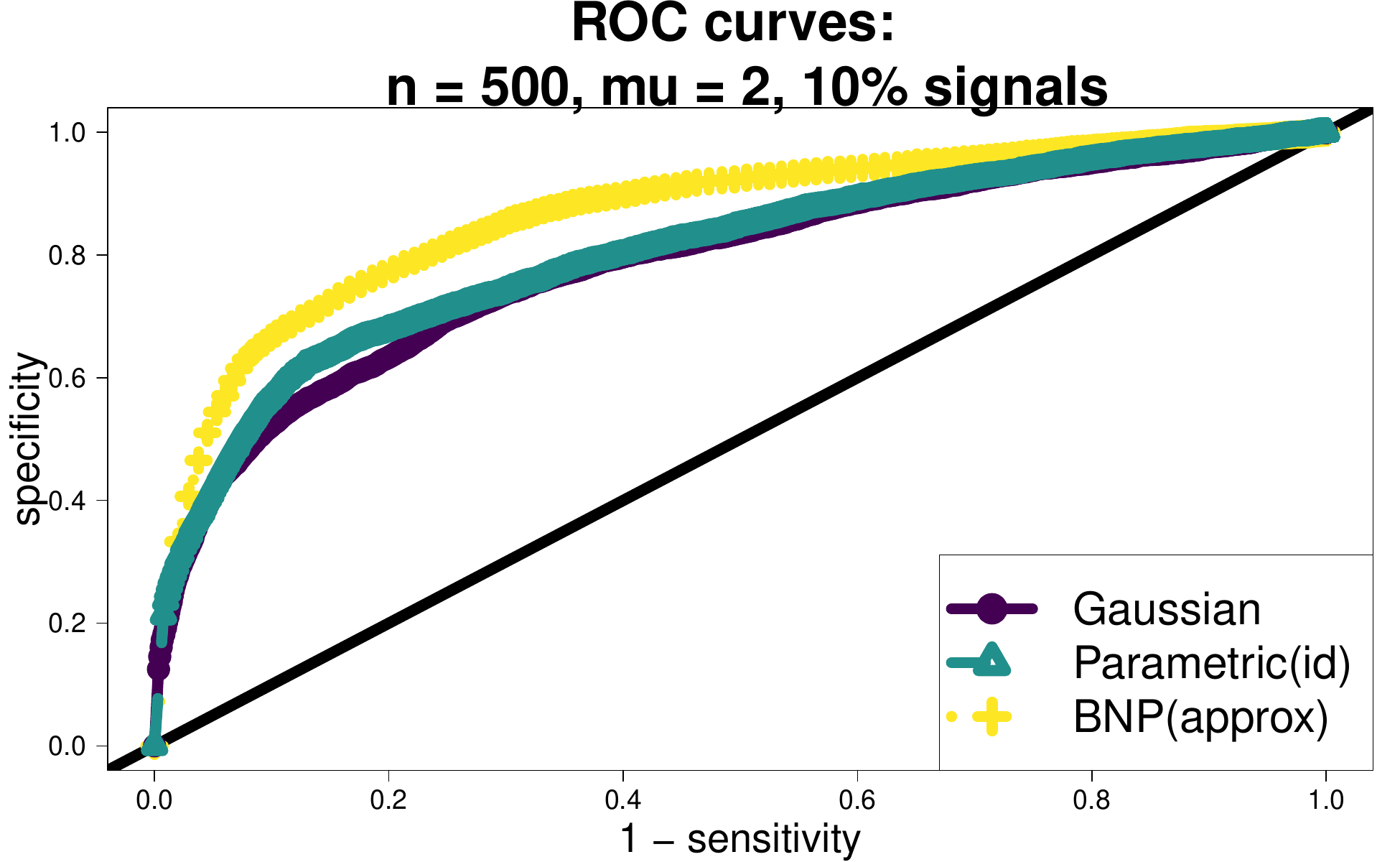}

\caption{\small ROC curves for $n\in\{200, 500\}$ for sparse means selection with rounded data. The rounding operator offers minor advantages for selection, while the proposed semiparametric approach with the approximation \eqref{trans-cdf-approx} dominates both competitors. The results persist for 5\% signals.
\label{fig:roc-auc}}
\end{figure}

\section{Application to self-reported mental health data}\label{app}
We apply the proposed methodology to self-reported mental health data from the National Health and Nutrition Examination Survey (NHANES). This data set %, originally constructed, documented, and analyzed by \cite{Kowal2021b} from a frequentist perspective, 
contains responses $y$ to the question  ``For how many days during the past 30 days was your mental health not good?" and includes key demographic, socioeconomic, behavioral, and health-related covariates (see Appendix~\ref{sec-app-add}). The goal is to conduct a joint analysis of these important factors to determine their associations with self-reported mental health. However, these data present significant distributional challenges  (Figure~\ref{fig:dmhng}):  zero-inflation (57\% for $y=0$), boundedness and endpoint inflation (5.7\% for $y = y_{max} = 30$), and heaping (e.g., 3.3\% for $y=10$, but 0.4\% for $y=9$ and 0\% for $y=11$). 

We implement the proposed semiparametric model \eqref{tr}--\eqref{lm} with a  $g$-prior ($\psi = n$), including both the fully BNP model for $g$ and the approximation in \eqref{trans-cdf-approx}. The continuous covariates are centered and scaled prior to model-fitting.  Given the size of the data set ($n=2311$, $p = 24$), we use Algorithm~\ref{alg:joint} with the Gibbs sampler in Algorithm~\ref{alg:gibbs}. The samplers are efficient: for 6000 MCMC iterations (discarding 1000 as a burn-in), the sampler with the fully BNP model for $g$ required 5 minutes, while the sampler with approximation \eqref{trans-cdf-approx} required only 42 seconds  (using \texttt{R} on a MacBook Pro, 2.8 GHz Intel Core i7); traceplots suggested convergence and large effective sample sizes for $\theta$ (median ESS of $\{\theta_j\}_{j=1}^p$ exceeded 3000). In addition, we include a Negative Binomial regression model with the default specifications from \texttt{stan\_glm} (with \texttt{family = neg\_binomial\_2}) in the \texttt{R} package \texttt{rstanarm}. 

Posterior predictive diagnostics for these models are presented in Figure~\ref{fig:ecdf}. The posterior predictive functional is the ECDF of the data, which assesses whether these models can capture the challenging marginal distributional features in $y$. The fully BNP approach offers excellent performance, and suitably describes the heaping behavior and endpoint inflation. As expected, the approximation \eqref{trans-cdf-approx} conveys less uncertainty since the transformation is fixed at a point estimate, yet nonetheless maintains the same success as the fully BNP model. By comparison, the Negative Binomial model exhibits clear bias and fails to account for heaping.  We emphasize that the large jumps in the ECDF are \emph{not} simply due to the discreteness in the data, nor are they an artifact of the ECDF functional: these effects illustrate the heaping in the data, and close inspection of Figure~\ref{fig:ecdf} reveals their substantial size compared to the other much smaller increments.

\begin{figure}[h]
\centering
\includegraphics[width=.32\textwidth]{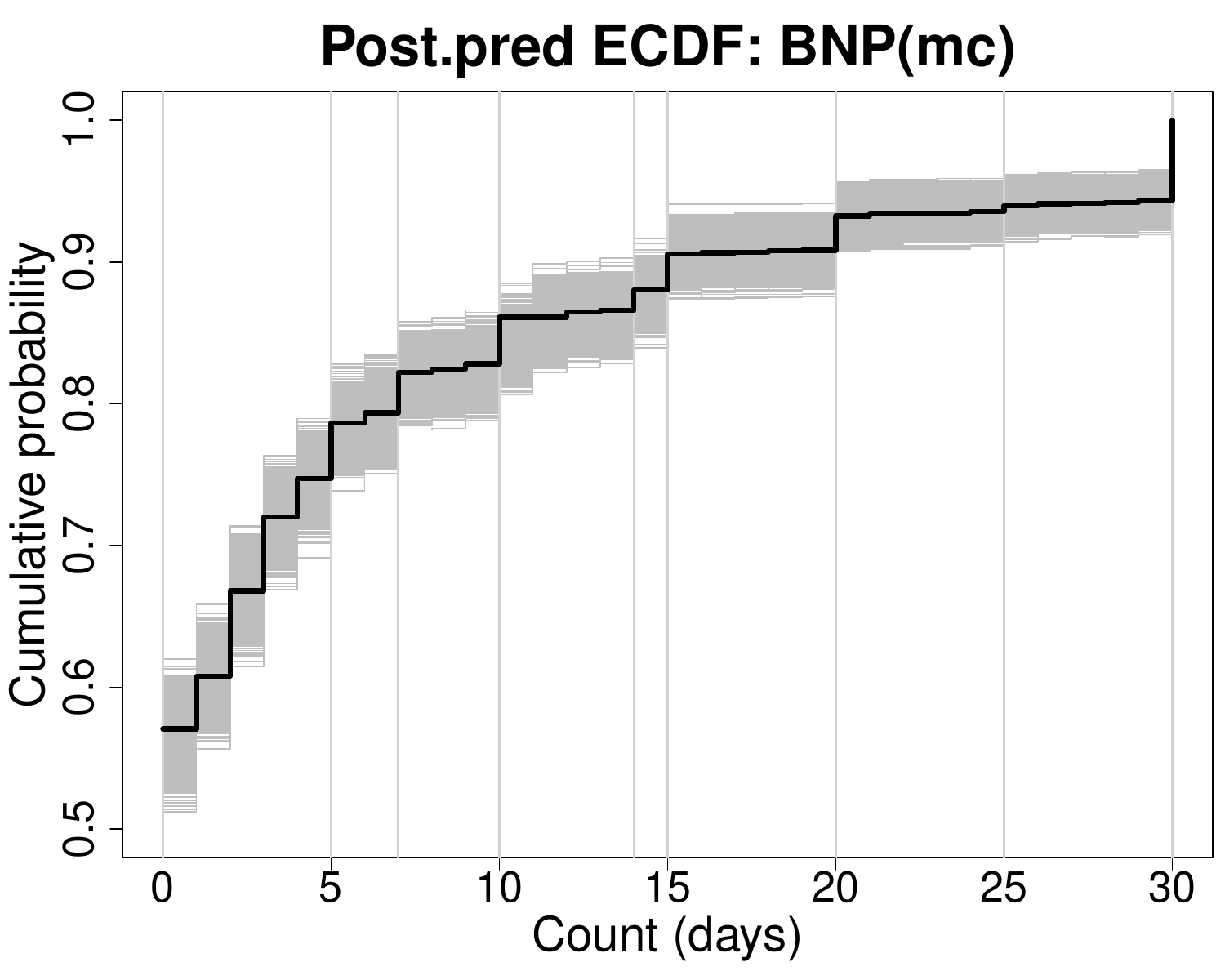}
\includegraphics[width=.32\textwidth]{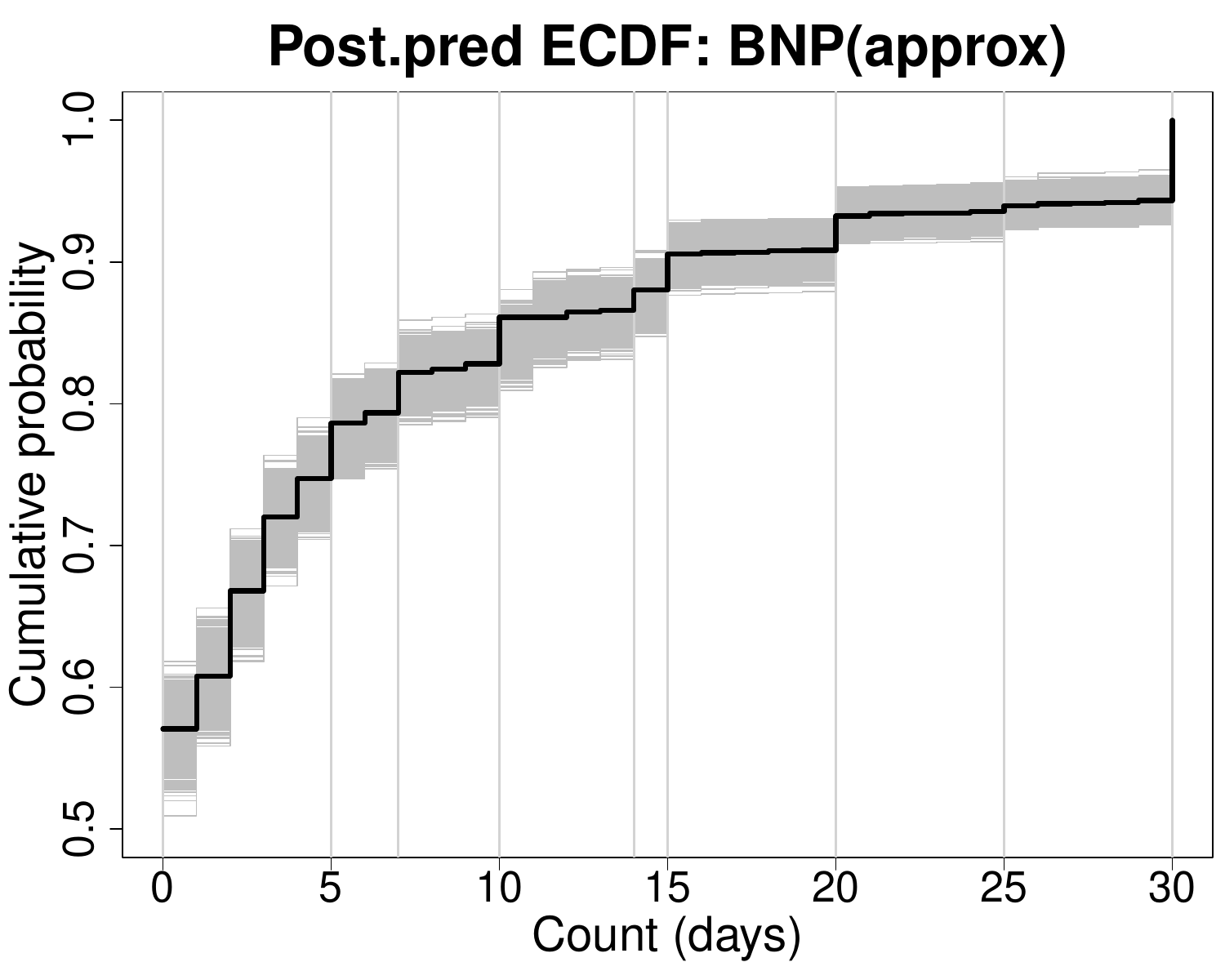}
\includegraphics[width=.32\textwidth]{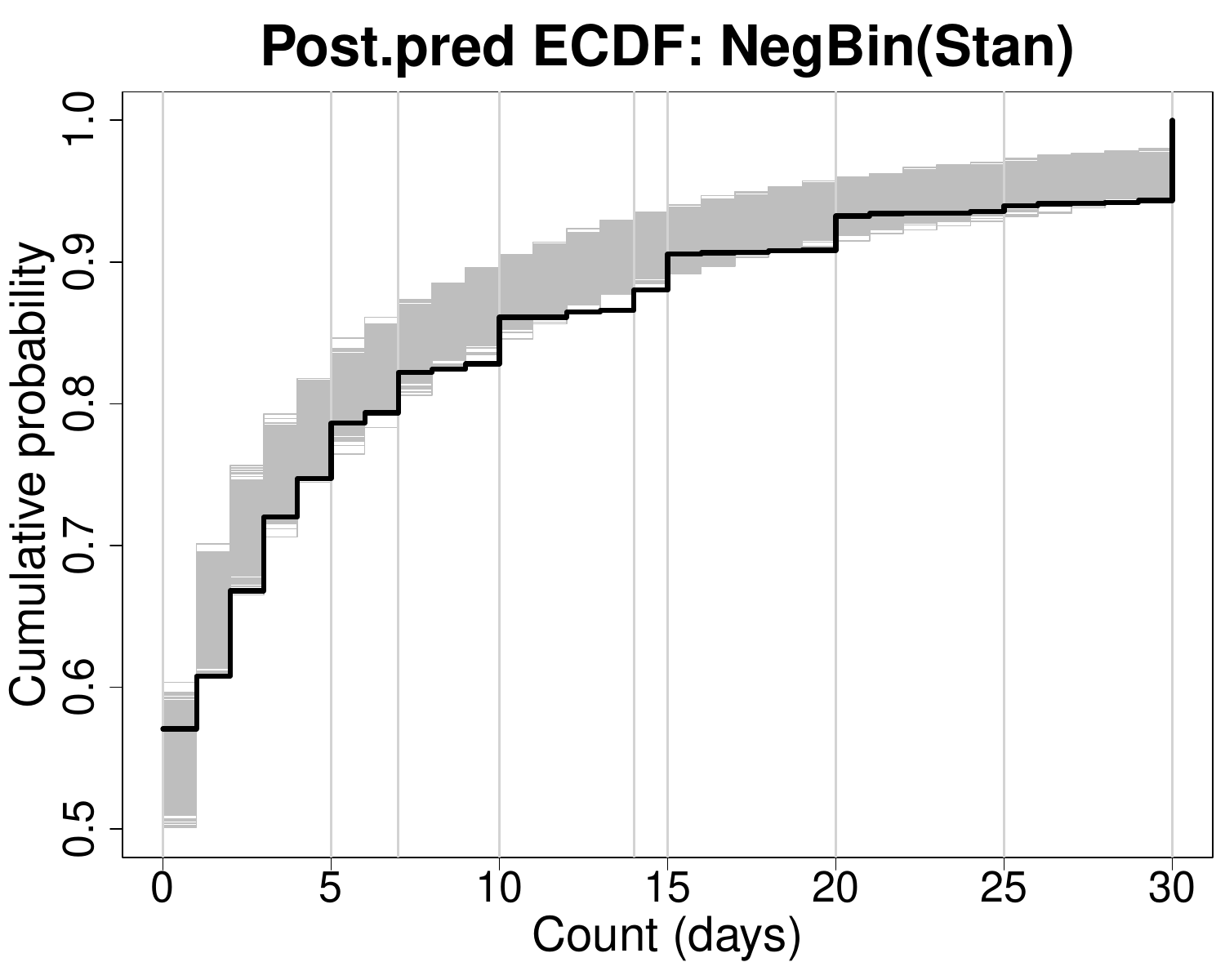}
\caption{\small Posterior predictive draws of the empirical CDF (gray) with the observed value (black) for the proposed BNP model (left), the proposed approximation \eqref{trans-cdf-approx} (center), and 
negative binomial regression (right). The proposed approaches suitably capture the key features in the self-reported mental health data, especially zero-inflation, heaping (vertical lines), and endpoint inflation, while the negative binomial model is biased and fails to account for heaping.
\label{fig:ecdf}}
\end{figure}

Finally, we present the posterior expectations and 90\% credible intervals for the regression coefficients in Figure~\ref{fig:coef}. The fully BNP version and the approximation \eqref{trans-cdf-approx} produce nearly identical results: more days with poor mental health are self-reported by females, low income individuals, drug (alcohol, nicotine, marijuana, and hard drugs) users, and individuals with diabetes or high blood pressure, while the opposite effect is observed for non-White, wealthier, uninsured, and older individuals. For clarity, this figure excludes the intercept and the primary sampling unit, which was included to help adjust for possible geographic clustering in the survey responses.

\begin{figure}[h]
\centering
\includegraphics[width=.8\textwidth]{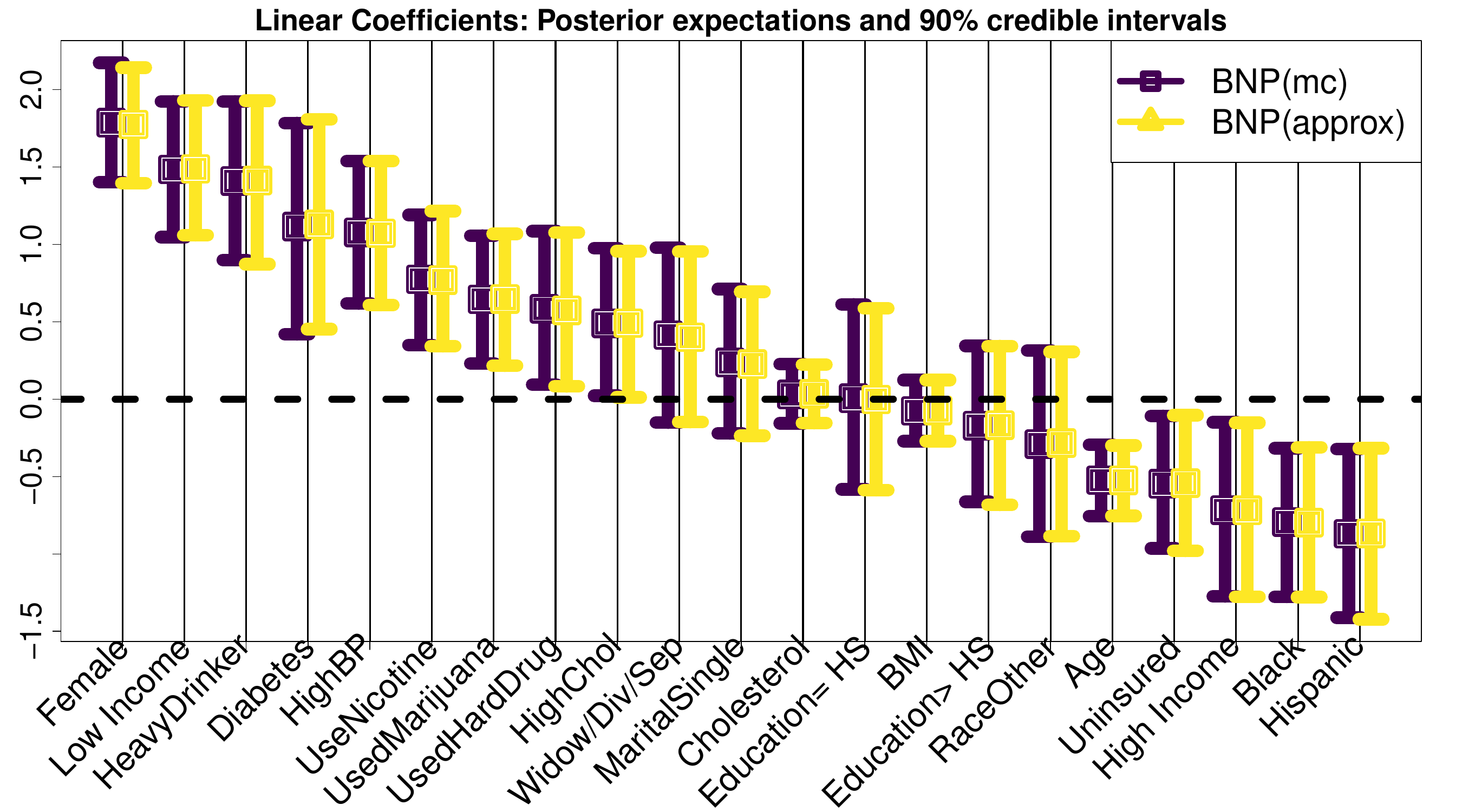}
\caption{\small Posterior expectations and 90\% credible intervals for $\theta$ under model \eqref{tr}--\eqref{lm} for the self-reported mental health  data. More poor mental health days are self-reported by females, low income individuals, drug (alcohol, nicotine, marijuana, and hard drugs) users, and individuals with diabetes or high blood pressure; the opposite effect is observed for non-White, wealthier, uninsured, and older individuals. 
\label{fig:coef}}
\end{figure}

\section{Conclusion}\label{disc}

%Discrete data models for count and rounded data are often constructed by coarsening (or rounding) a latent continuous data model. For this broad class of discrete data models, we have introduced conjugate priors with  (pseudo) closed-form posteriors. These results are directly useful for analytical characterization of the posterior distribution and  Monte Carlo simulation for efficient  posterior and predictive inference.  We developed customized Monte Carlo simulation algorithms for linear and nonlinear regression, prediction, and model averaging, along with a stochastic search algorithm for the sparse means problem with count or rounded data. Using simulated data, we demonstrated substantial improvements in computational performance,  predictive  accuracy, and variable selection relative to competing methods. 

Operating  within a broad class of semiparametric discrete data regression models, we introduced a novel collection of priors and algorithms that enabled consistent posterior inference and efficient Monte Carlo sampling. This framework incorporated a rounding operator to ensure the correct discrete support of the data-generating process along with a transformation to encourage greater representational ability. Most uniquely, the transformation was modeled nonparametrically using a smoothed Bayesian bootstrap procedure, which delivered posterior consistency under general conditions and admitted a straightforward and efficient Monte Carlo sampler for the marginal posterior. When paired with a latent linear regression model, the model featured conjugate priors, (pseudo) closed-form posteriors, and Monte Carlo sampling for posterior and predictive inference. Conditional on the transformation, the posterior distribution for the regression coefficients was strongly consistent---even under model misspecification. These methods demonstrated utility for linear regression, nonlinear regression via basis expansions, and selection for count and rounded data. The simulation studies highlighted significant advantages in computational efficiency, predictive performance, coefficient estimation, and  selection relative  to state-of-the-art competitors. These methods were applied to a self-reported mental health data and produced exceptional model performance, computational scalability, and insightful results. 

The central ideas in the proposed semiparametric modeling and computing framework are ripe for generalizations, including for continuous data, other latent data models, or alternative marginal (parametric or nonparametric) models for $F_Y$. Thus, these tools may be  useful for consistent and efficient Bayesian semiparametric inference in a wide array of settings.

\subsection*{Acknowledgements}
This material is based upon work supported by the National Science Foundation (SES-2214726).

\appendix

%This supplement includes proofs of all results (Section~\ref{sec-proofs}), 
%analytical derivations for prediction (Section~\ref{sec-pred-an}),  
%Monte Carlo sampling for prediction with the $g$-prior (Section~\ref{a-mc-pred}),
%analytic derivations for model selection
%(Section~\ref{sec-model-sel}),
%additional simulation results (Section~\ref{sec-sims-add}), 
%and additional details about the self-reported mental health data (Section~\ref{sec-app-add}). 
\section{Proofs and technical details}\label{sec-proofs}

%%%%%%%%%%%%%%%%%%%%%%%%

%The name "Dini's second theorem" is mostly used in the French system of mathematics, but the result is due to Pólya.  %https://fr-m-wikipedia-org.translate.goog/wiki/Th%C3%A9or%C3%A8mes_de_Dini?_x_tr_sl=fr&_x_tr_tl=en&_x_tr_hl=en-US

%\subsection{Proofs of the main results}

\begin{proof}(Theorem~\ref{thm-Fz-con})
For weak convergence, it is sufficient to show that $\int u(z) \ d \tilde F_Z(z) \to \int u(z) \ d F_{Z,0}(z)$ as $n\to \infty$ for any continuous and bounded function $u$. By construction, $ \int u(z) \ d \tilde F_Z(z) = \int \int u(z) \ d \tilde F_{Z \mid X=x}(z) \  d \tilde F_X(x) = \int v(x)  \  d \tilde F_X(x)$, where $v(x) = \int u(z) \ d \tilde F_{Z \mid X=x}(z)$ is  continuous by assumption and bounded due to the boundedness of $u$. Because the BB posterior for $\tilde F_X$ converges weakly to $F_{X,0}$ \citep{Ghosal2017}, we equivalently have $\int v(x) \ d \tilde F_X(x) \to \int v(x) \ d F_{X,0}(x) = \int u(z) \ d F_{Z,0}(z)$ as $n\to \infty$, which proves the result. 
\end{proof}

%%%%%%%%%%%%%%%%%%%%%%%%%%%

To prove Theorem~\ref{thm-joint-con}, we first note several existing results.

\begin{theorem}[Pólya]\label{Pólya}
Let $\F(\R)$ be the set of all probability measures over $\R$.
Any sequence of distribution functions that converge weakly to a CDF also converges in the Kolmogorov-Smirnov metric, $d_{KS}(P, Q) = \sup_{x \in \R}\vert P((-\infty, x]) - Q((-\infty, x])\vert$ for $P,Q \in \F(\R)$.
%The Kolmogorov-Smirnov metric on $\F$ is defined as the supremum metric of the corresponding cumulative distribution functions (CDF) $F(x) = P(-\infty, x])$, i.e. $d_{KS}(P, Q) = \sup_{x \in \R}|P((-\infty, x]) - Q((-\infty, x])|$. 
As a consequence, any $d_{KS}$ neighborhood of an atomless $P_0 \in \F(\R)$ contains a weak neighborhood of $P_0$. 
\end{theorem}
Note that Pólya's theorem does not hold for discrete probability measures.  If a sequence of discrete distribution functions $F_n$ converges to $F_0$ weakly, that only implies pointwise, but not uniform, convergence of $F_n$ to $F$.

\begin{theorem}[Dini's second theorem (Pólya)] \label{dini}
Let $F_n:\R\to\R$ be a sequence of continuous and monotonically increasing functions. Then pointwise convergence of $F_n$ on an interval $[a, b]$ to a continuous function $F$ implies uniform convergence. 
\end{theorem}
The following lemma reveals the implication of the weak convergence on the generalized inverse function of CDFs that is useful in the proof of Theorem~\ref{thm-joint-con}.
\begin{lemma}  \label{inverse-convergence}
For any sequence $F_n$ of continuous and monotonically increasing CDFs that converge weakly to a continuous CDF $F$, the sequence of inverse CDFs $F_n^{-1}$ converges uniformly to $F^{-1}$ on every compact subset of $(0,1)$. 
\end{lemma}
\begin{proof}
Suppose $F_n$ is a sequence of continuous, monotonically increasing CDFs that converges weakly to a continuous function $F$.
By Lemma 21.2 of \cite{VanderVaart2000}, the weak convergence of $F_n$  to $F$ is equivalent to the weak convergence $F_n^{-1}$ to $F^{-1}$, where $F^{-1}\!:(0,1)\to\mathbb{R}$ with $F^{-1}(p) = \inf\{x: F(x) \ge p\}$ and similarly for $F_n^{-1}$. For each positive integer $n$,  $F_n^{-1}$ is continuous and monotonically increasing.  Consequently, $F_n^{-1}$ converges weakly to $F^{-1}$ if and only if $F_n^{-1}$ converges pointwise to $F^{-1}$. By applying Dini's second theorem, we obtain that $F_n^{-1}$ converges uniformly to $F^{-1}$ on every compact subset of $(0,1)$.  
\end{proof}

\begin{proof}(Theorem~\ref{thm-joint-con})
Define $\F(\R)$ as the set of CDFs of all probability distributions over $\R$ that have continuous CDFs and $\F(\Z)$ as the set of CDFs of all probability distributions over $\Z$.  The space $\F(\R)$ is complete and metrizable under   $\|\cdot\|_\infty$, where $\|F_1 - F_2\|_\infty = \sup_{x \in \R}\vert F_1(x) - F_2(x)\vert$ for $F_1, F_2\in \F(\R)$. 
The space $\F(\Z)$ is a subspace of $\M(\Z)$, i.e., the space of monotonically increasing functions over $\Z$, and both  $\F(\Z)$ and $\M(\Z)$ are complete and metrizable under the weak topology. Let $d_w$ denote one such metric.% with metric $d_w$.  % we mean pointwise convergence

%semimetric $d_w(f,g) = \sum_{i =0}^\infty 4^{-i} (|f(i) - g(i)| + |f(-i) - g(-i)|)$. 
%{\color{blue} Is this necessary? Or can we just say that these spaces are metrizable and give a reference? Not necessary! we can just say these spaces are metrizable}

Let $F_{Z,0}  \in \F(\R), F_{Y,0} \in \F(\Z)$ be the true distribution of $z, y$, respectively.   
For simplicity, consider $a_j = j$, so the transformation \eqref{trans-cdf} may be expressed as $
    \tilde g_0(j + 1) = F_{Z,0}^{-1}  \{F_{Y,0}(j)\}$ for $j \in \mathbb{Z}$. 
Define the mapping $\rho_n: (\F(\R), \|\cdot\|_\infty) \times (\F(\Z), d_w) \to ( \M, d_w)$ by $\rho_n(F_Z, F_Y) = F_Z^{-1} \{u_n F_Y + \ell_n\}$, where $u_n = n/(n+1)$ and $\ell_n = 1/(2n + 1)$, and let   $\tilde g_{0, n}: \Z \to \R$ by $\tilde g_{0, n}(j+1) =\rho_n(F_{Z,0},F_{Y, 0})(j+1)$. Since $F_{Z,0}$ is continuous and $u_n \to 1$, $\ell_n \to 0$, it follows that $ \tilde g_{0, n}(j) $ converges to $\tilde g_0(j) = F_{Z,0}^{-1}\{ F_{Y,0} (j)\}$ for every $j \in \Z$.  %{\color{blue} Where does the $1-1/n$ come in? It could be $\frac{n}{n+1}$ as in Algorithm 1}

For given $n$, we want to show that $\rho_n$ is a continuous mapping at $(F_{Z,0}, F_{Y, 0})$. Take a converging sequence $(F_{Z, n}, F_{Y,n}) \to (F_{Z,0}, F_{Y, 0})$ from the domain of $\rho_n$. The existence of such converging sequences is guaranteed because $\F(\R)$ and $\F(\Z)$ are complete under the assumed topology.  Each CDF $F_{Z, n}$ is continuous and monotonically increasing. %Let $y_n^{(1)} = \min\{j: F_{Y,n}(j) > 0\}$. 
Since $F_{Z,n}$ converges to $F_{Z,0}$ uniformly by construction, % essentially by construction
the sequence of inverses $F_{Z,n}^{-1}$ converges to $F_{Z,0}^{-1}$ uniformly on $[ \ell_n, u_n]$ by Lemma \ref{inverse-convergence}. 
Then for any integer $j$, 
\begin{align*}
    \big\vert \rho_n(F_{Z, n}, F_{Y,n})(j + 1) - \tilde g_{0,n} (j+1)\big\vert 
    &= \big\vert F_{Z,n}^{-1} \{u_nF_{Y,n}(j) +  \ell_n\} - F_{Z,0}^{-1} \{u_n F_{Y,0}(j) +  \ell_n\} \big\vert  \\
    &\leq \big\vert F_{Z,n}^{-1} \{u_n F_{Y,n}(j) + \ell_n\} - F_{Z,0}^{-1} \{u_n F_{Y,n}(j ) + \ell_n\}\big\vert  + \\
    &\quad \quad \big\vert F_{Z,0}^{-1} \{u_n F_{Y,n}(j) + \ell_n\}- F_{Z,0}^{-1} \{u_n F_{Y,0}(j) + \ell_n\}\big\vert.% \\
    %& \rightarrow 0.
\end{align*}
The first term converges to zero by the uniform convergence of $F_{Z, n}^{-1}$. The second term converges to zero because $F_{Y,n}(j)$ converges to $F_{Y,0}(j)$ pointwise %for any integer $j \ge y_n^{(1)}$ 
and $F_{Z,0}^{-1}$ is continuous.  Thus,  $\rho_n$ is continuous at  $(F_{Z,0}, F_{Y, 0})$.

%. {\color{blue} How does this result prove continuity? the notation with $\rho$ and $g$ is a bit confusing to me... it proves sequential continuity}

As a result, for every open neighborhood $\U_w(\tilde g_{0,n})$, there exists a neighborhood $\U_{\infty}(F_{Z,0}) \times \U_w(F_{Y,0})$ for which the image under $\rho_n$ is contained in $\U_w(\tilde g_{0,n})$. % by continuity
By Pólya's theorem (Theorem \ref{Pólya}), $\U_{\infty}(F_{Z,0})$ contains a weak neighborhood $U_{w}(F_{Z,0})$. By Proposition $6.29$ of \cite{Ghosal2017}, the weak consistency of $\tilde F_Z$ (Theorem~\ref{thm-Fz-con}) and $\tilde F_Y$ \citep{Ghosal2017} implies the weak convergence of their product, i.e., $\Pi_n\{\U_{w}(F_{Z,0}) \times \U_w(F_{Y,0})\mid y\} \to 1$. Then,
the sequence of functions $\tilde g_{0,n}$ is eventually in every neighborhood $\U_w(\tilde g_0)$ because $\tilde g_{0,n}$ converges to $\tilde g_0$ pointwise. Therefore, for large enough $n$, $\Pi_n\{\U_w(\tilde g_0)\mid y\} \geq \Pi_n \{\U_w(\tilde g_{0,n})\mid y\}  \geq \Pi_n\{\U_{w}(F_Z) \times \U_w(F_Y) \mid y\}$ which implies that $\Pi_n\{\U_w(\tilde g_0) \mid y\} \to 1$. Hence, the posterior distribution $[\tilde g \mid y]$ is weakly consistent. %\\
%Finally, given that $[\theta \mid y,\tilde g]$ is weakly consistent, $[\theta, \tilde g \mid y]$ is weakly consistent.  
\end{proof}

\begin{proof}(Theorem~\ref{thm-slct})
    Since Theorem~\ref{thm-slct0} implies that $p(\theta \mid y)$ is a selection distribution under model \eqref{tr}--\eqref{lm}, the joint Gaussian assumption for $(z, \theta)$ implies that $p(\theta \mid y)$ is selection normal. The remaining properties follow from \cite{ArellanoValle2006}. 
\end{proof}

\begin{proof}(Theorem~\ref{thm-mc})
    The result follows from the properties of the selection normal \citep{ArellanoValle2006} applied to model \eqref{tr} and the joint Gaussian distribution for $(z, \theta)$ via Theorem~\ref{thm-slct}.
\end{proof}

\begin{proof}(Lemma~\ref{star-lm}) The result follows by direct computation as a consequence of  Theorem~\ref{thm-slct} under the stated model \eqref{lm} and prior $ \theta \sim N( \mu_\theta,  \Sigma_\theta)$.
\end{proof}

\begin{proof}(Algorithm~\ref{alg:g-sim})
        Using Theorem~\ref{thm-mc}, it is sufficient to compute $ \Sigma_z$ and 
        $\Sigma_\theta -  \Sigma_{z\theta}'\Sigma_z^{-1}  \Sigma_{z\theta}$  under the linear model with the $g$-prior. By computing $\Sigma_z =\sigma^2\{ I_n + \psi  X ( X' X)^{-1}  X'\}$ and applying the Woodbury identity, it follows that  $ \Sigma_z^{-1} = \sigma^{-2}  \{ I_n-  \psi(1+\psi)^{-1} X ( X' X)^{-1} X'\}$, so $\Sigma_{z\theta}'\Sigma_z^{-1}  =\psi(1+\psi)^{-1} ( X' X)^{-1} X'$ and  $\Sigma_{z\theta}'\Sigma_z^{-1}\Sigma_{z\theta} = \sigma^2\psi^2(1+\psi)^{-1} ( X' X)^{-1}$. Observing that $\Sigma_\theta -  \Sigma_{z\theta}'\Sigma_z^{-1}  \Sigma_{z\theta} = \sigma^2 \psi(1+\psi)^{-1} (X'X)^{-1}$ shows the result. 
    \end{proof}

\begin{proof}(Lemma~\ref{lem:2})
    The selection normal prior is equivalently defined by $[ \theta \mid  z_0 \in \mathcal{C}_0]$ for $( z_0',  \theta')'$ jointly Gaussian with moments given in the prior. The posterior is constructed similarly: $[ \theta \mid  y] = [ \theta \mid  z_0 \in \mathcal{C}_0,  z \in g(\mathcal{A}_{ y})] = [ \theta \mid  z \in \mathcal{C}_0 \times g(\mathcal{A}_{ y})]$ where $ z_1 = ( z_0',  z')'$. It remains to identify the moments of $( z_1',  \theta')' = ( z_0',  z',  \theta')'$. For each individual block of $ z_0$, $ z$, and  $ \theta$ and the pairs ($ z_0,  \theta$) and ($ z,  \theta$), the moments are provided by the selection normal prior or the selection normal posterior in Lemma~\ref{star-lm}. All that remains is the cross-covariance $\mbox{Cov}( z_0,  z) = \mbox{Cov}( z_0,  X  \theta +  \epsilon) =   \Sigma_{z_0\theta} X'$. % since  $ z_0$ is independent of $ \epsilon$.
\end{proof}

\begin{proof}(Algorithm~\ref{alg:nl-pred})
    Computing the selection normal terms in the usual notation, we have $\Sigma_z = \sigma^2 (\psi X X' + I_n)$, $\Sigma_\theta =\sigma^2 (\psi \tilde X \tilde X' + I_{\tilde n}) $, and 
    $\Sigma_{z\theta} = \psi \sigma^2 X \tilde X'$. Let $D_\psi = (\psi^{-1} I_p + X'X)^{-1} =  \mbox{diag}\{\psi /(1 + \psi d_j)\}$, so $\Sigma_z^{-1}  = \sigma^{-2}(I_n - X D_\psi X')$ by the Woodbury identity. Next, we derive  $\Sigma_{z\theta}'\Sigma_z^{-1}  = \psi \tilde X \{X' - (X'X) D_\psi X'\}$ and  $\Sigma_{z\theta}'\Sigma_z^{-1}\Sigma_{z\theta} = \psi^2 \sigma^2 \tilde X \{(X'X) - (X'X)D_\psi (X'X)\} \tilde X'$, so 
    $\Sigma_\theta -  \Sigma_{z\theta}'\Sigma_z^{-1}  \Sigma_{z\theta} = \psi \sigma^2  \tilde X [ I_p -  \psi \{(X'X) - (X'X) D_\psi (X'X) \}] \tilde X' +   \sigma^2  I_{\tilde n}$. Since $X'X = \mbox{diag}(d_j)$, the term in the square brackets is diagonal with elements $1 - \psi \{d_j - d_j^2 \psi /(1  + \psi d_j)\} = 1/(1+  \psi d_j)$. Applying Algorithm~\ref{alg:g-pred}  shows the result. 
    \end{proof}

%\begin{proof}
%    \emph{(Lemma~\ref{fixme})}
%\end{proof}

\begin{proof}(Theorem~\ref{thm-theta-con})
Consider the sequence of functions $
    f_n(\theta) = -n^{-1} \sum_{i=1}^n \log p_\theta(y_i \mid x_i)$ and let $f(\theta) = - \mathbb{E}_{P_0} \log p_\theta(y \mid x)$, where $p_\theta(y \mid x) = \bar \Phi\{g(\mathcal{A}_y); x'\theta, \sigma^2\}$ is the likelihood  and $\mathcal{A}_y$ is determined by $h$ in \eqref{tr}. 
The proof requires establishing the following conditions from Theorem 3 in \cite{Miller2021}, and in particular that they hold almost surely $[P_0]$:
\begin{enumerate}
    \item  $\Pi\{\mathcal{U}_\epsilon(\theta_0)\} > 0$ for all $\epsilon >0$, where   $\mathcal{U}_\epsilon(\theta_0) = \{\theta \in \Theta: \Vert \theta - \theta_0 \Vert_2 < \epsilon\}$; 
    \item $f_n  \to f$ pointwise on $\Theta$;
    
    \item $f_n$ is convex for each $n$;
    \item $\Theta \subseteq \mathbb{R}^p$;
    \item $\theta_0 \in \mbox{int}(\Theta)$; and \item $f(\theta) > f(\theta_0) $ for all $\theta \in \Theta\backslash \{\theta_0\}.$
\end{enumerate} 
Conditions 1, 4, and 5 are satisfied directly by the theorem assumptions. Condition 2 is satisfied because $n^{-1} \sum_{i=1}^n \log p_\theta(y_i \mid x_i) \to  \mathbb{E}_{P_0} \log p_\theta(y \mid x)$ almost surely [$P_0$] by the strong law of large numbers since $\{(x_i, y_i)\}_{i=1}^n \stackrel{iid}{\sim} P_0$, provided that $\mathbb{E}_{P_0} \vert \log p_\theta(y \mid x)\vert < \infty$. This condition is satisfied by the theorem statement and discussed further below. Condition 3 follows because $f_n(\theta) = -n^{-1} \sum_{i=1}^n \log \bar \Phi\{g(\mathcal{A}_y); x'\theta, \sigma^2\}$ and the Gaussian CDF $\Phi$ is log-concave; this is also demonstrated in \cite{Kowal2021b}. Finally, we establish condition 6: by definition, $\theta_0$ satisfies 
$\theta_0 = \arg\min_{\theta \in \Theta} KL(P_0, P_\theta) 
    %= \arg\min_{\theta \in \Theta} \mathbb{E}_{P_0} \log\{p_0(y \mid x)/p_\theta(y \mid x)\} 
    = \arg\min_{\theta \in \Theta} -\mathbb{E}_{P_0} \log p_\theta(y \mid x) 
    = \arg\min_{\theta \in \Theta} f(\theta)$, and the uniqueness of $\theta_0$ implies that $f(\theta) > f(\theta_0)$ whenever $\theta \ne \theta_0$. By Theorem 3 of \cite{Miller2021}, the posterior distribution satisfies $\Pi_n\{\mathcal{U}_\epsilon(\theta_0)\} \to 1$ almost surely $[P_0]$ for all $\epsilon > 0$, i.e., the posterior is strongly consistent at $\theta_0$. 
%$KL(P_0, P_{\theta_0}) < KL(P_0, P_\theta)$ for all $\theta \in \Theta \backslash\{\theta_0\}$.  Then we apply the definition of KL divergence to derive that $E_{\Pi_0}(logp(y| \theta_0)) > E_{\Pi_0}(log p(y| \theta))$, which means $f(\theta_0) > f(\theta)$. %This shows condition $5$. %Note that conditions $3,4,5$ imply the uniformly consistent testing assumption of $\Pi_{\theta_0}$. \\
%By Schwartz's theorem \citep{Schwartz1965}, we conclude that the posterior $[\theta|y]$ is strongly consistent. 
\end{proof}

%{\color{blue}Note: does condition 6 also require that $\mathbb{E}_{P_0} p_0(y \mid x) < \infty $? This term shows up in the KL expression. Or would it be possible to simply assume $KL(P_0, P_\theta) < \infty$ for all $\theta \in \Theta$---is this sufficient for conditions 2 and 6 (along with the definition of $\theta_0$)? \\
%Do you mean condition 6? Condition $5$ is $\theta_0 \in int(\Theta)$.  If $E_{P_0}(log p_0(x)) = \infty$, there does not exist a \textit{unique} minimizer of the KL divergence. Therefore, $E_{P_0}(log p_0(x)) < \infty$ is implied by the existing conditions.  \\
%If we add the assumption that $KL(P_0, P_\theta) = \E_{P_0} log p_0 (y|x) - \E_{P_0} log p_{\theta}(y|x) < \infty$,  we do get that $|E_{P_0}(log p_0(x))| < \infty$. But I believe the assumption $KL(P_0, P_\theta) < \infty$ for all $\theta$ is too abstract and is typically replaced by conditions that are more specific to the problem. }

The conditions on the prior are rather mild and are discussed in the main paper. The conditions on the likelihood require further elaboration. Specifically, we require  existence of $\mathbb{E}_{P_0} \log p_\theta(y \mid x)$, where 
$p_\theta(y \mid x) = \bar \Phi\{g(\mathcal{A}_y); x'\theta, \sigma^2\}$. Since $p_\theta < 1$ by construction, $\log p_\theta < 0$ and it remains to establish conditions under which $-\mathbb{E}_{P_0} \log p_\theta(y \mid x) < \infty$. Focusing on the likelihood term, we have
\begin{align*}
-\log p_\theta(y \mid x) &= - \log \bar \Phi\{g(\mathcal{A}_y); x'\theta, \sigma^2\} = -\log  \int_{g(\mathcal{A}_y)} \phi(z; x'\theta, \sigma^2) \ dz \\
&\le  \int_{g(\mathcal{A}_y)} -\log \phi(z; x'\theta, \sigma^2) \ dz \\
& \le \frac{1}{\sigma^2} \int_{g(\mathcal{A}_y)} (z - x'\theta)^2 \ dz. 
\end{align*}
This latter step requires that $\vert g(\mathcal{A}_y)\vert < \infty$. This requirement is \emph{not} satisfied at the boundaries for finite support $\{0,\ldots, y_{max}\}$, i.e., when $\mathcal{A}_y = (a_y, a_{y+1}]$, we have $g(\mathcal{A}_0) = (-\infty, g(a_1)]$ and $g(\mathcal{A}_{y_{max}}) = (g(a_{y_{max}}), \infty)$; that case is discussed below. For now, assume that $\vert g(\mathcal{A}_y)\vert < \infty$ for all $y \in \mathbb{Z}$. Let $\Delta \psi(y) = \psi(a_{y+1}) - \psi(a_y)$ for any function $\psi$. Then we further simplify to
\[
-\log p_\theta(y \mid x)  \le \frac{1}{\sigma^2}\{
\Delta g^3(y)/3 - (x'\theta)\Delta g^2(y) + \theta' (xx')\theta  \Delta g(y) \}
\]
Thus, we require that this term is finite in expectation $[P_0]$. While many options exist, one simple approach is to require finite expectation for each term, which is further simplified without the $\Delta$ operator: $\mathbb{E}_{P_0}\vert  g^3(y)\vert < \infty$, $\mathbb{E}_{P_0}\Vert   g^2(y) x \Vert_2 < \infty$, and $\mathbb{E}_{P_0}\Vert  xx'\Vert_2 < \infty$, as noted in the main paper. 

For finite support $y \in \{0,\ldots, y_{max}\}$, we may consider the boundaries and the interior separately. The expectation $[P_0]$ may be separated into the expectations over $[y \mid x]$ and $[x]$, i.e.,
\begin{align*}
    -\mathbb{E}_{P_0} \log p_\theta(y \mid x) &= -\mathbb{E}_{P_0[x]} \mathbb{E}_{P_0[y \mid x]} \log p_\theta(y \mid x) \\
    &= -\mathbb{E}_{P_0[x]} \sum_{j=0}^{y_{max}} p_{0}(y=j \mid x) \log p_\theta(j \mid x) \\ 
    &=-\mathbb{E}_{P_0[x]} \{ p_{0}( y=0 \mid x) \log p_\theta(0 \mid x)+ p_{0}( y=y_{max} \mid x) \log p_\theta(y_{max} \mid x)\} \\
    &\quad \quad -  \mathbb{E}_{\tilde P_0} [\{1 - p_0(y=0 \mid x) - p_0(y = y_{max} \mid x)\}\log p_\theta(y  \mid x)] \\
    &\le -\mathbb{E}_{P_0[x]} (\log \Phi\{g(a_1); x'\theta, \sigma^2\}+ \log [1 - \Phi\{g(a_{y_{max}}); x'\theta, \sigma^2] ) \\
    & \quad \quad -  \mathbb{E}_{\tilde P_0}  \log p_\theta(y  \mid x)
\end{align*}
where $\tilde P_0 = P_0[x]\tilde P_0[y\mid x]$ and the latter term is $P_0[y \mid x]$ the truncated to $j \in \{1,\ldots, y_{max} - 1\}$, i.e., $\tilde p_0(y = j \mid x) = p_0(y = j \mid x)/\{1 - p_0(y=0 \mid x) - p_0(y = y_{max} \mid x)\}$ for $j \in \{1,\ldots, y_{max} - 1\}.$ Thus, the conditions for the $y \in \mathbb{Z}$ may be modified for the truncated distribution, e.g., $\mathbb{E}_{\tilde P_0}\vert  g^3(y)\vert < \infty$ and $\mathbb{E}_{\tilde P_0}\Vert   g^2(y) x \Vert_2 < \infty$, while $\mathbb{E}_{ P_0}\Vert  xx'\Vert_2 < \infty$ is unchanged. The remaining conditions only require expectations over $[x]$: $-\mathbb{E}_{P_0[x]} \log \Phi\{g(a_1); x'\theta, \sigma^2\} < \infty$ and $-\mathbb{E}_{P_0[x]} \log [1 - \Phi\{g(a_{y_{max}}); x'\theta, \sigma^2] < \infty$. 

%%%%%%%%%%%%%%%%

\section{Analytic computations for prediction}\label{sec-pred-an}

When the transformation is fixed (or conditioned upon), the predictive distribution may be computed analytically without Monte Carlo sampling.  The predictive distribution is computable using marginal densities, $    p(\tilde y \mid y) = p(\tilde y, y)/p(y)$, where  $
    p(y) = \bar\Phi_n\{\mathcal{C} = g(\mathcal{A}_{y});  \mu_z,  \Sigma_z\}$ is the constant term in the denominator of \eqref{density-slct-n} and 
$\mu_z =  X  \mu_\theta$ and $\Sigma_z =  X  \Sigma_\theta  X' +  \sigma^2 I_n$ are determined by the linear model \eqref{lm}. The joint distribution $p(\tilde y, y)$ proceeds similarly via the latent data $(\tilde z, z)$: 
$p(\tilde y, y) = \bar\Phi_{n + \tilde n}\{\mathcal{C} = g(\mathcal{A}_{\tilde y}) \times g(\mathcal{A}_{y});  \mu_{\tilde z z},  \Sigma_{\tilde z z}\},$
where 
$$ \mu_{\tilde z z} = 
    \begin{pmatrix} 
        \tilde X \mu_\theta \\
        X \mu_\theta
    \end{pmatrix}, \quad 
    \Sigma_{\tilde z z} = 
    \begin{pmatrix}
        \tilde X  \Sigma_\theta  \tilde X' +  \sigma^2 I_{\tilde n} & \tilde X \Sigma_\theta X' \\
        X \Sigma_\theta \tilde X' & X  \Sigma_\theta  X' +  \sigma^2 I_n
    \end{pmatrix}
    $$
    for the linear model setting of Lemma~\ref{star-lm}. These direct probability computations of $p(\tilde y \mid y)$ are most useful when $\tilde n$ is small, such as $\tilde n =1$. The predictive probabilities also provide analytic  point predictions, such as
    %\begin{equation}
     %   \label{pred-ex}
     $   \E(\tilde y \mid y) = \sum_{j \in {\rm supp}(y)} j \ p(\tilde y  =j \mid y).
    $ %\end{equation}
     %When $\mbox{supp}(y)  = \mathbb{Z}$ or is large, these predictive probability evaluations  may be computationally prohibitive. In that case, the Monte Carlo simulation is a more appealing option for computing predictive moments and other functionals. 
    
   % \begin{remark}
    %    Since Lemma~\ref{lem-pred-g} derives the latent posterior predictive distribution of $[\tilde z \mid y]$, we can also use these results to compute predictive probabilities for $[\tilde y \mid y]$ via \eqref{predict}. However, this strategy is computationally inferior to \eqref{pred-dens}: the predictive distribution in \eqref{predict} requires \emph{integration} of the selection normal density \eqref{density-slct-n} for $[\tilde z \mid y]$, which already includes the marginal density evaluation \eqref{marg} and additional matrix inverses and determinants. By comparison, \eqref{pred-dens} only requires integration of an $n$- and $(n + \tilde n)$-dimensional multivariate normal distribution. 
  %  \end{remark}

\section{Monte Carlo sampling for prediction}\label{a-mc-pred}
An alternative Monte Carlo sampler for the linear model circumvents the posterior sampling and instead draws directly from the latent posterior predictive distribution of $[\tilde z \mid y]$, which can be derived explicitly:
\begin{lemma}
\label{lem-pred-g}
    For the linear model \eqref{lm} with a Gaussian prior $ \theta \sim N( \mu_\theta,  \Sigma_\theta)$, the latent predictive distribution at $\tilde X$ is 
    $[ \tilde z \mid  y] \sim \mbox{SLCT-N}_{n, \tilde n}(X  \mu_\theta,  \tilde X\mu_\theta,   X  \Sigma_\theta  X' +  \sigma^2 I_n,  \tilde X  \Sigma_\theta  \tilde X' +  \sigma^2 I_{\tilde n},   X  \Sigma_\theta \tilde X',  \mathcal{C} = g(\mathcal{A}_{ y}))$.
\end{lemma}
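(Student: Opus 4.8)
The plan is to recognize $[\tilde z \mid y]$ as a selection normal by passing to the joint Gaussian law of $(z, \tilde z)$ with $\theta$ marginalized out, in direct parallel to how Lemma~\ref{star-lm} handled the pair $(z, \theta)$. The single new idea is that the predictive latent $\tilde z$ now occupies the role that $\theta$ played for the posterior, while the training latent $z$ remains the selection variable carrying the constraint.

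First I would establish that $[\tilde z \mid y] = [\tilde z \mid z \in \mathcal{C}]$ with $\mathcal{C} = g(\mathcal{A}_y)$. Because the link \eqref{tr} is deterministic, the event $\{y\}$ coincides with $\{z \in g(\mathcal{A}_y)\}$, exactly as exploited in \eqref{like}--\eqref{posterior-gen}. Extending \eqref{lm} to the prediction covariates gives $z = X\theta + \epsilon$ and $\tilde z = \tilde X\theta + \tilde\epsilon$ with $\epsilon, \tilde\epsilon$ independent Gaussian errors, so $z$ and $\tilde z$ are conditionally independent given $\theta$. Consequently the predictive integral \eqref{predict} collapses: $p(\tilde z \mid y) = \int p(\tilde z \mid \theta)\, p(\theta \mid z \in \mathcal{C})\, d\theta = p(\tilde z \mid z \in \mathcal{C})$, since $p(\tilde z \mid \theta, z \in \mathcal{C}) = p(\tilde z \mid \theta)$.

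Next I would marginalize $\theta$ out of the triple $(z, \tilde z, \theta)$. With $\theta \sim N(\mu_\theta, \Sigma_\theta)$, $\epsilon \sim N(0,\sigma^2 I_n)$, and $\tilde\epsilon \sim N(0,\sigma^2 I_{\tilde n})$ mutually independent, the pair $(z,\tilde z)$ is jointly Gaussian. A routine moment calculation yields means $X\mu_\theta$ and $\tilde X\mu_\theta$, marginal covariances $X\Sigma_\theta X' + \sigma^2 I_n$ and $\tilde X\Sigma_\theta\tilde X' + \sigma^2 I_{\tilde n}$, and cross-covariance $\mbox{cov}(z,\tilde z) = X\Sigma_\theta\tilde X'$; the error terms contribute nothing off-diagonal because $\epsilon$ and $\tilde\epsilon$ are independent of each other and of $\theta$.

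Finally I would match this joint law to the selection-normal parameterization of \eqref{slct-n} and Theorem~\ref{thm-slct}, now reading $z$ as the $n$-dimensional selection variable and $\tilde z$ as the $\tilde n$-dimensional variable of interest. Feeding the arguments in the order $(\mu_z, \mu_{\tilde z}, \Sigma_z, \Sigma_{\tilde z}, \Sigma_{z\tilde z}, \mathcal{C})$ reproduces precisely the stated $\mbox{SLCT-N}_{n,\tilde n}$ parameters. I expect no genuine analytic obstacle here: the substantive fact that conditioning on discrete $y$ acts as a Gaussian constraint region on the latent scale was already secured for the posterior and transfers verbatim to the predictive latent. The only care required is bookkeeping, namely verifying the cross-covariance and respecting the argument ordering of the SLCT-N notation so that the dimensions and moment blocks are assigned to the correct slots.
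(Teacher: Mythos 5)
Your proposal is correct, but it takes a genuinely different route from the paper. You marginalize $\theta$ out of the triple $(z,\tilde z,\theta)$, compute the joint Gaussian law of $(z,\tilde z)$ directly, and then invoke the definitional characterization of the selection normal with $\tilde z$ playing the role that $\theta$ played in Theorem~\ref{thm-slct} and Lemma~\ref{star-lm} --- i.e., you rerun the argument behind Theorem~\ref{thm-slct0} with the training latent $z$ as the selection variable and the predictive latent as the variable of interest. The paper instead keeps the posterior $p(\theta\mid y)$ as the central object: it writes $\tilde z = \tilde X\theta + \tilde\epsilon$, notes that $[\theta\mid y]$ is selection normal by Lemma~\ref{star-lm}, and then proves and applies an auxiliary closure property (Lemma~\ref{lin-combo}, established via moment generating functions): an affine transformation of a selection normal variable plus independent Gaussian noise remains selection normal, with the stated parameter updates. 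Your route is more elementary --- no new lemma, no MGF computation, just a moment calculation and the observation that conditioning on $y$ is conditioning on $\{z\in g(\mathcal{A}_y)\}$, which you justify carefully via the collapse $p(\tilde z\mid\theta, z\in\mathcal{C}) = p(\tilde z\mid\theta)$ from conditional independence. What the paper's route buys is a reusable structural fact about the SLCT-N family (closure under affine maps with independent Gaussian innovations) that is of independent interest and makes the predictive integral \eqref{predict} over the posterior of $\theta$ explicit; what yours buys is brevity and a unified template in which posterior and predictive distributions are both read off from one joint Gaussian plus the constraint region. Your bookkeeping of the cross-covariance $\mbox{cov}(z,\tilde z) = X\Sigma_\theta\tilde X'$ and of the argument ordering in $\mbox{SLCT-N}_{n,\tilde n}$ is correct, so the stated parameters match.
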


    \begin{proof}(Lemma~\ref{lem-pred-g})
    As a preliminary, consider a linear combination of a selection normal variable  with a Gaussian innovation: 
\begin{lemma}\label{lin-combo}
    Suppose $\theta \sim \mbox{SLCT-N}_{n, p}( \mu_z,  \mu_\theta,  \Sigma_z,  \Sigma_\theta,  \Sigma_{z\theta},  \mathcal{C})$ and let $ \theta^* =  A \theta +  a$ where $ A$ is a fixed $q \times p$ matrix and $ a \sim N_q( \mu_a,  \Sigma_a)$ is independent of $ \theta$. Then $ \theta^* \sim \mbox{SLCT-N}_{n, q}( \mu_z,  \mu_{\theta^*} = A \mu_\theta +  \mu_a,  \Sigma_z,  \Sigma_{\theta^*}=  A  \Sigma_\theta  A' + \Sigma_a,  \Sigma_{z\theta^*} =  \Sigma_{z\theta}  A',  \mathcal{C})$. %where $ \mu_{\theta^*} = A \mu_\theta +  \mu_a$, $ \Sigma_{\theta^*}=  A  \Sigma_\theta  A' + \Sigma_a$, and $ \Sigma_{z\theta^*} =  \Sigma_{z\theta}  A'$.
\end{lemma}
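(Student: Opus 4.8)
The plan is to exhibit $\theta^*$ directly as a selection normal variable by identifying an appropriate jointly Gaussian pair and verifying that the selection mechanism is preserved. Recall that $\theta \sim \mbox{SLCT-N}_{n,p}(\mu_z, \mu_\theta, \Sigma_z, \Sigma_\theta, \Sigma_{z\theta}, \mathcal{C})$ means, by construction, that $\theta$ is distributed as $[\theta \mid z \in \mathcal{C}]$ where $(z', \theta')'$ is jointly Gaussian with the listed moments. I would keep the same selection variable $z$ and the same region $\mathcal{C}$, and show both that $(z', \theta^{*\prime})'$ is jointly Gaussian with the claimed moments and that $\theta^*$ equals $[\theta^* \mid z \in \mathcal{C}]$ in distribution.

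First, I would form the augmented vector $(z', \theta', a')'$. Since $a \sim N_q(\mu_a, \Sigma_a)$ is independent of $(z, \theta)$, this stacked vector is jointly Gaussian. The pair $(z', \theta^{*\prime})' = (z', (A\theta + a)')'$ is then an affine image of this Gaussian vector and hence jointly Gaussian. Its moments follow by linearity and independence: $E(\theta^*) = A\mu_\theta + \mu_a$; the variance is $\mbox{var}(A\theta + a) = A\Sigma_\theta A' + \Sigma_a$ because $\mbox{cov}(\theta, a) = 0$; and the cross-covariance is $\mbox{cov}(z, A\theta + a) = \Sigma_{z\theta}A'$ because $\mbox{cov}(z, a) = 0$. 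The moments of $z$ are unchanged. These coincide exactly with the parameters in the statement.

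The crucial step, and the only one requiring care, is to verify that applying the affine map commutes with the selection, i.e., that $A[\theta \mid z \in \mathcal{C}] + a$ and $[A\theta + a \mid z \in \mathcal{C}]$ have the same law when $a$ is drawn independently on the left. This is where the independence of $a$ is essential: because the conditioning event $\{z \in \mathcal{C}\}$ involves only $z$, and $a \perp (z, \theta)$, the conditional law of $(\theta, a)$ given $z \in \mathcal{C}$ factorizes as the conditional law of $\theta$ given $z \in \mathcal{C}$ times the unconditional law of $a$. Consequently $(\theta, a) \mid \{z \in \mathcal{C}\}$ is distributed as $([\theta \mid z \in \mathcal{C}], a)$ with $a$ independent, and applying the deterministic map $(\theta, a) \mapsto A\theta + a$ yields the desired equality in distribution. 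This identifies $\theta^*$ as $[\theta^* \mid z \in \mathcal{C}]$ for the jointly Gaussian $(z, \theta^*)$ computed above, completing the proof.

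I expect the main obstacle to be articulating this commutation precisely rather than any computation; once independence is invoked to factor the conditional law, the moment bookkeeping is routine. Alternatively, the result can be obtained as a special case of the closure of selection distributions under affine transformations established in \cite{ArellanoValle2006}, with the moment expressions verified by the same linearity-and-independence argument.
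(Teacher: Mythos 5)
Your proof is correct, but it takes a genuinely different route from the paper's. The paper argues in two stages: it first observes that $(z, A\theta)$ is jointly Gaussian, so $[A\theta \mid z \in \mathcal{C}]$ is selection normal with the transformed moments, and it then incorporates the independent Gaussian $a$ by factorizing the moment generating function, $M_{A\theta + a}(s) = M_{A\theta}(s)\,M_a(s)$, and matching the product against the explicit selection normal MGF recorded in Theorem~\ref{thm-slct}. You instead absorb $a$ into the latent Gaussian construction from the outset: you augment to the jointly Gaussian vector $(z', \theta', a')'$, note that $(z', (A\theta + a)')'$ is an affine image of it with exactly the claimed moments, and then prove the one nontrivial point---that selection on $\{z \in \mathcal{C}\}$ commutes with the map $(\theta, a) \mapsto A\theta + a$---by factorizing the conditional law of $(\theta, a)$ given $z \in \mathcal{C}$ using $a \perp (z, \theta)$. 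Your route is more elementary and self-contained: it needs only the definitional representation of a selection distribution, not the MGF formula, and it makes explicit the coupling subtlety that independence of $a$ from the \emph{selected} $\theta$ can be realized as independence from the underlying Gaussian pair, a point the MGF argument silently sidesteps (there, unconditional independence of the selected $\theta$ and $a$ suffices for the product factorization). What the paper's route buys is brevity: once the MGF in Theorem~\ref{thm-slct} is available, recognizing the product is a one-line computation, with uniqueness of the MGF doing the identification work that your conditional-factorization argument carries out by hand. Your closing remark that the result also follows from the closure properties in \cite{ArellanoValle2006} is consistent with how the paper handles its other selection-normal results.
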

\begin{proof}(Lemma~\ref{lin-combo})
    First consider the distribution of $ A  \theta$. Since the moments of the joint distribution $( z,  \theta)$ are given by assumption, it follows that $( z,  A  \theta)$ is jointly Gaussian  and  $[A  \theta \mid z \in \mathcal{C}] \sim \mbox{SLCT-N}_{n, q}( \mu_z,  A \mu_\theta,  \Sigma_z,  A  \Sigma_{\theta} A',  \Sigma_{z\theta}  A',  \mathcal{C})$. Using the moment generating functions and noting independence between $ A \theta$ and $ a$, it follows that $M_{ \theta^*}( s) = M_{  A  \theta+ a}( s)= M_{  A  \theta}( s) M_{ a}( s)$ where $ M_{  A  \theta}$ is in Theorem~\ref{thm-slct} and $M_{ a}( s) = \exp( s'  \mu_a +  s'  \Sigma_a  s/2)$. The result of this product is the moment generating function of the stated selection normal distribution.
\end{proof}
    Returning to Lemma~\ref{lem-pred-g}, 
    the latent predictive variable $[\tilde z  \mid \theta]$ in  $  p(\tilde y \mid y) = \int p\{\tilde z \in g(\mathcal{A}_{\tilde y}) \mid \theta\} \ p(\theta \mid y) \ d\theta$ can be represented as $\tilde z = \tilde X \theta + \tilde \epsilon$, where $\tilde \epsilon \sim N_n(0, \sigma^2 I_n) $ is independent of $\{\epsilon_i\}_{i=1}^n$ from \eqref{lm}. This integrand requires marginalization over $\theta \sim p(\theta  \mid y)$, which is selection normal according to Lemma~\ref{star-lm}. Hence we obtain the latent predictive distribution $[\tilde z \mid y]$ via the marginalization from Lemma~\ref{lin-combo}. 
    \end{proof}

Direct Monte Carlo simulation proceeds using Theorem~\ref{thm-mc} applied to the latent predictive selection normal distribution.  Specifically, consider the $g$-prior $\Sigma_\theta = \psi \sigma^2 ( X' X)^{-1}$ for $\psi >0$. Under model \eqref{tr}--\eqref{lm}, a joint predictive draw $\tilde y^* \sim p(\tilde y \mid y)$ at the $\tilde n \times p$ covariate matrix  $\tilde X$ is obtained using Algorithm~\ref{alg:g-pred}. 
    \begin{algorithm}[h]
\SetAlgoLined  
\begin{enumerate}
    \item Simulate $V_0^* \sim N_n(0, \sigma^2\{ \psi  X ( X' X)^{-1}  X' + I_n\})$ truncated to $g(\mathcal{A}_y) - X\mu_\theta$.
    \item Simulate $\tilde  V_1^* \sim N_{\tilde n}(0, \sigma^2\{ \psi(1+\psi)^{-1} \tilde X(X'X)^{-1}\tilde X' +   I_{\tilde n}\})$.
    \item Compute  $\tilde z^* = \tilde X \mu_\theta +  \tilde V_1^* + \psi(1+\psi)^{-1} \tilde X ( X' X)^{-1} X' V_0^*$.
    \item Set $\tilde y^* = h\circ g^{-1}(\tilde z^*)$.
\end{enumerate} 
 \caption{Monte Carlo sampling for $\tilde y^* \sim p(\tilde y \mid y)$  under the $g$-prior.} \label{alg:g-pred}
\end{algorithm}
\begin{proof}(Algorithm~\ref{alg:g-pred})
    First, $\Sigma_z =\sigma^2\{ I_n + \psi  X ( X' X)^{-1}  X'\}$ is unchanged from Algorithm~\ref{alg:g-sim}. Similarly, using the usual notation, we  have
     $\Sigma_{z\theta}'\Sigma_z^{-1}  =\psi(1+\psi)^{-1} \tilde X ( X' X)^{-1} X'$ and $\Sigma_{z\theta}'\Sigma_z^{-1}\Sigma_{z\theta} = \sigma^2\psi^2(1+\psi)^{-1}  \tilde  X ( X' X)^{-1}\tilde  X'$. Computing $\Sigma_\theta -  \Sigma_{z\theta}'\Sigma_z^{-1}  \Sigma_{z\theta} =  \sigma^2\{ \psi(1+\psi)^{-1} \tilde X(X'X)^{-1}\tilde X' +   I_{\tilde n}\}$ shows the result for $\tilde z^*$. Since the link in \eqref{tr} is deterministic, the  result for $\tilde y^*$ follows. 
    \end{proof}

Algorithm~\ref{alg:g-pred} provides direct Monte Carlo sampling from the joint posterior predictive distribution using standard regression functionals of the covariate matrix $X$ and the predictive covariate matrix $\tilde X$.  By comparison, the samplers in Algorithm~\ref{alg:g-sim} and Algorithm~\ref{alg:g-pred} share the same simulation step for $V_0^*$, but the second step differs: Algorithm~\ref{alg:g-sim} requires a draw from a $p$-dimensional multivariate normal distribution, while Algorithm~\ref{alg:g-pred} draws from a $\tilde n$-dimensional multivariate normal distribution. As a result,  Algorithm~\ref{alg:g-pred} may provide computational savings relative to Lemma~\ref{lem-pred} when $\tilde n$ is small,  $p$ is large, and  the posterior samples of $\theta$ are not needed. %This strategy is also useful for nonlinear models and is discussed subsequently in Algorithm~\ref{alg:nl-pred}. 

\section{Model selection}\label{sec-model-sel}
Suppose we have candidate models $\mathcal{M}_1,\ldots, \mathcal{M}_K$, where each $\mathcal{M}_k$ is of the form \eqref{tr}--\eqref{lm}. Each candidate model differs in a fixed attribute: examples include the transformation $g$, the hyperparameters in the model (such as $\psi$), and linear vs.\! nonlinear models. This setting also applies to variable selection by replacing each covariate matrix $X$ with $X_\gamma$, where $\gamma \in \{0,1\}^p$ is a variable inclusion indicator and each $\gamma$ corresponds to a model $\mathcal{M}_k$. We focus on two goals: model selection and prediction via model-averaging. 

Given model probabilities $p(\mathcal{M}_k)$ for $k=1,\ldots,K$, the posterior probability of each model is  
\begin{equation} \label{mprob}
    p(\mathcal{M}_k \mid y) = \frac{p(y \mid \mathcal{M}_k) p(\mathcal{M}_k)}{\sum_{h = 1}^K p(y \mid \mathcal{M}_h) p(\mathcal{M}_h)}
\end{equation}
 where $p(y \mid \mathcal{M}_k)$ denotes the marginal likelihood under $\mathcal{M}_k$. When the prior on $\theta$ and the latent continuous data model for $z$ are both Gaussian  for each $\mathcal{M}_k$, we derive the posterior probabilities directly:
 \begin{lemma}\label{lem-mprob}
     For each model $\{\mathcal{M}_k\}_{k=1}^K$ of the form \eqref{tr}--\eqref{lm} with transformation $g^{(k)}$, suppose that $(\theta^{(k)}, z^{(k)})$ are jointly Gaussian with $z^{(k)} \sim N_n(\mu_z^{(k)},  \Sigma_z^{(k)})$. Then the posterior model probabilities are 
     \begin{equation} \label{mprob-n}
    p(\mathcal{M}_k \mid y) = \frac{\bar\Phi_n\{\mathcal{C} = g^{(k)}(\mathcal{A}_{y});  \mu_z^{(k)},  \Sigma_z^{(k)}\}p(\mathcal{M}_k)}{\sum_{h = 1}^K \bar\Phi_n\{\mathcal{C} = g^{(h)}(\mathcal{A}_{y});  \mu_z^{(h)},  \Sigma_z^{(h)}\} p(\mathcal{M}_h)}.%, \quad k=1,\ldots,K
\end{equation}
%for $k=1,\ldots, K.$
 \end{lemma}
The result follows from the marginal likelihood $p(y)$ (see the denominator of \eqref{density-slct-n})  applied to each model $\mathcal{M}_k$. 
Lemma~\ref{lem-mprob} is naturally informative for model selection, for example by selecting $\mathcal{M}_k$ for which $p(\mathcal{M}_k \mid y)$ is largest. More broadly, the posterior probabilities in \eqref{mprob-n} enable model uncertainty in posterior and predictive inference. Focusing on the latter setting, consider the goal of prediction that averages over the uncertainty about the models $\{\mathcal{M}_k\}_{k=1}^K$. Using the  predictive simulators, which now depend on the specific model $\mathcal{M}_k$, we propose the following Monte Carlo sampler: for $s=1,\ldots, S$,
%\begin{enumerate}
    %\item 
    sample $k^s \sim \mbox{Categorical}(\pi_1, \ldots, \pi_K)$ with $\pi_k  = p(\mathcal{M}_k \mid y)$ from \eqref{mprob-n};
    %\item 
    then sample $\tilde y^s \sim p(\tilde y \mid y, \mathcal{M} = \mathcal{M}_{k^s})$ from Lemma~\ref{lem-pred} or Algorithm~\ref{alg:nl-pred}. 
%\end{enumerate}
This sampler produces joint and independent draws of $\{(\mathcal{M}^s, \tilde y^s)\}_{s=1}^S \sim p(\mathcal{M}, \tilde y \mid y)$. By computing functionals of $\{\tilde y^s\}$ only, we effectively marginalize over the model uncertainty within $\{\mathcal{M}_k\}_{k=1}^K$ yet retain a discrete and joint predictive distribution at $\tilde X$.

The sparse means model with prior \eqref{ssmod} offers certain computational simplifications. In particular, inference on the inclusion indicators $[\gamma \mid y]$ proceeds via Lemma~\ref{lem-mprob}. Since the marginal latent mean is zero and the latent covariance is diagonal, $\Sigma_z^{(\gamma)} = \sigma^2 \mbox{diag}\{1 + \psi \gamma_i\}_{i=1}^n$, the key term in \eqref{mprob-n} simplifies considerably: 
\begin{equation}
    \label{mprob-n-simp}
    \bar\Phi_n\{\mathcal{C} = g(\mathcal{A}_{y});  0,  \Sigma_z^{(\gamma)}\} = \prod_{i=1}^n \int_{g(a_{y_i})}^{g(a_{y_i+1})} \phi_1\{x_i; 0, \sigma^2(1 + \psi \gamma_i)\} \ d  x_i.
\end{equation}
 In place of an $n$-dimensional integral of a multivariate normal density, we obtain a product of $n$ univariate normal integrals, which is a massive advantage for computing with large $n$.

\section{Additional simulation results}\label{sec-sims-add}
To evaluate the regression coefficients  in the simulation study from Section~\ref{sims-reg}, we compute the Pearson correlation between the true and estimated (posterior expectation) regression coefficients, excluding the intercept. The linear coefficients for the semiparametric and Negative Binomial regressions are not on the same scale, so the correlation helps provide a reasonable metric for accuracy. The results are in Figure~\ref{fig:cor}. Under the \texttt{NegBin} scenario, all methods perform about the same. Yet under the \texttt{SemiPar} scenario, the Poisson estimates are substantially less accurate, while the Negative Binomial estimates are  moderately less accurate. As in the case of prediction, the proposed approximations to the fully BNP procedure do not appear to inhibit estimation, and thus are valuable options for improved computational efficiency and numerical stability. 

\begin{figure}[h]
\centering
\includegraphics[width=.45\textwidth]{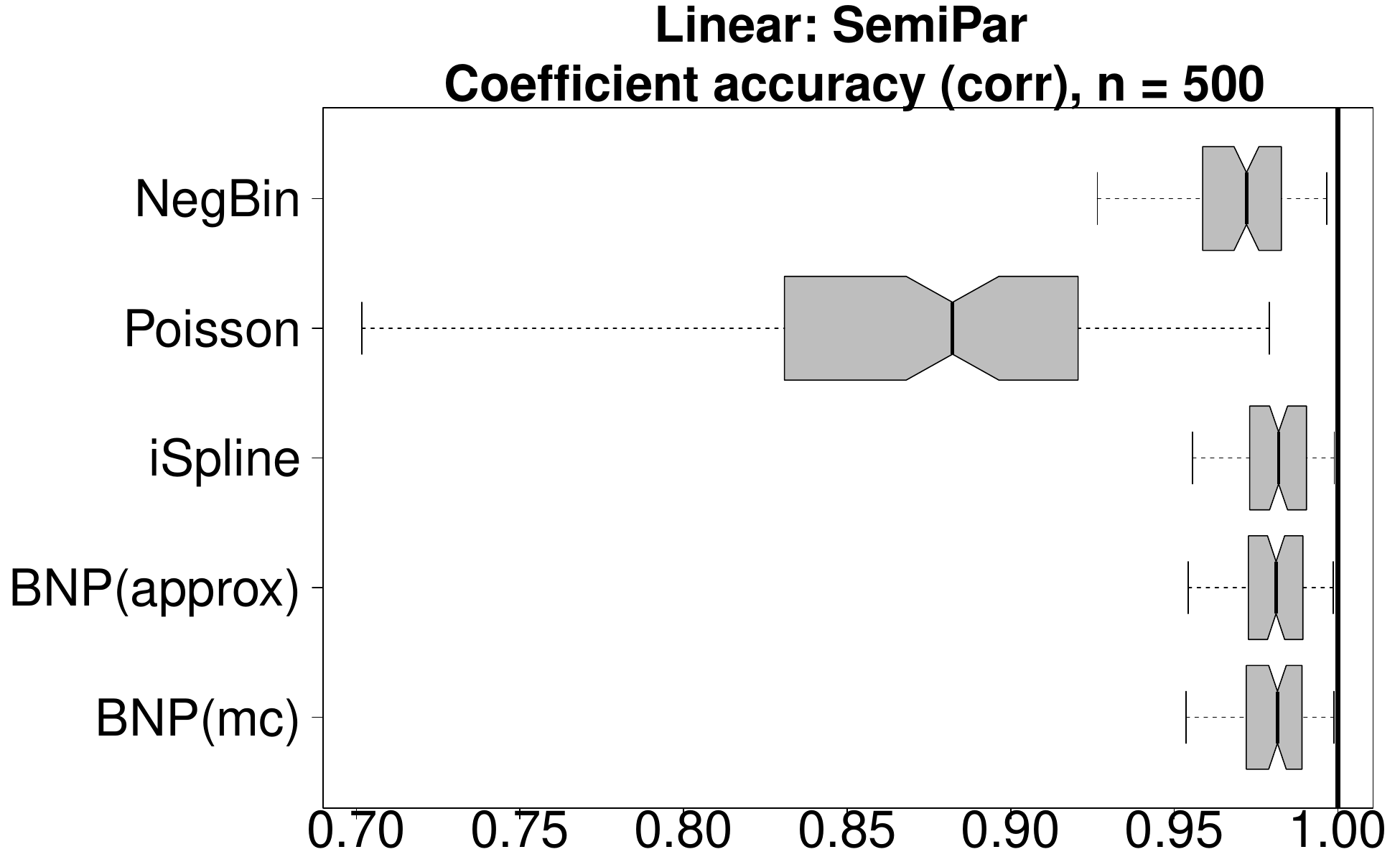}
\includegraphics[width=.45\textwidth]{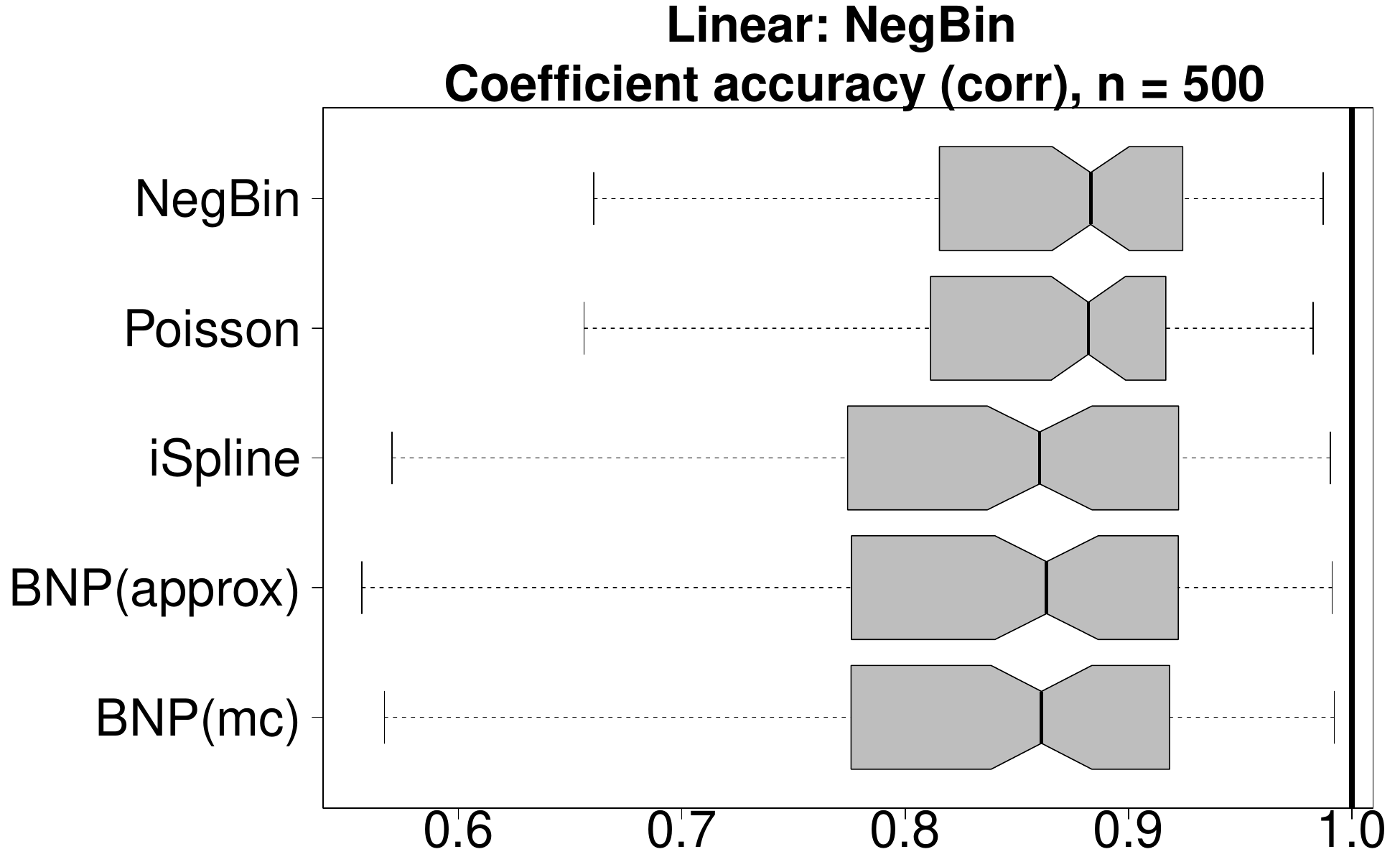}
\includegraphics[width=.45\textwidth]{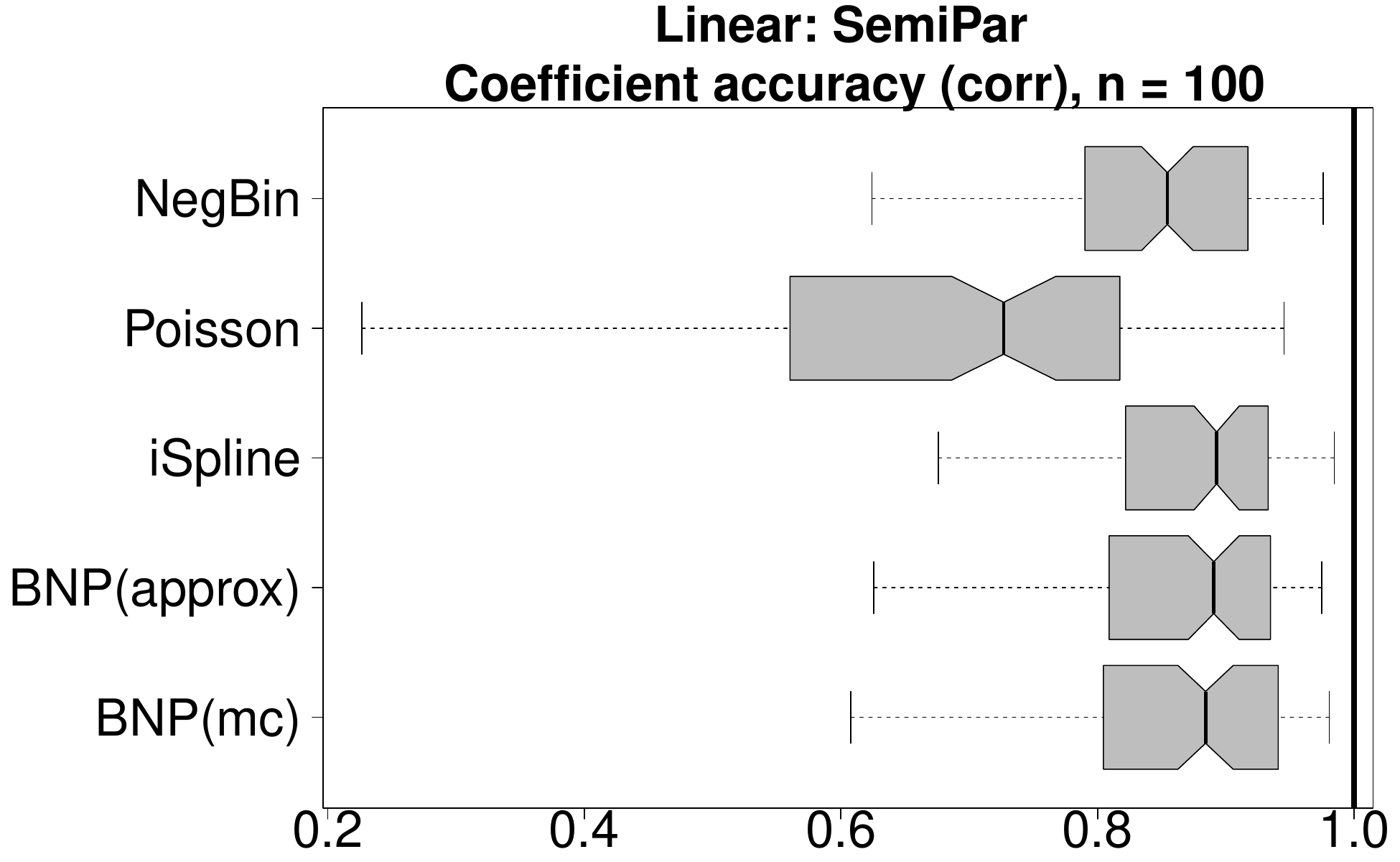}
\includegraphics[width=.45\textwidth]{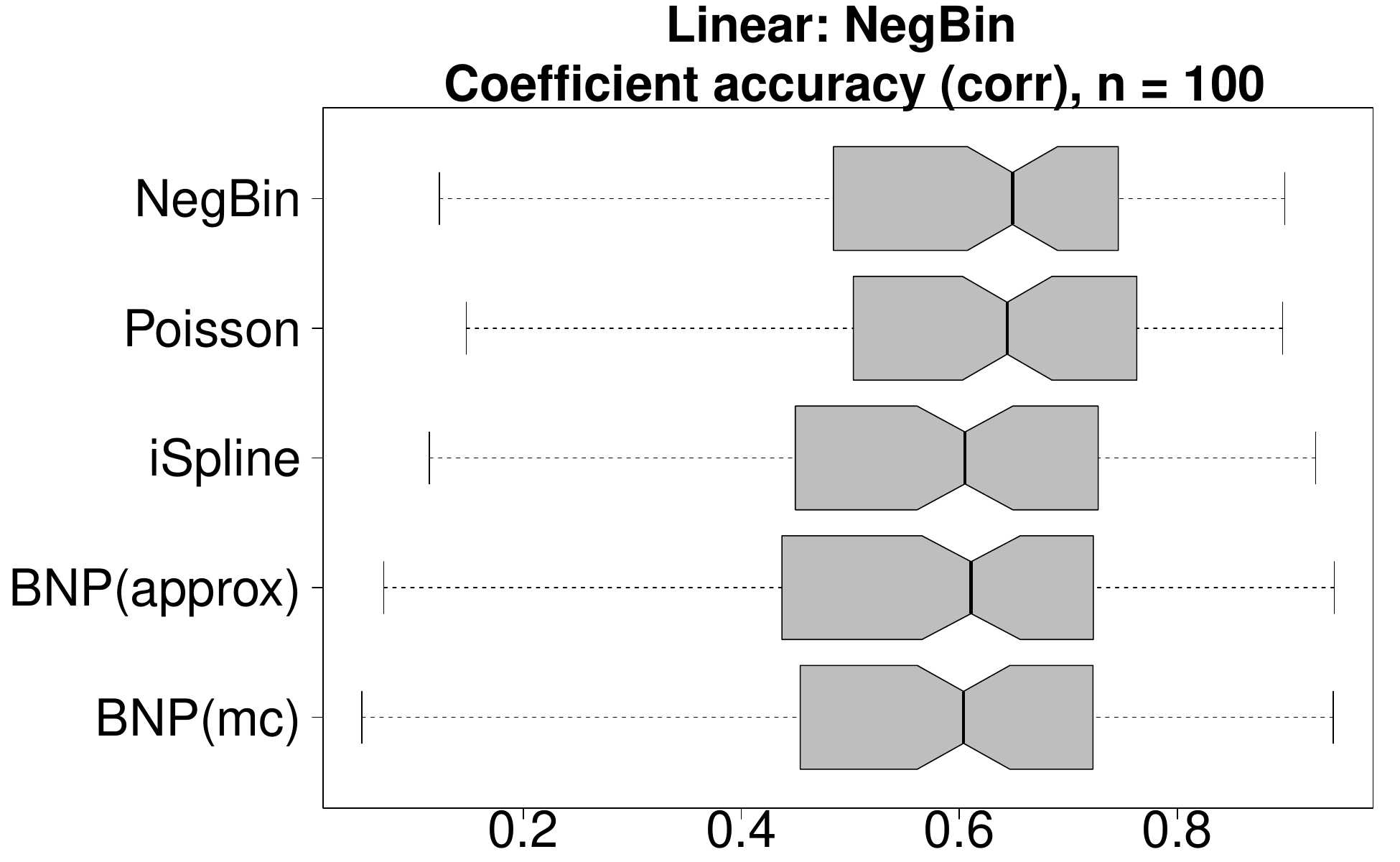}
\caption{\small Correlation between the estimated and true regression coefficients for \texttt{SemiPar} (left) and \texttt{NegBin} (right) data, $n \in \{100,500\}$ (positively oriented). The proposed BNP approaches are much more accurate for \texttt{SemiPar} data yet still competitive for \texttt{NegBin} data. 
\label{fig:cor}}
\end{figure}

%%%%%%%%%%%%%%%%%%%%%% 

\section{Additional details about the data}\label{sec-app-add}
The variables from the self-reported mental data are presented and summarized in Table~\ref{tab:data}, which includes active links to the NHANES survey questions and additional details about each variable.

% \clearpage
\begin{table}[h]
\centering
\begin{tabular}{m{6cm} m{7cm}} 
%\toprule
\textbf{Variable} &  \textbf{Values} \\
\hline
   \rowcolor{shade}
 \multicolumn{2}{l}{\textbf{Response variable:}} \\
\href{https://wwwn.cdc.gov/Nchs/Nhanes/2011-2012/HSQ_G.htm#HSQ480}{DaysMentHlthNotGood} &  $\{0, 1, \ldots, 30\}$ \\
 \hline
  \rowcolor{shade}
  \multicolumn{2}{l}{\textbf{Demographic and socioeconomic variables:}} \\
 \href{https://wwwn.cdc.gov/nchs/nhanes/2011-2012/demo_g.htm#RIAGENDR}{Gender} &  {\emph{Male} ($54\%$), Female ($46\%$)} \\
 \rowcolor{shade}
 \href{https://wwwn.cdc.gov/nchs/nhanes/2011-2012/demo_g.htm#RIDAGEYR}{Age (years)}  &   $\{20, \ldots, 59\}$  \\
 \href{https://wwwn.cdc.gov/nchs/nhanes/2011-2012/demo_g.htm#RIDRETH1}{Race$^*$} &  {\emph{White} ($40\%$), Black ($25\%$), Hispanic ($20\%$), Other ($15\%$)} \\
  \rowcolor{shade}
 \href{https://wwwn.cdc.gov/nchs/nhanes/2011-2012/demo_g.htm#DMDMARTL}{Marital Status$^*$} &{\text{\emph{Married/Partner} ($58\%$)}, \text{Widowed/Divorced/Separated ($14\%$)}, \text{Single ($28\%$)}}\\
 \href{https://wwwn.cdc.gov/nchs/nhanes/2011-2012/demo_g.htm#INDFMIN2}{Family Income Level$^*$} &  {Low (26\%), \emph{Middle (57\%)}, High (17\%)}\\
  \rowcolor{shade}
 \href{https://wwwn.cdc.gov/nchs/nhanes/2011-2012/demo_g.htm#DMDHREDU}{Education Level$^*$} & {\emph{$<$ HS} ($19\%$), = HS ($19\%$), $>$ HS ($62\%$)}\\
\href{https://wwwn.cdc.gov/Nchs/Nhanes/2011-2012/HIQ_G.htm#HIQ011}{Uninsured$^*$}&{Yes ($30\%$), \emph{No}  ($70\%$)} \\
 \hline
   \rowcolor{shade}
 \multicolumn{2}{l}{\textbf{Alcohol and drug use variables:}} \\
 \href{https://wwwn.cdc.gov/Nchs/Nhanes/2011-2012/ALQ_G.htm#ALQ151}{HeavyDrinker} & {Yes ($16\%$), \emph{No} ($84\%$)}\\
   \rowcolor{shade}
 \href{https://wwwn.cdc.gov/Nchs/Nhanes/2011-2012/SMQRTU_G.htm#SMQ680}{UseNicotine} & {Yes ($31\%$), \emph{No}  ($69\%$)}\\
 \href{https://wwwn.cdc.gov/Nchs/Nhanes/2011-2012/DUQ_G.htm#DUQ200}{UsedMarijuana} & {Yes ($60\%$), \emph{No}  ($40\%$)}\\
   \rowcolor{shade}
 \href{https://wwwn.cdc.gov/Nchs/Nhanes/2011-2012/DUQ_G.htm#DUQ240}{UsedHardDrug} & {Yes ($20\%$), \emph{No}  ($80\%$)}\\
\hline
\multicolumn{2}{l}{\textbf{Health-related variables:}} \\
  \rowcolor{shade}
\href{https://wwwn.cdc.gov/nchs/nhanes/2011-2012/BMX_G.htm#BMXBMI}{Body Mass Index (BMI, kg/$m^2$)} & $[13.6, 69.0]$  \\
\href{https://wwwn.cdc.gov/nchs/nhanes/2011-2012/TCHOL_G.htm#LBXTC}{Cholesterol (total, mg/dL)} & $[59.0, 69.0]$ \\
  \rowcolor{shade}
\href{http://data.nber.org/nhanes/2011-2012/BPQ_G.htm}{HasHighBP (BPQ020 at link) }& {Yes ($25\%$), \emph{No}  ($75\%$)} \\
\href{http://data.nber.org/nhanes/2011-2012/BPQ_G.htm}{HasHighChol (BPQ080 at link) }& {Yes ($25\%$), \emph{No}  ($75\%$)} \\
  \rowcolor{shade}
\href{https://wwwn.cdc.gov/Nchs/Nhanes/2011-2012/DIQ_G.htm#DIQ010}{HasDiabetes$^*$}&{Yes ($9\%$), \emph{No}  ($91\%$)} \\
 \hline
\multicolumn{2}{l}{\textbf{Survey design variables:}} \\
\rowcolor{shade}
\href{https://wwwn.cdc.gov/Nchs/Nhanes/2011-2012/DEMO_G.htm#WTMEC2YR}{Sampling Weights}& $[0, 222579.8]$ \\ [1ex] 
\href{https://wwwn.cdc.gov/Nchs/Nhanes/2011-2012/DEMO_G.htm#SDMVPSU}{Primary Sampling Units}&  \emph{1} (48\%), 2 (44\%), 3 (8\%)
\\ [1ex]
%\href{https://wwwn.cdc.gov/Nchs/Nhanes/2011-2012/DEMO_G.htm#SDMVSTRA}{Sampling Strata}& $\{90, \ldots, 103\}$ \\ [1ex]
%\bottomrule
\end{tabular}
\caption{\small Variables in the analysis dataset with hyperlinks to the online NHANES descriptions. The baseline categories for one-hot/dummy encoding are italicized. The continuous variables (Age, BMI, and Cholesterol) are centered and scaled prior to model fitting.  Annotated variables ($*$) include minor modifications (e.g., collapsed categories) from the original NHANES variables; see the online Data Dictionary in \cite{Kowal2021b} for additional details. \label{tab:data}}
\end{table}

\clearpage 
\bibliographystyle{apalike}
\bibliography{refs.bib}

\end{document}